\documentclass[a4paper,onecolumn,11pt]{quantumarticle}
\pdfoutput=1

\usepackage[utf8]{inputenc}
\usepackage[T1]{fontenc}
\usepackage[english]{babel}

\usepackage{amsmath,amsfonts,amssymb,amsthm}
\usepackage{bbold,bbm,dsfont,stmaryrd}
\usepackage{braket}
\usepackage[mathscr]{euscript}
\usepackage{bm}
\usepackage{xfrac}
\usepackage{xspace}

\usepackage{graphicx}
\usepackage{xcolor}
\usepackage{makecell}
\usepackage{booktabs}
\usepackage{longtable}
\usepackage{array}
\usepackage{placeins}
\usepackage[labelfont=bf]{caption}
\usepackage{tikz}
\usepackage{pgfplots}
\pgfplotsset{compat=1.18}
\usepackage{quantikz}
\usetikzlibrary{shapes.geometric, positioning, decorations.pathreplacing, calc}
\usepackage{algorithm,float}
\usepackage{algpseudocode,eqparbox}

\usepackage{thmtools}
\usepackage{thm-restate}

\usepackage[numbers]{natbib}

\usepackage[colorlinks]{hyperref}

\tikzset{
  check/.style={draw,fill=blue!65,minimum size=9pt,rectangle,rounded corners=1.5pt},
  var/.style={draw,fill=yellow!85!orange,minimum size=9pt,circle},
  edge/.style={line width=.5pt,draw=gray},
  edge-normal/.style={edge},
  edge-strong/.style={edge,very thick,black},
}

\tikzset{
  check/.style={draw,fill=blue!65,minimum size=9pt,rectangle,rounded corners=1.5pt},
  var/.style={draw,fill=yellow!85!orange,minimum size=9pt,circle},
  edge/.style={line width=.5pt,draw=gray},
  edge-normal/.style={edge},
  edge-strong/.style={edge,very thick,black},
}

\tikzset{
  check/.style={draw,fill=blue!65,minimum size=9pt,rectangle,rounded corners=1.5pt},
  var/.style={draw,fill=yellow!85!orange,minimum size=9pt,circle},
  edge/.style={line width=.5pt,draw=gray},
  edge-normal/.style={edge},
  edge-strong/.style={edge,very thick,black},
}

\definecolor{DarkGreen}{rgb}{0.1,0.5,0.1}
\definecolor{DarkBlue}{rgb}{0.1,0.1,0.5}
\definecolor{NiceOrange}{rgb}{.9,0.55,0}
\definecolor{NicePurple}{rgb}{0.3,0.1,1}
\definecolor{DarkBlue}{rgb}{0.1,0.1,0.5}
\hypersetup{
    unicode=false,          
    pdftoolbar=true,        
    pdfmenubar=true,        
    pdffitwindow=false,      
    pdfnewwindow=true,      
    colorlinks=true,       
    linkcolor=DarkBlue,          
    citecolor=NicePurple,        
    filecolor=DarkGreen,      
    urlcolor=DarkBlue,          
    pdftitle={},
    pdfauthor={},
}

\DeclareMathOperator{\QFT}{QFT}

\newtheorem{theorem}{Theorem}

\newtheorem{lemma}[theorem]{Lemma}
\newtheorem{corollary}[theorem]{Corollary}
\newtheorem{definition}[theorem]{Definition}
\newtheorem{remark}{Remark}
\newtheorem*{theorem*}{Theorem}

\numberwithin{theorem}{section}
\numberwithin{conjecture}{section}
\numberwithin{lemma}{section}
\numberwithin{corollary}{section}
\numberwithin{definition}{section}
\numberwithin{remark}{section}

\numberwithin{question}{section}
\numberwithin{empirical}{section}

\newcommand{\COMMENT}[1]{}

\newif\ifnotes
\notestrue

\newcommand{\Tr}{\mathrm{Tr}}

\newcommand{\cC}{\mathcal{C}}

\newcommand{\cL}{\mathcal{L}}
\newcommand{\cM}{\mathcal{M}}

\newcommand{\cW}{\mathcal{W}}
\newcommand{\cX}{\mathcal{X}}

\newcommand{\ba}{\mathbf{a}}
\newcommand{\bb}{\mathbf{b}}
\newcommand{\bc}{\mathbf{c}}

\newcommand{\bs}{\mathbf{s}}

\newcommand{\bv}{\mathbf{v}}
\newcommand{\bw}{\mathbf{w}}
\newcommand{\bx}{\mathbf{x}}
\newcommand{\by}{\mathbf{y}}

\newcommand{\mI}{\mathbb{I}}
\newcommand{\mzero}{\mathbf{0}}
\newcommand{\mone}{\mathbf{1}}

\newcommand{\codesubspaces}{\mathcal{V}}
\newcommand{\subspace}{V}

\newcommand{\mbF}{\mathbb{F}}
\newcommand{\Image}{\text{Im}}
\newcommand{\reward}{\mathcal{R}}

\newcommand{\wC}{\widehat{\localcode}}

\newcommand{\fullpovm}{\{\Pi_{\subspace,\bs}\}_{\subspace\in \codesubspaces^{\localcode},\bs\in \Image(H_{\subspace})}\cup\{\Pi_{e}\}}

\newcommand{\subspaceperp}{\widehat{V}^{\perp}}
\newcommand{\fullpovmtilde}{\{\tilde{\Pi}_{\subspace,\bs}\}_{\subspace\in \codesubspaces^{\localcode},\bs\in \Image(H_{\subspace})}\cup\{\tilde{\Pi}_{e}\}}

\newcommand{\fullpovmstar}{\{\Pi_{\subspace,\bs}^*\}_{\subspace\in \codesubspaces^{\localcode},\bs\in \Image(H_{\subspace})}\cup\{\Pi_{e}^{*}\}}

\newcommand{\characterbasis}{\{\ket{\chi}\}_{\chi\in \wC}}
\newcommand{\dualcosetspace}{\wC/\subspaceperp}
\usepackage[most]{tcolorbox} 
\newcommand{\zetabasis}{\{\ket{\zeta_{\subspace,\bs}^{\subspaceperp_m}}\}_{\subspaceperp_m\in \dualcosetspace}}

\newcommand{\localcode}{\cC}
\newcommand{\chitriv}{\chi_{\text{triv}}}
\newcommand{\objsdp}{\mathrm{OPT}_{\mathrm{SDP}}}
\newcommand{\objlp}{\mathrm{OPT}_{\mathrm{LP}}}
\newcommand{\sdpfeasible}{\cM(\localcode,\{\ket{\psi_{\bc}}\}_{\bc\in \localcode})}
\newcommand{\dualvariable}{y}
\newcommand{\fsdp}{f_{\mathrm{SDP}}}
\newcommand{\flp}{f_{\mathrm{LP}}}

\newcommand{\lpfeasible}{\cL(\localcode,\{\ket{\psi_{\bc}}\}_{\bc\in \localcode})}
\newcommand{\fullbeta}{ \{\beta_{\subspace}(\subspaceperp_m)\}_{\subspace\in \codesubspaces^{\localcode},\subspaceperp_m\in\dualcosetspace }}
\newcommand{\fulltheta}{ \{\theta_{\subspace}(\subspaceperp_m)\}_{\subspace\in \codesubspaces^{\localcode},\subspaceperp_m\in\dualcosetspace }}

\newcommand{\fullbetastar}{ \{\beta_{\subspace}^*(\subspaceperp_m)\}_{\subspace\in \codesubspaces^{\localcode},\subspaceperp_m\in\dualcosetspace }}

\newcommand{\kappamax}{\kappa_{\text{max}}}
\newcommand{\kappamin}{\kappa_{\text{min}}}
\newtcolorbox{LPbox}[1][]{%
  breakable,
  colback=white,
  colframe=black,
  boxrule=0.8pt,
  arc=2pt,
  left=6pt,right=6pt,top=6pt,bottom=6pt,
  title=Final LP Form,
  fonttitle=\bfseries,
  #1
}

\newtcolorbox{LPdualbox}[1][]{%
  breakable,
  colback=white,
  colframe=black,
  boxrule=0.8pt,
  arc=2pt,
  left=6pt,right=6pt,top=6pt,bottom=6pt,
  title=Dual LP Form,
  fonttitle=\bfseries,
  #1
}

\title{Affine Filtering Measurements and Their Applications to Quantum Decoding}

\author[1,2,3]{Avijit Mandal}
\email{avijit.mandal@duke.edu}
\author[1]{Noah Shutty}
\author[2,3]{Henry D. Pfister}
\author[1]{Stephen P.~Jordan}

\affil[1]{Google Quantum AI, Venice, CA 90291}
\affil[2]{Department of Electrical and Computer Engineering, Duke University, Durham, NC 27708}
\affil[3]{Duke Quantum Center, Durham, NC 27708}
\date{}
\begin{document}
\maketitle
\begin{abstract}

Unambiguous state discrimination (USD) measurements are attractive because outcomes are either marked as conclusive (i.e., error free) or inconclusive (i.e., erased).
We study affine filtering measurements, a structured variant of USD for decoding classical linear codes over pure-state classical-quantum channels, where a conclusive outcome identifies an affine subspace containing the transmitted codeword and an inconclusive outcome is treated as an erasure.
For a group-covariant indexing of pure-state codewords, we show that the optimal design of affine filtering measurements is a semidefinite program that can be reduced to a linear program via character-based diagonalization.
We use the resulting measurement to build a quantum decoding framework for local codes, and we demonstrate (via simulations on regular LDPC codes from Gallager ensembles using single parity check local constraints) that affine filtering based decoding can outperform symbol-wise USD and symbol-wise pretty good measurement based decoding methods on i.i.d. pure-state channels.
In an independent and concurrent work, Buzet and Chailloux study similar fine-grained USD measurements for symmetric families of states. Their focus is on the code-agnostic setting whereas our focus is on code-aware constructions and decoding.
  
\end{abstract}

\section{Introduction}
Coding theory for classical–quantum (CQ) channels traces back to the foundational work by Holevo~\cite{holevo1998capacity} and by Schumacher and Westmoreland~\cite{schumacher1997sending}, who established that one can communicate reliably at any rate up to the channel capacity by performing joint (collective) measurements across long codeword blocks. For many practically important structured CQ channels such as pure-state signal constellations that arise in optical communications~\cite{krovi2015optimal,da2013achieving}, the main hurdle in practice is
the implementation of these large
collective measurements. This challenge has spurred the development of structured receiver designs whose complexity grows slowly
with blocklength while still offering provable performance. 

Recent progress on quantum algorithms based on Regev's reduction has renewed interest in designing efficient algorithms for the quantum decoding problem~\cite{chailloux2023quantum, chailloux2024quantum,chen2022quantum,yamakawa2024verifiable}.
 In the same spirit, a recently introduced method called decoded quantum interferometry (DQI)~\cite{jordan2025optimization} uses reversible decoding together with quantum Fourier sampling to transform certain structured optimization problems into decoding problems.
In \cite{shutty2026lqd}, the authors develop locally quantum decoders for LDPC codes and use them, through the Regev and DQI framework, to obtain improved approximate solutions for average-case \(D\)-regular max-\(k\)-XORSAT problems, 
further motivating the study of efficient quantum decoding algorithms.

Several works have investigated the performance of classical codes, such as polar and Reed-Muller codes~\cite{wilde2012polar, wilde2013towards,mandal2025reed}, on classical quantum channels in the context of achieving the Holevo bound. Although these works studied the capacity-achieving potential of such codes, they did not provide efficient constructions to design practical quantum decoding methods.

In pursuit of designing low complexity quantum decoders, Belief propagation with quantum messages (BPQM) was introduced by Renes~\cite{Renes-njp17} as a quantum analogue of classical BP for decoding binary linear codes on pure-state channels (PSCs).
Subsequent works developed BPQM-based density-evolution (DE) for symmetric binary CQ channels and used them to study LDPC and polar constructions~\cite{brandsen2022belief,mandal2023belief,mandal2024polar}, with further extensions to turbo-style constructions~\cite{piveteau2025efficient}.  Recently, BPQM has been extended to codes with prime alphabet sizes and abelian groups with more general factor graphs in \cite{mandal2026belief, mandal2026qmp}. Although efficient constructions of BPQM-based decoders have been provided for polar codes \cite{mandal2024polar}, it is still not clear how to implement an efficient BPQM-based decoder for LDPC codes due to the sequential nature of message update rules.  

For a collection of geometrically uniform states with equal priors, the pretty good measurement (PGM) \cite{hausladen1994pretty,holevo1978asymptotically} has been proven to minimize the error probability of state discrimination
\cite{eldar2002quantum,eldar2003geometrically}.
When decoding codes on PSCs, one approach is to apply the
PGM to discriminate between the quantum states associated with individual code symbols
and then model the measurement outcome
as a
classical q-ary channel. 
For classical post-processing, one can
use a belief-propagation(BP)-based decoder to estimate the transmitted codeword.
While implementing PGM to decode codeword $\bc$ of sufficiently large code $\localcode$ is extremely complex, it is possible to apply PGM
to decode smaller local projections
of a larger code. However, this introduces correlated errors, making the analysis of classical post-processing difficult. 

For a collection of quantum states, optimal USD
measurements
minimize the probability of an inconclusive outcome~\cite{ivanovic1987differentiate,peres1988differentiate,dieks1988overlap, chefles1998optimum}. In the context of decoding codes on a PSC, when we apply the optimal USD measurement to quantum states associated with individual code symbols, the resultant measurement outcome can be modelled as erasure errors.
For linear codes, we then
use a Gaussian-elimination (GE)-based decoder to estimate the transmitted codeword.

 In this work, we study affine filtering measurements, a code-aware filtering
form of USD. 
The terminology filtering is motivated by the filtering framework of
\cite{chen2022quantum}, where conclusive quantum measurement outcomes provide
partial information about a hidden variable and inconclusive outcomes are
discarded. Our formulation is specific to linear codes where the conclusive information takes the form of affine subspaces containing the transmitted codeword.
A similar class of measurements was studied recently in
\cite{buzet2025fine}, where the authors describe them as fine-grained USD
measurements (FGUM). 
While their work does not leverage code structure, our
affine filtering measurements are designed to do so. The affine-subspace
outcomes are converted into linear constraints, which we then use to build a
quantum decoding framework for linear codes on pure-state channels.
More specifically, given a collection of
symmetric quantum states indexed by codewords transmitted over a PSC, an affine
filtering measurement outputs an affine subspace of the code $\localcode$ that
contains the transmitted codeword, or outputs an inconclusive outcome. The goal
is to maximize a chosen objective, such as the expected number of recovered
linear equations. 
Constructing the optimal affine filtering measurement is a
semidefinite program (SDP).
 However, even for a code with only a few symbols, the number of affine subspaces can be exponentially large.
 This makes finding the optimal measurement extremely challenging.

Our primary contributions are the simplification of the SDP and the analysis of the implied decoder. We
 prove that, for a collection of symmetric quantum states, the aforementioned SDP
 can be simplified to
 a linear program (LP). 
 This makes the optimization
 much more tractable
and, using the solution of the LP,
 we provide an
 explicit form of the optimal measurement.
 In \cite{shutty2026lqd}, the authors use FGUM as a locally quantum decoding method for the quantum decoding problem arising from average-case \(D\)-regular max-\(k\)-XORSAT.  These measurements are applied to a specific family of local quantum states obtained from a special blockwise choice of the state-preparation function.  For each degree-\(D\) single parity check (SPC) code \(C_I\) supported on an index set \(I\), this gives a family of states indexed by the local codewords \(\bc\in C_I\).  The FGUM either identifies \(\bc\) exactly or returns a failure outcome.  The failed local measurements are then treated as erasures and combined with the global linear constraints by GE.  In this paper, we show that, for the same SPC-associated choice of quantum states, the FGUM is an optimal 
 affine filtering measurement for maximizing the expected number of recovered linear equations.

 
 Next, we
 design a new class of quantum decoding methods
 for a global
 code $\localcode$
 whose code symbols can be partitioned into a collection of subsets $\{I_{j}\}$ such that each subset is associated with a local code $\localcode_{I_j}$. The idea is to apply the optimal 
 affine filtering measurement 
to the states $\{\ket{\psi_{\bc}}\}_{\bc\in \localcode_{I_j}}$.
For each index set $I_j$, the measurement outcome is either inconclusive or an affine subspace containing the correct local codeword.
The objective function is the expected number of equations generated by the 
affine filtering decoding of each local code.
After running the above procedure for each index set, we try to recover the codeword using GE. Based on this procedure, we simulate an 
affine filtering measurement decoder
that applies the measurement to the qudit states associated with SPC local codes in
a $(k, D)$-regular LDPC code. We observe that, for most $(k,D)$ pairs, when we fix the channel to be an i.i.d.~PSC, the 
affine-filtering+GE-based decoder outperforms single qudit-wise measurement based decoders or more specifically, qudit-PGM+BP decoder and the qudit-USD+GE decoder.
 

 In Section~\ref{sec:affineUSD}, we give a complete description of the 
 affine filtering measurement and derive its SDP to LP reduction.  We also describe the limiting LP formulation obtained as the limit of full-rank perturbations when the states are linearly dependent.  In Section~\ref{sec:affine spc decoder}, we specialize the local code to a SPC code and describe the 
 affine filtering+GE decoding algorithm for the $(k,D)$ Gallager ensemble of regular LDPC codes~\cite{gallager1962low,richardson2008modern,RU01} over i.i.d.\ PSCs.  In Section~\ref{sec:numerical results}, we present numerical results for the resulting decoder performance.
\section{Background}\label{sec: quantum  decoding intro}
\subsection{Notation}
The natural numbers are denoted by $\mathbb{N}=\left\{ 1,2,\ldots\right\} $ and  $\mathbb{N}_0=\mathbb{N}\cup \{0\}$. For $m\in \mathbb{N}_0$,  we use the shorthand $[m]\coloneqq\left\{ 0,\ldots,m-1\right\} $. Throughout the rest of the paper, we assume alphabet size $q$ to be prime. The Galois field with $q$ elements (i.e., the
integers $[q]$ with addition and multiplication modulo $q$) is denoted by $\mathbb{F}_{q}$.
We use boldface notation to denote vectors (e.g. $\by,\bw$).
We use $\mathbf{0}$ to denote all-zero vector or matrix if it is clear from the context. The identity matrix is denoted by $\mI$. 
The canonical basis is represented using $\{\ket{j}\}_{j\in \mathbb{F}_q}$.
For a vector $\bc$ and a $D$-element index set $I=\{I_{0},\dots, I_{D-1}\}\subseteq [N]$, we define the subvector $\bc_{I}=\{c_{I_0},\dots ,c_{I_{D-1}}\}$.
For a code $\localcode$ of length $N$ and a subset $I\subseteq [N]$, we denote by $\localcode_{I}$ the punctured code formed by keeping only the symbol positions in $I$.
For example, when $\localcode_{I}$ is a  single-parity check (SPC) on index set $I$, it satisfies 
\[
\localcode_{I}=\{\bc_{I}\in\mathbb{F}_q^{|I|} : \mathbf{1}_I^T \bc_{I} = 0 \}.
\]
Define $\omega\coloneqq e^{2\pi\mathrm{i}/q}$ to be a primitive $q$\textsuperscript{th} root of unity.
We denote the generalized Pauli shift and phase operators by $X$ and $Z$, respectively. They satisfy
\begin{align*}
    X^a\ket{j} & =\ket{j+a},\\
    Z^a\ket{j} & =\omega^{aj}\ket{j},
\end{align*}
for all $a,j \in \mathbb{F}_q$. For a vector $\ba\in \mathbb{F}_{q}^m$, we denote the operators as $X^{\ba}$ and $Z^{\ba}$ such that 
\begin{align*}
    X^{\ba}\coloneqq \bigotimes_{i=0}^{m-1}X^{a_i}, \quad Z^{\ba}\coloneqq \bigotimes_{i=0}^{m-1}Z^{a_i}.
\end{align*}
The unitary $\QFT$ corresponds to the quantum Fourier transform which satisfies 
\begin{align*}
    \QFT \ket{j}=\frac{1}{\sqrt{q}}\sum_{k\in \mathbb{F}_q}\omega^{jk}\ket{k},
\end{align*}
for all $j\in \mathbb{F}_q$. The unitary $\QFT$ satisfies the following relations 
\begin{align*}
    \QFT X^a\QFT^{\dagger}= Z^a, \quad \QFT^{\dagger} X^a\QFT= Z^{-a}.
\end{align*}
\subsection{Preliminaries}
\begin{definition}
    A classical-quantum (CQ) channel $W\colon x\rightarrow \rho_{x}$ takes a classical input $x$ from a finite size alphabet $\cX$ and outputs a density matrix $\rho_{x}$. 
\end{definition}
\begin{definition}
      A CQ channel $W\colon x\rightarrow \rho_{x}$ is called a pure-state channel (PSC), if the output state $\rho_{x}=\ket{\psi_x}\bra{\psi_x}$ is a pure state, i.e. a rank-1 matrix $\forall x\in \cX$.
\end{definition}

\begin{definition}
    The pretty good measurement (PGM), which is also known as the square root measurement, is defined as follows: consider the CQ channel $W\colon x\rightarrow \rho_x$ with a uniform input distribution on $\cX$. For the output state $\rho_x$ for $x\in \cX$, the measurement operator is $M_{x}=\frac{1}{|\cX|}\bar{\rho}^{-\frac{1}{2}}\rho_x\bar{\rho}^{-\frac{1}{2}}$ where $\bar{\rho}=\sum_{x\in \cX}\frac{1}{|\cX|}\rho_x$ and the inverse is taken on the support of $\bar{\rho}$.
\end{definition}
\begin{definition}
    Consider the PSC $W\colon x\rightarrow \ket{\psi_x}\bra{\psi_x}$, with $x\in \cX$. The unambiguous state discrimination (USD) measurement  constructs the measurement $\{\Pi_{x}\}_{x\in \cX}\cup\{\Pi_e\}$ such that 
    the measurement operators satisfy the unambiguity criterion $\Tr(\Pi_x\ket{\psi_{x'}}\bra{\psi_{x'}})=0$ $\forall x\neq x'$ and $\sum_{x\in \cX}\Pi_{x}+\Pi_{e}=\mI$. The optimal USD measurement minimizes the average probability of obtaining the inconclusive outcome of the measurement, i.e., $\Pi_e$. For the uniform input distribution on $\cX$, the problem is defined by an SDP below:
    \begin{align}
 \text{Minimize}  \quad &\   \frac{1}{|\cX|}\sum_{x\in \cX}\Tr(\Pi_e\ket{\psi_x}\bra{\psi_x})\\
 \text{Subject to:} \quad &\ \Tr(\Pi_x\ket{\psi_{x'}}\bra{\psi_{x'}})=0, \forall x\neq x'\nonumber \\
    &\ \sum_{x\in \cX}\Pi_{x}+\Pi_{e}=\mI \nonumber\\
    &\ \Pi_{x}\succeq 0, \forall x\in \cX, \Pi_{e}\succeq 0.        \end{align}
\end{definition}
\noindent We will also need the following specialization of Definition~2.4 to the collection of codeword states associated with a local code.

\begin{definition}\label{def:global USD}
Consider the collection of quantum states $\{\ket{\psi_{\bc}}\}_{\bc\in \localcode_I}$ associated with the local code $\localcode_I$. For code $\localcode_{I}$, the 
{\it codeword filtering measurement } is the POVM
\begin{align*}
    \left\{\Pi_{\bc}^{\mathrm{Cw\mbox{-}Filt}}\right\}_{\bc\in \localcode_I}\cup\left\{\Pi_{e}^{\mathrm{Cw\mbox{-}Filt}}\right\}
\end{align*}
such that the measurement operators satisfy
\begin{align*}
    \Tr\!\left(\Pi_{\bc}^{\mathrm{Cw\mbox{-}Filt}}
    \ket{\psi_{\bc'}}\bra{\psi_{\bc'}}\right)=0,
    \qquad \forall \bc\neq \bc',\ \bc,\bc'\in \localcode_I,
\end{align*}
and
\begin{align*}
    \sum_{\bc\in \localcode_I}\Pi_{\bc}^{\mathrm{Cw\mbox{-}Filt}}
    +\Pi_{e}^{\mathrm{Cw\mbox{-}Filt}}=\mI.
\end{align*}
Obtaining the outcome $\Pi_{\bc}^{\mathrm{Cw\mbox{-}Filt}}$ corresponds to identifying or filtering the transmitted codeword $\bc$ exactly, while $\Pi_{e}^{\mathrm{Cw\mbox{-}Filt}}$ corresponds to an inconclusive outcome.
\end{definition}

To characterize the CQ channel $W$, we use the Holevo information as an information measure. We use the symmetric Holevo information to parameterize and compare the
PSCs used in the numerical experiments.
Symmetric Holevo information measures the maximum amount of classical information that can be sent through the channel $W$ when the input distribution is uniform.
\begin{definition}
    For a CQ channel $W\colon x\rightarrow \rho_x$ with $x\in \cX$, the symmetric Holevo information is computed as 
\begin{align*}
    I(W)=S\Bigg(\frac{1}{|\cX|}\sum_{x\in \cX}\rho_x\Bigg)-\sum_{x\in \cX}\frac{1}{|\cX|}S\Bigg(\rho_x\Bigg)
\end{align*}
where $S(\rho)=-\Tr(\rho\log\rho)$ is the von Neumann entropy for the density matrix $\rho$.
\end{definition}

The 
affine filtering measurements studied in the next section are defined for
codeword-indexed families of symmetric pure states.  We now introduce the
particular symmetric form used throughout the paper.
Although the analysis in this paper extends to arbitrary abelian groups, and more generally to abelian group codes, we restrict the presentation in this paper to 
affine filtering measurements for codes over $\mathbb{F}_q$, where $q$ is prime. To analyze these measurements, we define the canonical quantum state of the $|I|$ qudits $\ket{\psi}$ as follows. 
    \begin{align*}
        \ket{\psi}\coloneqq \sum_{\bx\in \mathbb{F}_{q}^{|I|}}\alpha_{\bx}\ket{\bx}.
    \end{align*}
where $\{\alpha_{\bx}\}_{\bx\in \mathbb{F}_{q}^{|I|}}$ are arbitrary complex coefficients that satisfy $\sum_{\bx\in \mathbb{F}_{q}^{|I|}}|\alpha_{\bx}|^2=1$.
Next, consider the classical-quantum channel $W:\bc\rightarrow \ket{\psi_{\bc}}$, that takes a $|I|$ symbol codeword as input and maps to the quantum state $\ket{\psi_{\bc}}$  
  \begin{align}\label{def: quantum state}
      \ket{\psi_{\bc}} =X^{\bc}\QFT^{\otimes |I|}\ket{\psi}.
  \end{align}
  This provides a canonical choice of quantum states
$\{\ket{\psi_{\bc}}\}_{\bc\in \localcode}$ for the analysis. Since the
coefficients $\{\alpha_{\bx}\}_{\bx\in \mathbb{F}_{q}^{|I|}}$ are arbitrary,
the arguments below apply to any codeword-indexed symmetric family of states
with a circulant Gram matrix.

Before describing the 
affine filtering measurement, we must define a few notations necessary for affine subspaces. 
Let $\codesubspaces^{\localcode}\coloneqq\{\subspace\subsetneq \localcode: \subspace \text{ is a linear subspace}\}$ denote the collection of proper linear subspaces for the code $\localcode$. 
For each subspace $\subspace\in\codesubspaces^{\localcode}$,
define a fixed linear map 
$H_{\subspace}:\localcode\rightarrow \mathbb{F}_{q}^{\dim(\localcode)-\dim(\subspace)}$ such that $\ker(H_{\subspace})=\subspace$. 
We denote $H_{\subspace}$ as  the parity check matrix  for the subspace $\subspace$ restricted to the local code $\localcode$ with rank$(H_{\subspace})=\dim(\localcode)-\dim(\subspace)$. 
For the parity check matrix $H_{\subspace}$, we define the image 
$\Image(H_{\subspace})\coloneqq \{H_{\subspace}\bc:\bc\in\localcode\}$. 
Therefore, 
$|\Image(H_{\subspace})|=q^{\dim(\localcode)-\dim(\subspace)}$.
For a syndrome $\bs\in \Image(H_{\subspace})$ and the subspace $\subspace$, define $\subspace_{\bs}\coloneqq\{\bc\in \localcode:H_{\subspace}\bc=\bs\}$, the affine subspace of $\localcode$ consisting of all codewords whose syndrome
with respect to $H_{\subspace}$ is $\bs$. In this notation,
$\subspace_{\mzero}=\subspace$ for the all-zero syndrome vector $\mzero\in \Image(H_{\subspace})$.

Our analysis is heavily based on group characters. For a local code $\localcode$ , 
define the dual group $\wC$ to be the set of characters $\chi$, which are group homomorphisms
$\chi :\localcode \to \mathbb{T}$ where $\mathbb{T}=\{z\in \mathbb{C}\colon |z|=1\}$ is the unit circle under multiplication~\cite{luong2009fourier}. It follows that 
\begin{align*}
    \chi(\bc_{1}+\bc_{2})=\chi(\bc_1)\chi(\bc_2), \forall \bc_{1},\bc_{2}\in \localcode,\forall\chi\in\wC.
\end{align*}
Since $\localcode$ is finite, each character $\chi\in\wC$ satisfies $|\chi(\bc)|=1$ for all  $\bc\in \localcode$. The group $\wC$ is equipped with trivial character $\chitriv$ such that $\chitriv(\bc)=1$ $\forall\bc\in \localcode$.
For two characters $\chi_1, \chi_2$, we define $\chi_1\chi_2$ as the character such that
\begin{align*}
   (\chi_{1}\chi_{2})(\bc) := \chi_{1}(\bc)\chi_{2}(\bc).
\end{align*}
One can readily check that $\chi_1\chi_2\in \wC$. For character $\chi$, let $\chi^{-1}$ denote its inverse which satisfies
\begin{align*}
    \chi\chi^{-1}=\chitriv.
\end{align*}
For a character $\chi\in\wC$, the inverse character $\chi^{-1}\in \wC$ is unique and satisfies the following relations $\forall\bc\in \localcode$
\begin{align*}
    \chi^{-1}(\bc)=\chi(-\bc)=\frac{1}{\chi(\bc)}=\overline{\chi(\bc)}
\end{align*}

\noindent The character group $\wC$ satisfies $|\wC|=|\localcode|$ since $\localcode$ is a finite abelian group. For a subspace $\subspace\in \codesubspaces^{\localcode}$, define its annihilator as
\begin{align*}
    \subspaceperp\coloneqq\{\chi\in\wC:\chi(\bv)=1, \quad\forall\bv\in \subspace\}.
\end{align*}
Then $\subspaceperp$ is a subgroup of $\wC$, with
$|\subspaceperp|=q^{\dim(\localcode)-\dim(\subspace)}$. Hence $\wC/\subspaceperp$ has
$q^{\dim(\subspace)}$ cosets, which we denote by $\subspaceperp_m$ for
$m\in [q^{\dim(\subspace)}]$.
The following standard character identities will be used repeatedly throughout the rest of the paper.

\begin{lemma}[Character identities for finite abelian groups]
\label{lem:character-identities}
Let $\localcode$ be a finite abelian group with character group $\wC$, and let a subspace $\subspace\in \codesubspaces^{\localcode}$
have annihilator $\subspaceperp$. Then the following hold.
\begin{enumerate}
    \item For any $\chi_1,\chi_2\in\wC$,
    \begin{align*}
        \sum_{\bc\in\localcode}\chi_{1}^{-1}(\bc)\chi_{2}(\bc)
        =
        \sum_{\bc\in\localcode}\overline{\chi_{1}(\bc)}\chi_{2}(\bc)
        =
        |\localcode|\delta_{\chi_{1},\chi_{2}}.
    \end{align*}

    \item For any $\bv\in\localcode$,
    \begin{align*}
        \sum_{\chi\in\subspaceperp}\chi(\bv)=
        \begin{cases}
            |\subspaceperp|, &\text{if $\bv\in\subspace$,}\\
            0, &\text{otherwise.}
        \end{cases}
    \end{align*}

    \item If $\chi_1\in\subspaceperp_{m_1}$ and
    $\chi_2\in\subspaceperp_{m_2}$, where
    $\subspaceperp_{m_1},\subspaceperp_{m_2}\in\wC/\subspaceperp$, then
    \begin{align*}
        \chi_2\chi_1^{-1}\in\subspaceperp
        \iff
        m_1=m_2.
    \end{align*}
\end{enumerate}
\end{lemma}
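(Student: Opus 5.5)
The three identities are standard consequences of the homomorphism property, and I would prove them in order, each reducing to a single cancellation argument.

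\textbf{Part 1.} Set $\chi\coloneqq\chi_1^{-1}\chi_2\in\wC$. Using $\chi_1^{-1}(\bc)=\overline{\chi_1(\bc)}$, the summand equals $\chi(\bc)$, so it suffices to show that $S\coloneqq\sum_{\bc\in\localcode}\chi(\bc)$ equals $|\localcode|$ if $\chi=\chitriv$ and $0$ otherwise. The first case is immediate. If $\chi\neq\chitriv$, pick $\bc_0$ with $\chi(\bc_0)\neq 1$. Reindexing $\bc\mapsto\bc+\bc_0$, which is a bijection of $\localcode$, gives $S=\chi(\bc_0)S$, hence $S=0$. Finally, $\chi=\chitriv$ holds iff $\chi_1=\chi_2$, by uniqueness of inverses in the group $\wC$.

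\textbf{Part 2.} If $\bv\in\subspace$, then every $\chi\in\subspaceperp$ has $\chi(\bv)=1$ by definition, so the sum is $|\subspaceperp|$. If $\bv\notin\subspace$, the main step is to exhibit some $\chi_0\in\subspaceperp$ with $\chi_0(\bv)\neq1$. Given such $\chi_0$, the reindexing $\chi\mapsto\chi_0\chi$ permutes the subgroup $\subspaceperp$, and the same cancellation as in Part~1 gives the sum $0$. To construct $\chi_0$ concretely, I would use that $q$ is prime, so $\localcode$ is an $\mathbb{F}_q$-vector space and $\bv\notin\subspace$ lets me extend a basis of $\subspace$ by $\bv$. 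Choose an $\mathbb{F}_q$-linear functional $\ell$ with $\ell|_{\subspace}=0$ and $\ell(\bv)=1$, and set $\chi_0(\bc)=\omega^{\ell(\bc)}$; this is a character in $\subspaceperp$ with $\chi_0(\bv)=\omega\neq1$. An alternative route is the counting argument: the annihilator of $\subspace+\langle\bv\rangle$ has size $q^{\dim\localcode-\dim\subspace-1}<|\subspaceperp|$, using the stated formula for annihilator sizes. This existence step is the only part of the lemma that needs any care.

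\textbf{Part 3.} This is the definition of equality of cosets in the quotient $\wC/\subspaceperp$: $\chi_1\subspaceperp=\chi_2\subspaceperp$ if and only if $\chi_2\chi_1^{-1}\in\subspaceperp$. Since the cosets $\subspaceperp_m$ are indexed distinctly, $\chi_1$ and $\chi_2$ lie in the same coset exactly when $m_1=m_2$, and distinct cosets are disjoint.
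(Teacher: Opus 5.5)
Your proof is correct, and it is exactly the standard shift-and-cancel argument behind the character orthogonality relations; the paper gives no proof of its own and simply cites these relations from Luong's text. The one nontrivial step is the existence of $\chi_0\in\subspaceperp$ with $\chi_0(\bv)\neq 1$. Your linear-functional construction handles it correctly, and it legitimately relies on $q$ being prime and $\subspace$ being an $\mathbb{F}_q$-subspace, which is the paper's setting; for an arbitrary subgroup of a general finite abelian group, you would instead need the fact that characters of the quotient separate points.
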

These identities follow from the standard orthogonality relations for characters of finite abelian groups~\cite{luong2009fourier}.

\section{Affine Filtering Measurements}\label{sec:affineUSD}

In this section, we study a structured class of unambiguous measurements for codeword-indexed symmetric state families.
Instead of attempting to identify the transmitted codeword exactly, these measurements return an affine subspace of the local code $\localcode_I$ that is guaranteed to contain the transmitted codeword, or else return an erasure outcome. This leads naturally to an optimization problem in which different affine subspaces may be assigned different rewards according to how much linear information they reveal.

Our goal is to characterize the optimal affine filtering measurement for a local code $\localcode$ and its associated symmetric state family $\{\ket{\psi_{\bc}}\}_{\bc\in \localcode_I}$. 
In the remainder of section, we write
$\localcode$ for the local code $\localcode_I$ to simplify notation.

\begin{definition}\label{def:affineUSD}
    An {\it affine filtering measurement} for a local code $\localcode$ is a joint generalized measurement (i.e. POVM) on the quantum states $\{\ket{\psi_{\bc}}\}_{\bc\in \localcode}$, 
    with one measurement outcome $\Pi_{\subspace,\bs}$ per affine subspace
$\subspace_{\bs}$ where
$\subspace\in \codesubspaces^{\localcode}$ and
$\bs\in \Image(H_{\subspace})$.
    We denote this POVM by $\{\Pi_{\subspace,\bs}\}_{\subspace\in \codesubspaces^{\localcode},\bs\in \Image(H_{\subspace})}\cup\{\Pi_{e}\}$. 
 For the unambiguity or filtering validity criterion, an 
 affine filtering measurement must satisfy
    \begin{align*}
            \Tr\left(\Pi_{\subspace,\bs}\ket{\psi_{\mathbf{c}}}\bra{\psi_{\mathbf{c}}}\right)=0 \quad \text{if $\bc\notin \subspace_{\bs}$}
    \end{align*}
    for all affine subspaces $\subspace_{\bs}\subset \localcode$ with $\subspace\in \codesubspaces^{\localcode}$ and $\bs\in \Image(H_{\subspace})$.
    Since each $\subspace_{\bs}$ is a coset of $\subspace$, the dimensions satisfy $\dim (\localcode) - \dim(\subspace_{\bs})=\dim (\localcode) - \dim(\subspace)$. Therefore, obtaining a measurement outcome $\Pi_{\subspace,\bs}$ corresponds to learning $\dim \localcode - \dim(\subspace)$ linear equations (i.e. parities).  The measurement operator $\Pi_{e}$ corresponds to the erasure outcome which provides no information about $\mathbf{c}$. 
\end{definition}
A conclusive affine filtering outcome need not identify the transmitted codeword uniquely; it only needs to identify an affine subspace of $\localcode$ that contains it. This makes it natural to optimize over reward functions that depend on how much linear information the outcome reveals.
For each subspace $\subspace\in\codesubspaces^{\localcode}$, let $\reward(\subspace)\geq 0$ denote the reward function representing the contribution of $\subspace$ to the maximization objective. We assume the reward depends only on the linear subspace $\subspace$ itself, such that every affine subspace $\subspace_{\bs}$ with $\bs\in\Image(H_{\subspace})$
 yields the same reward $\reward(\subspace)$. When we set $\reward(\subspace)=\dim(\localcode)-\dim(\subspace)$, the problem becomes maximizing the expected number of linear equations.
Also, observe that 
codeword filtering is a special case of affine filtering in which every conclusive outcome corresponds to a singleton affine subspace, i.e., $\subspace=\{\mzero\}$, and hence $\subspace_{\bs}=\{\bc\}$ for some $\bc\in \localcode$.
Suppose we wish to find an 
affine filtering measurement maximizing the expected reward based on the reward function $\reward(.)$. This can be formulated as an SDP:
     \begin{align}{\label{SDP main}}
        \text{Maximize}  \quad &\   \frac{1}{|\localcode|}\sum_{\bc\in \localcode}\sum_{\subspace\in \codesubspaces^{\localcode},\bs\in\Image(H_{\subspace})}\Tr(\Pi_{\subspace,\bs}\ket{\psi_{\bc}}\bra{\psi_{\bc}})\reward(\subspace)\\
 \text{Subject to:} \quad &\ \Tr\left(\Pi_{\subspace,\bs}\ket{\psi_{\bc}}\bra{\psi_{\bc}}\right)=0 \quad \text{if $\mathbf{c}\notin \subspace_{\bs}$} \nonumber \\
    &\ \sum_{\subspace\in \codesubspaces^{\localcode},\bs\in\Image(H_{\subspace})}\Pi_{\subspace,\bs}+\Pi_e=\mI\\
    &\ \Pi_{\subspace,\bs}\succeq 0, \forall \subspace\in \codesubspaces^{\localcode},\bs\in\Image(H_{\subspace})\nonumber\\&\ \Pi_{e}\succeq 0.\nonumber
    \end{align}
\noindent
For later use, and to state the symmetry reductions without repeatedly
rewriting the full SDP, we separate the objective from the feasible region.
For a POVM $\fullpovm$, define
\begin{align}
    \fsdp(\fullpovm)
    \coloneqq
    \frac{1}{|\localcode|}
    \sum_{\bc\in \localcode}
    \sum_{\subspace\in \codesubspaces^{\localcode},\,\bs\in\Image(H_{\subspace})}
    \Tr(\Pi_{\subspace,\bs}\ket{\psi_{\bc}}\bra{\psi_{\bc}})
    \reward(\subspace).
\end{align}
Let \(\sdpfeasible\) denote the feasible region of \eqref{SDP main}, i.e.,
the set of 
affine filtering POVMs satisfying the unambiguity, completeness, and
positivity constraints displayed above. With this notation, the optimal SDP
value is
\begin{align}
   & \objsdp(\sdpfeasible)\\
 &   \coloneqq
    \text{Max}_{\fullpovm\in\sdpfeasible} \fsdp(\fullpovm).\nonumber
\end{align}
The SDP above is still high-dimensional, since it involves one POVM element
for each affine subspace of \(\localcode\). The rest of this section reduces
this SDP to an equivalent LP by exploiting the symmetries of the state family.
First, we show that any feasible 
affine filtering POVM can be symmetrized over
translations in \(\localcode\), without changing the value of
\(\fsdp\). This allows us to restrict attention to POVMs for which affine
cosets of the same subspace are related by certain translation operators
\(X^{\bw}\) with $\bw\in \localcode$.

After this symmetrization, we aggregate the conclusive outcomes associated
with a fixed subspace \(\subspace\) into the operator
\(\Pi_{\subspace}\). The symmetry implies that \(\Pi_{\subspace}\) commutes
with all translations by codewords in \(\localcode\), and hence it is diagonal
in the character basis \(\{\ket{\chi}\}_{\chi\in\wC}\). This converts the
positivity constraint for the erasure operator into scalar inequalities in
the corresponding diagonal coefficients.

It remains to handle the individual affine POVM elements
\(\Pi_{\subspace,\bs}\), since the unambiguity constraints still distinguish
the different affine cosets. We show that these operators can be restricted to
the codeword-state span \(\mathcal W\), and then further to the subspace
compatible with the 
affine filtering constraint. In the character basis, this forces
\(\Pi_{\subspace,\bs}\) to decompose according to cosets of
\(\subspaceperp\). A final diagonalization in the corresponding
\(\ket{\zeta}\)-basis leaves only scalar variables
\(\beta_{\subspace}(\subspaceperp_m)\).

Substituting this diagonal form into the objective and the erasure-positivity
constraints gives the LP. The proof therefore proceeds in two directions. First, any
optimal SDP solution can be symmetrized and diagonalized to produce a feasible
LP solution with the same value. Conversely, any feasible LP solution
constructs a valid 
affine filtering POVM with the same objective value.
We begin by exploiting coset symmetry of the state family in the following section.
\subsection{Reduced SDP using Coset Symmetry}

In this section we obtain an SDP reduction using the translation symmetry of the state family $\{\ket{\psi_{\bc}}\}_{\bc\in \localcode}$. We show that any feasible POVM can be symmetrized, without changing its objective value, so that POVM elements corresponding to affine cosets of the same subspace are related by conjugation under translation operators.
  

\begin{lemma}[Coset Symmetry]\label{lem :coset symmetry}
For any feasible POVM $ \fullpovmtilde\in \sdpfeasible$ there exists a feasible POVM $\fullpovm\in \sdpfeasible$
such that
\begin{align*}
    \fsdp(\fullpovm)=\fsdp(\fullpovmtilde),
\end{align*}
and for every $\subspace\in \codesubspaces^{\localcode}$, every $\bs_1,\bs_2\in \Image(H_{\subspace})$, every $\bw_1\in \subspace_{\bs_1}$, and every $\bw_2\in \subspace_{\bs_2}$,
\begin{align*}
    \Pi_{\subspace,\bs_2}=X^{\bw_2-\bw_1}\Pi_{\subspace,\bs_1}X^{-(\bw_2-\bw_1)}.
\end{align*}
In particular, if $\fullpovmtilde$ is optimal, then $\fullpovm$ is also optimal.
\end{lemma}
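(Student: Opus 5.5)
We symmetrize (twirl) the given POVM over the translation group $\{X^{\bw}\}_{\bw\in\localcode}$. Given a feasible $\fullpovmtilde$, define for each $\subspace\in\codesubspaces^{\localcode}$ and $\bs\in\Image(H_{\subspace})$
\begin{align*}
    \Pi_{\subspace,\bs}\coloneqq \frac{1}{|\localcode|}\sum_{\bw\in\localcode} X^{\bw}\tilde{\Pi}_{\subspace,\bs-H_{\subspace}\bw}X^{-\bw},
    \qquad
    \Pi_e\coloneqq \frac{1}{|\localcode|}\sum_{\bw\in\localcode} X^{\bw}\tilde{\Pi}_{e}X^{-\bw}.
\end{align*}
The idea is that translating by $\bw$ maps the affine coset $\subspace_{\bs-H_{\subspace}\bw}$ onto $\subspace_{\bs}$, since $H_{\subspace}$ is linear. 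The key fact we will use is $X^{\bw}\ket{\psi_{\bc}}=\ket{\psi_{\bc+\bw}}$, which is immediate from \eqref{def: quantum state} because the $X^{\ba}$ commute.

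We then check the three groups of constraints in turn. Positivity holds because each summand is a unitary conjugate of a PSD operator.

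For completeness, fix $\bw$ and note that $\bs\mapsto\bs-H_{\subspace}\bw$ is a bijection of $\Image(H_{\subspace})$. Hence
\begin{align*}
    \sum_{\subspace,\bs}X^{\bw}\tilde{\Pi}_{\subspace,\bs-H_{\subspace}\bw}X^{-\bw}+X^{\bw}\tilde{\Pi}_eX^{-\bw}=X^{\bw}\mI X^{-\bw}=\mI,
\end{align*}
and averaging over $\bw$ gives $\mI$.

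For unambiguity, take $\bc\notin\subspace_{\bs}$. Then
\begin{align*}
    \Tr\bigl(X^{\bw}\tilde{\Pi}_{\subspace,\bs-H_{\subspace}\bw}X^{-\bw}\ket{\psi_{\bc}}\bra{\psi_{\bc}}\bigr)=\Tr\bigl(\tilde{\Pi}_{\subspace,\bs-H_{\subspace}\bw}\ket{\psi_{\bc-\bw}}\bra{\psi_{\bc-\bw}}\bigr).
\end{align*}
Since $H_{\subspace}(\bc-\bw)\neq\bs-H_{\subspace}\bw$, we have $\bc-\bw\notin\subspace_{\bs-H_{\subspace}\bw}$, so this term vanishes by feasibility of $\tilde{\Pi}$.

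The objective is handled by the same substitution. In $\fsdp(\fullpovm)$ the term for $(\bw,\subspace,\bs,\bc)$ equals $\reward(\subspace)\Tr(\tilde{\Pi}_{\subspace,\bs-H_{\subspace}\bw}\ket{\psi_{\bc-\bw}}\bra{\psi_{\bc-\bw}})$. For fixed $\bw$, reindex $\bc'=\bc-\bw$ and $\bs'=\bs-H_{\subspace}\bw$; both are bijections. This recovers $\fsdp(\fullpovmtilde)$, and averaging over $\bw$ leaves it unchanged. Optimality of the new POVM follows at once whenever $\tilde{\Pi}$ is optimal.

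The covariance relation is the only step needing care with indices, and it is short. Let $\bu=\bw_2-\bw_1$ with $\bw_i\in\subspace_{\bs_i}$, so that $H_{\subspace}\bu=\bs_2-\bs_1$. Then
\begin{align*}
    X^{\bu}\Pi_{\subspace,\bs_1}X^{-\bu}=\frac{1}{|\localcode|}\sum_{\bw}X^{\bw+\bu}\tilde{\Pi}_{\subspace,\bs_1-H_{\subspace}\bw}X^{-(\bw+\bu)}.
\end{align*}
Substituting $\bw'=\bw+\bu$ gives $\bs_1-H_{\subspace}\bw=\bs_1+H_{\subspace}\bu-H_{\subspace}\bw'=\bs_2-H_{\subspace}\bw'$. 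The sum is therefore exactly $\Pi_{\subspace,\bs_2}$, which is the required relation. The choice of $\bw_1,\bw_2$ within their cosets does not matter, since only $H_{\subspace}\bu$ enters.
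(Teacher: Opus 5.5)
Your proposal is correct and follows essentially the same route as the paper: you twirl over the codeword translations $X^{\bw}$ with relabeled syndromes $\bs-H_{\subspace}\bw$, check feasibility and invariance of $\fsdp$ term by term, and verify the covariance relation by the shift $\bw\mapsto\bw+(\bw_2-\bw_1)$. The only cosmetic difference is that you define $\Pi_e$ as the twirl of $\tilde{\Pi}_e$ rather than as $\mI-\sum_{\subspace,\bs}\Pi_{\subspace,\bs}$; your completeness computation shows the two definitions coincide.
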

\begin{proof}


    Let $\fullpovmtilde\in \sdpfeasible$ be any feasible POVM, where
\begin{align*}
    \tilde{\Pi}_{e}
    = \mI-\sum_{\subspace\in \codesubspaces^{\localcode},\bs\in\Image(H_{\subspace})}\tilde{\Pi}_{\subspace,\bs}.
\end{align*}
    Consider the following transformation for a codeword $\bw\in \localcode$
    \begin{align*}
        \Pi_{\subspace,\bs}^{(\bw)}=X^{\bw}\tilde{\Pi}_{\subspace,\bs-H_{\subspace}\bw}X^{-\bw}.
    \end{align*}
That is, $\{\Pi_{\subspace,\bs}^{(\bw)}\}_{\subspace,\bs}$ is obtained from $\fullpovmtilde$ by translating the codeword labels by $\bw$ and conjugating by the corresponding shift operator.
  For this transformed POVM,
    \begin{align*}
        \Tr(\Pi_{\subspace,\bs}^{(\bw)}\ket{\psi_{\bc}}\bra{\psi_{\bc}})=\Tr(\tilde{\Pi}_{\subspace,\bs-H_{\subspace}\bw}\ket{\psi_{\bc-\bw}}\bra{\psi_{\bc-\bw}}).
    \end{align*}
If $\bc\notin \subspace_{\bs}$, then $\bc-\bw \notin \subspace_{\bs-H_{\subspace}\bw}$. Therefore, when $\bc\notin \subspace_{\bs}$,
\begin{align*}
    \Tr(\Pi_{\subspace,\bs}^{(\bw)}\ket{\psi_{\bc}}\bra{\psi_{\bc}})=0.
\end{align*}
Next, consider the operator  $\Pi_{e}^{(\bw)}$ associated with the  erasure outcome after this transformation defined by
\begin{align*}
    \Pi_{e}^{(\bw)}=\mI-\sum_{\subspace\in \codesubspaces^{\localcode},\bs\in\Image(H_{\subspace})}\Pi_{\subspace,\bs}^{(\bw)}.
\end{align*}
Since the matrix $\tilde{\Pi}_{\subspace,\bs}$ is positive semidefinite (PSD), it follows immediately that $\Pi_{\subspace,\bs}^{(\bw)}$ is also
PSD $\forall\subspace\in \codesubspaces^{\localcode},\bs\in\Image(H_{\subspace})$. It also follows that 
\begin{align*}
    \sum_{\subspace\in \codesubspaces^{\localcode},\bs\in\Image(H_{\subspace})}\Pi_{\subspace,\bs}^{(\bw)}
&\ =X^{\bw}\left(\sum_{\subspace\in \codesubspaces^{\localcode},\bs\in\Image(H_{\subspace})}\tilde{\Pi}_{\subspace,\bs-H_{\subspace}\bw}\right)X^{-\bw}\\
&\ \preceq \mI.
\end{align*}
Hence, the operator $\Pi_{e}^{(\bw)}$ is also PSD and $\{\Pi_{\subspace,\bs}^{(\bw)}\}_{\subspace\in \codesubspaces^{\localcode},\bs\in \Image(H_{\subspace})}\cup\{\Pi_{e}^{(\bw)}\}$ is a valid POVM.
On the other hand, $(\subspace_{\bs},\bc)\rightarrow (\subspace_{\bs-H_{\subspace}\bw},\bc-\bw)$ is a bijection with $\dim(\subspace_{\bs-H_{\subspace}\bw})=\dim(\subspace_{\bs})$ $\forall\subspace\in \codesubspaces^{\localcode},\bs\in\Image(H_{\subspace})$. Thus the overall objective for an arbitrary codeword $\bw\in \localcode$ stays the same, i.e.,
\begin{align*}
   &\  \fsdp(\{\Pi_{\subspace,\bs}^{(\bw)}\}_{\subspace\in \codesubspaces^{\localcode},\bs\in \Image(H_{\subspace})}\cup\{\Pi_{e}^{(\bw)}\}) \\
   &\ =\frac{1}{|\localcode|}\sum_{\bc\in \localcode}\sum_{\subspace\in \codesubspaces^{\localcode},\bs\in\Image(H_{\subspace})}\Tr(\Pi_{\subspace,\bs}^{(\bw)}\ket{\psi_{\bc}}\bra{\psi_{\bc}})\reward(\subspace)\\
 &\ =   \frac{1}{|\localcode|}\sum_{\bc\in \localcode}\sum_{\subspace\in \codesubspaces^{\localcode},\bs\in\Image(H_{\subspace})}\Tr(\tilde{\Pi}_{\subspace,\bs-H_{\subspace}\bw}\ket{\psi_{\bc-\bw}}\bra{\psi_{\bc-\bw}})\reward(\subspace)\\
     &\ =   \frac{1}{|\localcode|}\sum_{\bc'\in \localcode}\sum_{\subspace\in \codesubspaces^{\localcode},\bs'\in\Image(H_{\subspace})}\Tr(\tilde{\Pi}_{\subspace,\bs'}\ket{\psi_{\bc'}}\bra{\psi_{\bc'}})\reward(\subspace)\\
    &\ = \fsdp(\fullpovmtilde).
\end{align*}
Consequently $\{\Pi_{\subspace,\bs}^{(\bw)}\}_{\subspace,\bs}\cup\{\Pi_e^{(\bw)}\}$ is feasible for the SDP in \eqref{SDP main} and has the same objective value as $\fullpovmtilde$.
   Now define the following symmetry transformation, where 
    \begin{align*}
        \Pi_{\subspace,\bs}=\frac{1}{|\localcode|}\sum_{\bc\in \localcode}\Pi_{\subspace,\bs}^{(\bc)}.
    \end{align*}
    For affine subspaces $\subspace_{\bs_1},\subspace_{\bs_2}\subset \localcode$ and codewords $\bw_{1}\in \subspace_{\bs_1}$, $\bw_{2}\in \subspace_{\bs_2}$ this gives 
    \begin{align*}
    X^{\bw_{2}-\bw_{1}}    \Pi_{\subspace,\bs_1} X^{-(\bw_{2}-\bw_{1})} &\ = \frac{1}{|\localcode|}\sum_{\bc \in \localcode}X^{\bc+\bw_{2}-\bw_{1}}\tilde{\Pi}_{\subspace,\bs_1-H_{\subspace}\bc}X^{-(\bc+\bw_{2}-\bw_{1})}\\
        &\ = \frac{1}{|\localcode|}\sum_{\bc'\in \localcode}X^{\bc'}\tilde{\Pi}_{\subspace,\bs_1-H_{\subspace}(\bc'-\bw_{2}+\bw_{1})}X^{-\bc'}\\
        &\ =\frac{1}{|\localcode|}\sum_{\bc'\in \localcode}X^{\bc'}\tilde{\Pi}_{\subspace,\bs_1+\bs_{2}-\bs_{1}-H_{\subspace}\bc'}X^{-\bc'}\\
         &\ = \frac{1}{|\localcode|}\sum_{\bc'\in \localcode}X^{\bc'}\tilde{\Pi}_{\subspace,\bs_2-H_{\subspace}\bc'}X^{-\bc'}\\
        &\ =  \Pi_{\subspace,\bs_2}.
    \end{align*}
Since each $\Pi_{\subspace,\bs}^{(\bw)}$ is PSD $\forall \bw\in \localcode$, the operator $\Pi_{\subspace,\bs}$ is also a PSD $\forall\subspace\in \codesubspaces^{\localcode},\bs\in\Image(H_{\subspace})$.
Defining the operator $\Pi_{e}=\mI-\sum_{\subspace\in \codesubspaces^{\localcode},\bs\in\Image(H_{\subspace})}\Pi_{\subspace,\bs}$, we obtain $\Pi_{e}\succeq\mzero$, so that $\fullpovm$ is a valid POVM.
The overall objective becomes 
\begin{align*}
  &\  \fsdp(\fullpovm)\\
  &\ = \frac{1}{|\localcode|}\sum_{\bc\in \localcode}\sum_{\subspace\in \codesubspaces^{\localcode},\bs\in\Image(H_{\subspace})}\Tr(\Pi_{\subspace,\bs}\ket{\psi_{\bc}}\bra{\psi_{\bc}})\reward(\subspace)\\
      &\ = \frac{1}{|\localcode|}\sum_{\bw\in \localcode}\left(\frac{1}{|\localcode|}\sum_{\bc\in \localcode}\sum_{\subspace\in \codesubspaces^{\localcode},\bs\in\Image(H_{\subspace})}\Tr(\Pi_{\subspace,\bs}^{(\bw)}\ket{\psi_{\bc}}\bra{\psi_{\bc}})\reward(\subspace)\right)\\
&\ =\frac{1}{|\localcode|}\sum_{\bw\in \localcode}\left(\fsdp(\{\Pi_{\subspace,\bs}^{(\bw)}\}_{\subspace\in \codesubspaces^{\localcode},\bs\in \Image(H_{\subspace})}\cup\{\Pi_{e}^{(\bw)}\})\right)\\
    &\ =\fsdp(\{\Pi_{\subspace,\bs}^{(\bw)}\}_{\subspace\in \codesubspaces^{\localcode},\bs\in \Image(H_{\subspace})}\cup\{\Pi_{e}^{(\bw)}\})\\
    &\ = \fsdp(\fullpovmtilde).
\end{align*}
Thus, starting from any feasible POVM $\fullpovmtilde\in \sdpfeasible$, we have constructed a feasible POVM $\fullpovm\in \sdpfeasible$ with
\begin{align*}
    \fsdp(\fullpovm)=\fsdp(\fullpovmtilde),
\end{align*}
that satisfies the stated coset symmetry relation. In particular, if $\fullpovmtilde$ is optimal, then $\fullpovm$ is also optimal.
\end{proof}

\noindent For a fixed POVM $\fullpovm$, define the operator $\Pi_{\subspace}$ associated with the subspace $\subspace\in \codesubspaces^{\localcode}$ by
\begin{align*}
    \Pi_{\subspace}=\sum_{\bs\in \Image({H_{\subspace}})}\Pi_{\subspace,\bs}.
\end{align*}
Next, we use the coset symmetry lemma, to prove that the operators $\Pi_{\subspace}$ are invariant under all transformations $X^{\bc}$ for all $\bc\in \localcode$ and $\subspace\in \codesubspaces^{\localcode}$.
\begin{lemma}\label{lem: povm X symmetry relations}
    Consider a POVM $\fullpovm\in\sdpfeasible$, which achieves the optimal objective value for SDP problem in \eqref{SDP main} such that the POVM elements $\Pi_{\subspace,\bs_1}$, $\Pi_{\subspace,\bs_2}$ corresponding to two arbitrary affine subspaces $\subspace_{\bs_1}, \subspace_{\bs_2}\subset \localcode$ for $\subspace\in \codesubspaces^{\localcode}$ and $\bs_1,\bs_2\in \Image(H_{\subspace})$, satisfy 
    \begin{align}\label{eq: povm coset symmetry relation}
    \Pi_{\subspace,\bs_2}=X^{\bw_2-\bw_1}\Pi_{\subspace,\bs_1}X^{-(\bw_{2}-\bw_1)}.
\end{align}
$\forall \bw_1\in \subspace_{\bs_1}$ and $\bw_{2}\in \subspace_{\bs_2}$. Then the following equalities hold
    \begin{align*}
\Pi_{\subspace,\bs} &\ =X^{\bv}\Pi_{\subspace,\bs}X^{-\bv},  \quad \forall\subspace\in \codesubspaces^{\localcode},\bs\in\Image(H_{\subspace}),\bv\in \subspace\\
\Pi_{\subspace} &\ = X^{\bc}\Pi_{\subspace}X^{-\bc} ,  \quad \forall\subspace\in \codesubspaces^{\localcode},\bc\in \localcode.
        \end{align*}
\end{lemma}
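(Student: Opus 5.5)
Both identities follow directly from the coset symmetry relation \eqref{eq: povm coset symmetry relation}, by choosing the representatives $\bw_1,\bw_2$ suitably and then reindexing the syndrome sum. Optimality of the POVM plays no role; only the relation \eqref{eq: povm coset symmetry relation} is used.

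For the first identity, fix $\subspace\in\codesubspaces^{\localcode}$, $\bs\in\Image(H_{\subspace})$ and $\bv\in\subspace$. Apply \eqref{eq: povm coset symmetry relation} with $\bs_1=\bs_2=\bs$, choosing any $\bw_1\in\subspace_{\bs}$ and setting $\bw_2=\bw_1+\bv$. Since $H_{\subspace}\bv=\mzero$, we have $H_{\subspace}\bw_2=\bs$, so $\bw_2\in\subspace_{\bs}$ as well. The relation then reads $\Pi_{\subspace,\bs}=X^{\bv}\Pi_{\subspace,\bs}X^{-\bv}$, which is the claim. The only care needed is that the hypothesis is stated for \emph{all} pairs of representatives, including pairs from the same coset.

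For the second identity, fix $\bc\in\localcode$ and write
\[
X^{\bc}\Pi_{\subspace}X^{-\bc}=\sum_{\bs\in\Image(H_{\subspace})}X^{\bc}\Pi_{\subspace,\bs}X^{-\bc}.
\]
For each $\bs$, pick $\bw_1\in\subspace_{\bs}$ and set $\bw_2=\bw_1+\bc$. Then $\bw_2\in\localcode$ because $\localcode$ is linear, and $\bw_2\in\subspace_{\bs+H_{\subspace}\bc}$. Relation \eqref{eq: povm coset symmetry relation} gives $X^{\bc}\Pi_{\subspace,\bs}X^{-\bc}=\Pi_{\subspace,\bs+H_{\subspace}\bc}$. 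Since $H_{\subspace}\bc\in\Image(H_{\subspace})$ and $\Image(H_{\subspace})$ is a group, the map $\bs\mapsto\bs+H_{\subspace}\bc$ is a bijection of $\Image(H_{\subspace})$. Reindexing the sum therefore returns $\Pi_{\subspace}$.

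The only step requiring any attention is confirming that the translated representative lands in the intended coset, i.e.\ that $H_{\subspace}(\bw_1+\bc)=\bs+H_{\subspace}\bc$. This holds by linearity of $H_{\subspace}$ on $\localcode$, so I expect no real obstacle.
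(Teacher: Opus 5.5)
Your proposal is correct and follows essentially the same route as the paper: invariance of $\Pi_{\subspace,\bs}$ under $X^{\bv}$ for $\bv\in\subspace$ comes from applying the coset relation to two representatives of the same coset. Invariance of $\Pi_{\subspace}$ then comes from the identity $X^{\bc}\Pi_{\subspace,\bs}X^{-\bc}=\Pi_{\subspace,\bs+H_{\subspace}\bc}$ followed by reindexing the syndrome sum over $\Image(H_{\subspace})$. Your choice $\bw_2=\bw_1+\bc$ obtains that shifted-syndrome identity directly from the hypothesis, slightly more cleanly than the paper's appeal to both relations, but the argument is the same.
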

\begin{proof}
For $\bw_{1},\bw_{2}\in \subspace_{\bs}$ and a subspace $\subspace\in \codesubspaces^{\localcode}$ and $\bs\in \Image(H_{\subspace})$, we obtain 
\begin{align}\label{eq: povm commuting in subspace}
    X^{\bw_{2}-\bw_{1}}\Pi_{\subspace,\bs}X^{-(\bw_{2}-\bw_{1})}=\Pi_{\subspace,\bs}.
\end{align}
Since $\bw_{1},\bw_{2}\in \subspace_{\bs}\implies \bw_{2}-\bw_{1}\in \subspace$ and the equation \eqref{eq: povm commuting in subspace} holds for all pairs $\bw_{1},\bw_{2}\in \subspace_{\bs}$, it follows that $\Pi_{\subspace,\bs}  =X^{\bv}\Pi_{\subspace,\bs}X^{-\bv}$,  $\forall\subspace\in \codesubspaces^{\localcode},\bs\in\Image(H_{\subspace}),\bv\in \subspace$.
For any codeword $\bc\in \localcode$, using the relation in \eqref{eq: povm coset symmetry relation} and \eqref{eq: povm commuting in subspace}, for some $\subspace\in \codesubspaces^{\localcode}$ and $\bs\in \Image(H_{\subspace})$ this yields
\begin{align}
    X^{\bc}\Pi_{\subspace,\bs}X^{-\bc}=\Pi_{\subspace,\bs+H_{\subspace}\bc}.
\end{align}
Therefore, 
\begin{align*}
    X^{\bc}\Pi_{\subspace}X^{-\bc} &\ =\sum_{\bs\in \Image(H_{\subspace})}\Pi_{\subspace,\bs+H_{\subspace}\bc}\\
    &\ = \sum_{\bs'\in \Image(H_{\subspace})}\Pi_{\subspace,\bs'}\\
    &\ =\Pi_{\subspace}.
\end{align*}
Since it holds for any $\subspace\in \codesubspaces^{\localcode}$ and $\bc\in \localcode$, it completes the proof.
\end{proof}
By Lemma~\ref{lem :coset symmetry}, we may fix an optimal POVM $\fullpovm\in \sdpfeasible$ that satisfies the coset symmetry relation. Then Lemma~\ref{lem: povm X symmetry relations} implies that the objective in Eq.~\eqref{SDP main} can be rewritten as
\begin{align}
    \fsdp(\fullpovm) &\ =\frac{1}{|\localcode|}\sum_{\bc\in \localcode}\sum_{\subspace\in \codesubspaces^{\localcode},\bs\in\Image(H_{\subspace})}\Tr(\Pi_{\subspace,\bs}\ket{\psi_{\bc}}\bra{\psi_{\bc}})\reward(\subspace)\\
    &\ = \frac{1}{|\localcode|}\sum_{\bc\in \localcode}\sum_{\subspace\in \codesubspaces^{\localcode}}\Tr(\Pi_{\subspace}\ket{\psi_{\bc}}\bra{\psi_{\bc}})\reward(\subspace)\\
    &\ =\sum_{\subspace\in \codesubspaces^{\localcode}}\Tr(\Pi_{\subspace}\ket{\psi_{\mzero}}\bra{\psi_{\mzero}})\reward(\subspace).
\end{align}

\noindent After symmetrizing the optimal POVM as in Lemma~\ref{lem :coset symmetry},
we may rewrite the objective in \eqref{SDP main} as
\[
\sum_{\subspace\in \codesubspaces^{\localcode}}
\Tr(\Pi_{\subspace}\ket{\psi_{\mzero}}\bra{\psi_{\mzero}})
\reward(\subspace),
\]
while retaining the 
affine filtering constraints on the operators
$\Pi_{\subspace,\bs}$. Equivalently, we may work with
 \begin{align}{\label{SDP state independence}}
        \text{Maximize}  \quad &\   \sum_{\subspace\in \codesubspaces^{\localcode}}\Tr(\Pi_{\subspace}\ket{\psi_{\mzero}}\bra{\psi_{\mzero}})\reward(\subspace)\\
 \text{Subject to:} \quad &\ \Tr\left(\Pi_{\subspace,\bs}\ket{\psi_{\bc}}\bra{\psi_{\bc}}\right)=0 \quad \text{if $\mathbf{c}\notin \subspace_{\bs}$} \nonumber \\
    &\ \sum_{\subspace\in \codesubspaces^{\localcode}}\Pi_{\subspace}+\Pi_e=\mI \nonumber\\
    &\ \Pi_{\subspace,\bs}\succeq 0, \forall \subspace\in \codesubspaces^{\localcode},\bs\in\Image(H_{\subspace})\nonumber\\&\ \Pi_{e}\succeq 0.
    \end{align}

\noindent    
This removes the explicit average over transmitted codewords from the objective, but the individual affine unambiguity constraints remain to be handled.
  

 \subsection{Restricting POVM to Subspace Spanned by Codeword Quantum States}
 In this section, we show that the informative POVM elements can be restricted
to the subspace spanned by the codeword quantum states without losing optimality.
Let $\mathcal{W}\subset \mathcal{H}^{\otimes |I|}$ be the vector space associated with the codeword states for the local code $\localcode$ which satisfies
\begin{align*}
  \mathcal{W}  =\text{span}\left(\{\ket{\psi_{\mathbf{c}}}\}_{\mathbf{c}\in \localcode}\right).
\end{align*} 
Let $\mathcal{L}(\mathcal{W})=\{T:\mathcal{W}\rightarrow \mathcal{W}\}$ denote the space of linear maps in the complex vector space $\mathcal{W}$.

\begin{lemma}[POVM Space]\label{lem: POVM space} There exists a POVM  $\fullpovm$ which achieves the optimal objective for the SDP problem in \eqref{SDP main} such that  $\Pi_{\subspace,\bs}\in \mathcal{L}(\mathcal{W})$.
    
\end{lemma}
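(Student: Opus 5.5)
Start from an optimal POVM $\fullpovmtilde$ of \eqref{SDP main} and compress it onto $\mathcal W$. Let $P_{\mathcal W}$ be the orthogonal projector onto $\mathcal W=\mathrm{span}\{\ket{\psi_{\bc}}\}_{\bc\in\localcode}$. Define
\begin{align*}
\Pi_{\subspace,\bs}\coloneqq P_{\mathcal W}\tilde{\Pi}_{\subspace,\bs}P_{\mathcal W},\qquad
\Pi_e\coloneqq \mI-\sum_{\subspace\in \codesubspaces^{\localcode},\bs\in\Image(H_{\subspace})}\Pi_{\subspace,\bs}.
\end{align*}
Each $\Pi_{\subspace,\bs}$ is supported on $\mathcal W$, so it defines an element of $\mathcal L(\mathcal W)$ (extended by zero on $\mathcal W^{\perp}$).

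I will check feasibility and optimality in order.

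\textbf{Positivity.} $\Pi_{\subspace,\bs}\succeq 0$ because a congruence of a PSD operator is PSD.

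\textbf{Objective and unambiguity.} Since every $\ket{\psi_{\bc}}\in\mathcal W$, we have $P_{\mathcal W}\ket{\psi_{\bc}}=\ket{\psi_{\bc}}$. Hence
\[
\Tr(\Pi_{\subspace,\bs}\ket{\psi_{\bc}}\bra{\psi_{\bc}})=\Tr(\tilde\Pi_{\subspace,\bs}\ket{\psi_{\bc}}\bra{\psi_{\bc}})\quad\text{for all }\bc,\subspace,\bs.
\]
This gives two things at once. The unambiguity constraints $\Tr(\Pi_{\subspace,\bs}\ket{\psi_{\bc}}\bra{\psi_{\bc}})=0$ for $\bc\notin\subspace_{\bs}$ are inherited. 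The objective $\fsdp$, which depends on the POVM only through these traces, is unchanged.

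\textbf{Completeness.} This is the one step that needs an argument: we must show $\Pi_e\succeq0$. I use that $\sum_{\subspace,\bs}\tilde\Pi_{\subspace,\bs}\preceq\mI$, since $\tilde\Pi_e\succeq 0$. Conjugating by $P_{\mathcal W}$ preserves the Loewner order, so
\[
\sum_{\subspace,\bs}\Pi_{\subspace,\bs}=P_{\mathcal W}\Big(\sum_{\subspace,\bs}\tilde\Pi_{\subspace,\bs}\Big)P_{\mathcal W}\preceq P_{\mathcal W}\preceq\mI.
\]
Therefore $\Pi_e\succeq0$, and in fact $\Pi_e\succeq \mI-P_{\mathcal W}$, meaning it contains the full projector onto $\mathcal W^\perp$.

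So the compressed POVM is feasible, attains the optimal value, and has all conclusive elements in $\mathcal L(\mathcal W)$.

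\textbf{Compatibility with the later reductions.} If we want the resulting POVM to also satisfy the coset symmetry of Lemma~\ref{lem :coset symmetry}, we can apply that lemma first and then compress. This works because $X^{\bw}$ maps $\ket{\psi_{\bc}}$ to $\ket{\psi_{\bc+\bw}}$ for $\bw\in\localcode$. Thus $\mathcal W$ is invariant under $X^{\bw}$, so $P_{\mathcal W}$ commutes with $X^{\bw}$, and the conjugation relations survive the compression.

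The only real obstacle is the completeness step above. It is resolved by the order-preservation of $A\mapsto P_{\mathcal W}AP_{\mathcal W}$ together with $P_{\mathcal W}\preceq\mI$.
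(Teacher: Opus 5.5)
Your proof is correct and follows the paper's argument essentially verbatim. You compress each conclusive element by $P_{\mathcal W}$, observe that traces against the codeword states (and hence the unambiguity constraints and the objective) are unchanged, and use $P_{\mathcal W}\bigl(\sum\tilde\Pi_{\subspace,\bs}\bigr)P_{\mathcal W}\preceq \mI$ to keep the erasure element positive. Your closing remark that $P_{\mathcal W}$ commutes with every $X^{\bw}$, $\bw\in\localcode$, is a useful addition that the paper leaves implicit, since its later lemmas need the coset symmetry and the restriction to $\mathcal W$ to hold simultaneously.
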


\begin{proof}

    Let $\fullpovmtilde$ be the POVM that achieves the objective described in \eqref{SDP main}.  Let $P_{\mathcal{W}}$  be the projection operator onto $\mathcal{W}$. Then
    \begin{align*}
        P_{\mathcal{W}}\ket{\psi_{\bc}}=\ket{\psi_{\bc}}.
    \end{align*}
    Let us define a POVM $\fullpovm$ such that 
    \begin{align*}
        \Pi_{\subspace,\bs}=P_{\mathcal{W}}\tilde{\Pi}_{\subspace,\bs}P_{\mathcal{W}},
    \end{align*}
    and
\begin{align*}
    \Pi_e \coloneqq \mI-\sum_{\subspace\in \codesubspaces^{\localcode},
    \bs\in\Image(H_{\subspace})}\Pi_{\subspace,\bs}.
\end{align*}
    Therefore $\Pi_{\subspace,\bs}\succeq 0$ and
    \begin{align*}
       \sum_{\subspace\in \codesubspaces^{\localcode},\bs\in\Image(H_{\subspace})}\Pi_{\subspace,\bs}=P_{\mathcal{W}}(\sum_{\subspace\in \codesubspaces^{\localcode},\bs\in\Image(H_{\subspace})}\tilde{\Pi}_{\subspace,\bs}) P_{\mathcal{W}}\preceq \mI,
    \end{align*}
which implies $\Pi_e\succeq \mzero$.
The trace constraints are also preserved:
\begin{align*}
    \Tr(\Pi_{\subspace,\bs}\ket{\psi_{\bc}}\bra{\psi_{\bc}}) &\ =\Tr(\tilde{\Pi}_{\subspace,\bs}P_{\mathcal{W}}\ket{\psi_{\bc}}\bra{\psi_{\bc}}P_{\mathcal{W}})\\
    &\ =\Tr(\tilde{\Pi}_{\subspace,\bs}\ket{\psi_{\bc}}\bra{\psi_{\bc}}).
\end{align*}
Therefore, if $\bc\notin \subspace_{\bs}$, $\Tr(\Pi_{\subspace,\bs}\ket{\psi_{\bc}}\bra{\psi_{\bc}})=0$ and the overall objective remains unchanged:
\begin{align*}
    \fsdp(\fullpovm)=\fsdp(\fullpovmtilde).
\end{align*}
Consequently, the POVM $\fullpovm$ achieves the optimal objective while satisfying $\Pi_{\subspace,\bs}\in \mathcal{L}(\mathcal{W})$ $\forall \subspace\in \codesubspaces^{\localcode},\bs\in\Image(H_{\subspace})$.
\end{proof}

\noindent For an arbitrary affine subspace $\subspace_{\bs}\subset \localcode$, let $\sim \subspace_{\bs}=\localcode\setminus \subspace_{\bs}$ and define the vector space 
\begin{align*}
  \mathcal{W}_{\sim \subspace_{\bs}}= \mathcal{W}_{\localcode\setminus \subspace_{\bs}}=\text{span}\left(\{\ket{\psi_{\mathbf{c}}}\}_{\mathbf{c}\in \localcode\setminus \subspace_{\bs}}\right).
\end{align*}

\noindent Then we can write the vector space 
\begin{align*}
    \mathcal{W}_{\sim \subspace_{\bs}}^{\perp}\cap \mathcal{W}=\text{span}\left(\{\ket{\psi_{\bc}}\}_{\bc\in \sim \subspace_{\bs}}\right)^{\perp}\cap \text{span}\left(\{\ket{\psi_{\bc}}\}_{\bc\in \localcode}\right).
\end{align*}
\begin{corollary}\label{corol:povm space}
There exists a POVM $\fullpovm$ that achieves the optimal objective value for the SDP problem in \eqref{SDP main} such that  $\Pi_{\subspace,\bs}\in \mathcal{L}(\mathcal{W}_{\sim \subspace_{\bs}}^{\perp}\cap \mathcal{W})$ $\forall \subspace\in \codesubspaces^{\localcode},\bs\in\Image(H_{\subspace})$.
\end{corollary}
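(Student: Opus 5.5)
Starting from Lemma~\ref{lem: POVM space}, we fix an optimal POVM $\fullpovm$ with every $\Pi_{\subspace,\bs}\in\mathcal{L}(\mathcal{W})$. More precisely, $\Pi_{\subspace,\bs}=P_{\mathcal W}\tilde\Pi_{\subspace,\bs}P_{\mathcal W}$, so each operator is PSD and supported on $\mathcal W$. It then suffices to show that such an operator, because it satisfies the unambiguity constraint, already acts within the smaller space $\mathcal{W}_{\sim \subspace_{\bs}}^{\perp}\cap \mathcal{W}$. No further modification of the POVM is needed, so feasibility and the objective value are inherited unchanged.

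The key step is the standard fact that a PSD operator with zero expectation on a vector annihilates that vector. For $\bc\notin\subspace_{\bs}$ the constraint gives $\bra{\psi_{\bc}}\Pi_{\subspace,\bs}\ket{\psi_{\bc}}=0$. Writing $\Pi_{\subspace,\bs}=A^\dagger A$ with $A=\Pi_{\subspace,\bs}^{1/2}$, this says $\|A\ket{\psi_{\bc}}\|^2=0$. Hence $\Pi_{\subspace,\bs}\ket{\psi_{\bc}}=0$ for every $\bc\in\sim\subspace_{\bs}$, and by linearity $\Pi_{\subspace,\bs}$ vanishes on all of $\mathcal{W}_{\sim\subspace_{\bs}}$.

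Next we combine this with the support on $\mathcal W$. Because $\mathcal{W}_{\sim\subspace_{\bs}}\subseteq\mathcal W$, we can decompose $\mathcal W=\mathcal{W}_{\sim\subspace_{\bs}}\oplus(\mathcal{W}_{\sim\subspace_{\bs}}^{\perp}\cap\mathcal W)$. The operator $\Pi_{\subspace,\bs}$ is Hermitian, vanishes on $\mathcal W^\perp$, and vanishes on $\mathcal{W}_{\sim\subspace_{\bs}}$. Its kernel therefore contains the orthogonal complement of $\mathcal{W}_{\sim\subspace_{\bs}}^{\perp}\cap\mathcal W$. Hermiticity then puts its range inside $\mathcal{W}_{\sim\subspace_{\bs}}^{\perp}\cap\mathcal W$. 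Equivalently, $\Pi_{\subspace,\bs}=P\Pi_{\subspace,\bs}P$, where $P$ is the projector onto that intersection. This is exactly the claim $\Pi_{\subspace,\bs}\in\mathcal{L}(\mathcal{W}_{\sim \subspace_{\bs}}^{\perp}\cap \mathcal{W})$.

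The only delicate point is the positivity argument turning a trace condition into an operator annihilation condition; everything else is bookkeeping. If one also wants the coset symmetry of Lemma~\ref{lem :coset symmetry} to hold at the same time, apply the symmetrization first and then the projection $P_{\mathcal W}$. The projection preserves the symmetry because $X^{\bw}$ for $\bw\in\localcode$ maps $\mathcal W$ to itself, so it commutes with $P_{\mathcal W}$.
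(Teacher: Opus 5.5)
Your proposal is correct and follows essentially the same route as the paper. You apply Lemma~\ref{lem: POVM space}, use positivity to turn $\bra{\psi_{\bc}}\Pi_{\subspace,\bs}\ket{\psi_{\bc}}=0$ into $\Pi_{\subspace,\bs}\ket{\psi_{\bc}}=0$ for $\bc\notin\subspace_{\bs}$, and combine the resulting range condition with support on $\mathcal{W}$. Your closing remark, that $X^{\bw}$ commutes with $P_{\mathcal W}$ so coset symmetry survives the projection, is a useful detail the paper leaves implicit when it later uses both properties together.
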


\begin{proof}
The USD constraints imply that
\[
\bra{\psi_{\bc}}\Pi_{\subspace,\bs}\ket{\psi_{\bc}}=0,
\qquad \forall \bc\notin \subspace_{\bs}.
\]
Since $\Pi_{\subspace,\bs}\succeq \mzero$, this implies
$\Pi_{\subspace,\bs}\ket{\psi_{\bc}}=\mzero$ for all
$\bc\notin \subspace_{\bs}$. It follows that
\[
\text{Im}(\Pi_{\subspace,\bs})
\subseteq \mathcal{W}^{\perp}_{\sim \subspace_{\bs}}.
\]
By Lemma~\ref{lem: POVM space}, we may furthermore take
$\Pi_{\subspace,\bs}\in\mathcal{L}(\mathcal{W})$. Combining the two facts gives
\[
\Pi_{\subspace,\bs}\in
\mathcal{L}\bigl(\mathcal{W}^{\perp}_{\sim \subspace_{\bs}}\cap \mathcal{W}\bigr),
\]
while preserving the same objective value.
\end{proof}
\subsection[]%
{Diagonalization of the Operator $\Pi_{\subspace}$}
In this section, we show that the operators $\{\Pi_{\subspace}\}_{\subspace\in \codesubspaces^{\localcode}}$ are diagonal in a carefully constructed basis due to
invariance under the $X^{\bc}$ transformations $\forall\bc\in \localcode$. For this, we use characters $\chi\in \wC$.
More specifically, for the collection of quantum states $\{\ket{\psi_{\bc}}\}_{\bc\in \localcode}$ and the character $\chi\in \wC$, define the
unnormalized character vector by
\begin{align*}
    \ket{\tilde{\chi}}\coloneqq\frac{1}{\sqrt{|\localcode|}}\sum_{\bc\in \localcode}\chi(\bc)\ket{\psi_{\bc}}.
\end{align*}
\begin{lemma}
   For the character vectors $\{\ket{\tilde{\chi}}\}_{\chi\in \wC}$ associated with the collection of quantum states $\{\ket{\psi_{\bc}}\}_{\bc\in \localcode}$, the following holds
    \begin{align}
        \braket{\tilde{\chi}_{1}|\tilde{\chi}_{2}}=\kappa_{\chi_1}\delta_{\chi_1,\chi_2}\label{eq:innerproductCharacterStates},
    \end{align}
    where 
\[
\kappa_{\chi}\coloneqq \sum_{\bc\in \localcode} g(\bc)\chi(\bc),
\]
and $g(\bc)\coloneqq \braket{\psi_{\mzero}|\psi_{\bc}}$.
\end{lemma}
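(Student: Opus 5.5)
Expand the inner product directly from the definition:
\[
\braket{\tilde{\chi}_1|\tilde{\chi}_2}
=\frac{1}{|\localcode|}\sum_{\bc_1,\bc_2\in\localcode}\overline{\chi_1(\bc_1)}\,\chi_2(\bc_2)\braket{\psi_{\bc_1}|\psi_{\bc_2}}.
\]
The first step is to show that the Gram matrix is translation invariant, i.e.\ $\braket{\psi_{\bc_1}|\psi_{\bc_2}}=g(\bc_2-\bc_1)$. This follows from \eqref{def: quantum state}. We have $\ket{\psi_{\bc}}=X^{\bc}\ket{\psi_{\mzero}}$ with $\ket{\psi_{\mzero}}=\QFT^{\otimes|I|}\ket{\psi}$. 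Since $X^{\bc_1}$ is unitary and $X^{-\bc_1}X^{\bc_2}=X^{\bc_2-\bc_1}$ (the shift operators form an abelian group representation of $\mathbb{F}_q^{|I|}$), we get $\braket{\psi_{\bc_1}|\psi_{\bc_2}}=\bra{\psi_{\mzero}}X^{\bc_2-\bc_1}\ket{\psi_{\mzero}}=g(\bc_2-\bc_1)$. Here $\bc_2-\bc_1\in\localcode$ because $\localcode$ is linear.

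Next, substitute $\bc=\bc_2-\bc_1$ and keep $\bc_1$ as the free summation variable. Since $\localcode$ is a group, for fixed $\bc_1$ this change of variables is a bijection on $\localcode$. By multiplicativity, $\chi_2(\bc_2)=\chi_2(\bc_1)\chi_2(\bc)$, so the double sum factors as
\[
\frac{1}{|\localcode|}\Bigl(\sum_{\bc_1\in\localcode}\overline{\chi_1(\bc_1)}\chi_2(\bc_1)\Bigr)\Bigl(\sum_{\bc\in\localcode}g(\bc)\chi_2(\bc)\Bigr).
\]
By item~1 of Lemma~\ref{lem:character-identities}, the first factor equals $|\localcode|\delta_{\chi_1,\chi_2}$. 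The expression therefore reduces to $\delta_{\chi_1,\chi_2}\kappa_{\chi_2}$, which is $\kappa_{\chi_1}\delta_{\chi_1,\chi_2}$, as claimed.

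No step is a real obstacle. The only points needing care are conventions: the bra carries the complex conjugate $\overline{\chi_1}$, and $g$ must be evaluated at $\bc_2-\bc_1$ rather than $\bc_1-\bc_2$ so that the result matches the stated definition $\kappa_\chi=\sum_{\bc} g(\bc)\chi(\bc)$ and not its conjugate-index variant. Translation invariance of the Gram matrix, which is the circulant structure the paper emphasizes, is the key input.
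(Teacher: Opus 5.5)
Your proposal is correct and follows essentially the same route as the paper. Both arguments derive the circulant structure $\braket{\psi_{\bc_1}|\psi_{\bc_2}}=g(\bc_2-\bc_1)$ from $\ket{\psi_{\bc}}=X^{\bc}\QFT^{\otimes|I|}\ket{\psi}$, then reparametrize $\bc_2=\bc_1+\bc$ to factor the double sum, and finally apply item~1 of the character orthogonality relations to obtain $\kappa_{\chi_1}\delta_{\chi_1,\chi_2}$.
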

\begin{proof}
   
For the Gram matrix $G$ for the collection of quantum states $\{\ket{\psi_{\bc}}\}_{\bc\in \localcode}$, the following holds
\begin{align*}
    G(\bc,\bc') &\ =\braket{\psi_{\bc}|\psi_{\bc'}}\\
    &\ = \bra{\psi}(\QFT^{\dagger})^{\otimes |I|}X^{-\bc}X^{\bc'}\QFT^{\otimes |I|}\ket{\psi}\\
    &\ = \bra{\psi}(\QFT^{\dagger})^{\otimes |I|}X^{\bc'-\bc} \QFT^{\otimes |I|}\ket{\psi}\\
    & = \braket{\psi_{\mzero}|\psi_{\bc'-\bc}}\\
    &\ = g( \bc'-\bc).
\end{align*}
Therefore the Gram matrix element $G(\bc,\bc')$ depends only on $\bc'-\bc$.
In addition
\begin{align*}
    \braket{\tilde{\chi}_{1}|\tilde{\chi}_{2}} &\ =\frac{1}{|\localcode|}\sum_{\bc,\bc'\in \localcode}\chi_{1}^{-1}(\bc)\chi_{2}(\bc')\braket{\psi_{\bc}|\psi_{\bc'}}\\
    &\ =\frac{1}{|\localcode|}\sum_{\bc,\bc'\in \localcode}\chi_{1}^{-1}(\bc)\chi_{2}(\bc')g( \bc'-\bc)\\
    &\ =\frac{1}{|\localcode|}\sum_{\bc,\ba\in \localcode}\chi_{1}^{-1}(\bc)\chi_{2}(\bc+\ba)g(\ba)\\
    &\ = \frac{1}{|\localcode|}\sum_{\ba\in \localcode}g(\ba)\chi_{2}(\ba)\left(\sum_{\bc\in \localcode}\chi_{1}^{-1}(\bc)\chi_{2}(\bc)\right).
\end{align*}
The characters also satisfy 
\begin{align*}
    \sum_{\bc\in \localcode}\chi_{1}^{-1}(\bc)\chi_{2}(\bc)=|\localcode|\delta_{\chi_1,\chi_2}.
\end{align*}
Therefore, this yields
\begin{align*}
    \braket{\tilde{\chi}_{1}|\tilde{\chi}_{2}}=\kappa_{\chi_1}\delta_{\chi_1,\chi_2}, 
\end{align*}
where we use the fact that $\kappa_{\chi_1}=\sum_{\bc\in \localcode}g(\bc)\chi_{1}(\bc)$.
  \end{proof}
\noindent We define the normalized character vectors 
$\ket{\chi}$ as $\ket{\chi}=\frac{\ket{\tilde{\chi}}}{\sqrt{\kappa_{\chi}}}$ whenever $\kappa_{\chi}>0$. When $G$ is full rank, i.e., the states $\{\ket{\psi_{\bc}}\}_{\bc\in \localcode}$ are linearly independent, this implies  $\kappa_{\chi}>0$ for all $\chi\in \wC$ and the normalized character vectors are well defined.

\begin{lemma}\label{lem:Gram eigenvalues}
Consider the collection of symmetric quantum states $\{\ket{\psi_{\bc}}\}_{\bc\in \localcode}$, whose Gram matrix $G$ is circulant, i.e. $G(\bc,\bc')$ depends only on the difference $\bc'-\bc$. Then $\{\kappa_{\chi}\}_{\chi\in \wC}$ are the eigenvalues of $G$.

\end{lemma}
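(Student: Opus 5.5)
The plan is to exhibit, for each character, an explicit eigenvector of $G$: for each $\chi\in\wC$, take the vector $\bv_\chi\in\mathbb{C}^{|\localcode|}$ with entries $\bv_\chi(\bc')=\chi(\bc')$. Since $|\wC|=|\localcode|$ and, by the first identity of Lemma~\ref{lem:character-identities}, the vectors $\{\bv_\chi\}_{\chi\in\wC}$ are pairwise orthogonal, they form a basis of $\mathbb{C}^{|\localcode|}$. It therefore suffices to show $G\bv_\chi=\lambda_\chi\bv_\chi$ for each $\chi$ and then identify $\lambda_\chi$ with $\kappa_\chi$.

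The main step is a direct computation using circularity. Write $G(\bc,\bc')=g(\bc'-\bc)$, as established in the proof of the previous lemma. Then
\begin{align*}
(G\bv_\chi)(\bc)=\sum_{\bc'\in\localcode} g(\bc'-\bc)\chi(\bc').
\end{align*}
Substituting $\ba=\bc'-\bc$, which is a bijection of $\localcode$ because $\localcode$ is a group, and using multiplicativity $\chi(\bc+\ba)=\chi(\bc)\chi(\ba)$, this becomes
\begin{align*}
\chi(\bc)\sum_{\ba\in\localcode}g(\ba)\chi(\ba)=\kappa_\chi\,\chi(\bc).
\end{align*}
Hence $G\bv_\chi=\kappa_\chi\bv_\chi$. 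Because the $\bv_\chi$ form a complete orthogonal eigenbasis, the multiset $\{\kappa_\chi\}_{\chi\in\wC}$ is exactly the spectrum of $G$, counted with multiplicity.

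There is no real obstacle here; the only care needed is the bookkeeping of the completeness and multiplicity claim. That follows from orthogonality together with the count $|\wC|=|\localcode|$. As a cross-check, $\kappa_\chi=\braket{\tilde\chi|\tilde\chi}\ge 0$ by the previous lemma, which is consistent with $G$ being positive semidefinite. It also yields the remark that $G$ has full rank if and only if every $\kappa_\chi>0$.
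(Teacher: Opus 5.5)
Your proposal is correct and follows essentially the same route as the paper. The paper also takes the character vectors $\ket{v_{\chi}}=\frac{1}{\sqrt{|\localcode|}}\sum_{\bc'}\chi(\bc')\ket{\bc'}$, uses $G(\bc,\bc')=g(\bc'-\bc)$ with the substitution $\ba=\bc'-\bc$ to get $G\ket{v_{\chi}}=\kappa_{\chi}\ket{v_{\chi}}$, and invokes character orthogonality; your explicit use of $|\wC|=|\localcode|$ to get a complete eigenbasis, so that the spectrum is exactly $\{\kappa_{\chi}\}$ with multiplicity, spells out a step the paper leaves implicit.
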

\begin{proof}
    Using the canonical states $\{\ket{\bc}\}_{\bc\in \localcode}$, define the vector $\{\ket{v_{\chi}}\}_{\chi\in \wC}$ by
    \begin{align*}
        \ket{v_{\chi}}\coloneqq \frac{1}{\sqrt{|\localcode|}}\sum_{\bc'\in \localcode}\chi(\bc')\ket{\bc'}.
    \end{align*}
    Let $G(\bc,:)$ denote the row of the Gram matrix $G$ indexed by codeword $\bc$. Then we obtain 
    \begin{align*}
        G(\bc,:)\ket{v_{\chi}} & =\frac{1}{\sqrt{|\localcode|}}\sum_{\bc'\in \localcode}g(\bc'-\bc)\chi(\bc')\\
        & = \frac{\chi(\bc)}{\sqrt{|\localcode|}}\sum_{\bc'\in \localcode}g(\bc'-\bc)\chi(\bc'-\bc)\\
        & = \frac{\chi(\bc)}{\sqrt{|\localcode|}}\sum_{\ba\in \localcode}g(\ba)\chi(\ba)\\
        & = \kappa_{\chi}\frac{\chi(\bc)}{\sqrt{|\localcode|}}.
    \end{align*}
    Therefore, the vector $\ket{v_{\chi}}$ satisfies 
    \begin{align*}
        G\ket{v_{\chi}}=\frac{\kappa_{\chi}}{\sqrt{|\localcode|}}\sum_{\bc\in\localcode}\chi(\bc)\ket{\bc}= \kappa_{\chi}\ket{v_{\chi}}, \quad \forall \chi\in \wC.
    \end{align*}
For two characters $\chi_1,\chi_2\in \wC$, from the character orthogonality relation it follows that
\begin{align*}
    \braket{v_{\chi_1}|v_{\chi_2}}=\frac{1}{|\localcode|}\sum_{\bc\in \localcode}\chi_{1}^{-1}(\bc)\chi_{2}(\bc)=\delta_{\chi_1,\chi_2}.
\end{align*}
Hence, $\{v_{\chi}\}_{\chi\in\wC}$ are the eigenvectors of the Gram matrix $G$ with eigenvalues $\{\kappa_{\chi}\}_{\chi\in \wC}$. 
\end{proof}
\noindent Since the Gram matrix is PSD, the eigenvalues satisfy $\kappa_{\chi}\geq 0$ $\forall \chi \in \wC$. For the remainder of this subsection, we assume the codeword quantum states $\{\ket{\psi_{\bc}}\}_{\bc\in \localcode}$ are linearly independent, which makes the Gram matrix positive definite and $\kappa_{\chi}> 0$ $\forall \chi\in \wC$.  Next lemma shows that these eigenvalues are always bounded. 
\begin{lemma}
    For the collection of quantum states $\{\ket{\psi_{\bc}}\}_{\bc\in \localcode}$, consider the Gram matrix $G$ with eigenvalues $\{\kappa_{\chi}\}_{\chi\in \wC}$. Then the following relation holds
    \begin{align*}
        \sum_{\chi\in \wC}\kappa_{\chi}=|\localcode|.
    \end{align*}
\end{lemma}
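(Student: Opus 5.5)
The plan is to compute the trace of the Gram matrix $G$ in two ways. By Lemma~\ref{lem:Gram eigenvalues}, the numbers $\{\kappa_{\chi}\}_{\chi\in\wC}$ are exactly the eigenvalues of $G$, with eigenvectors $\{\ket{v_\chi}\}_{\chi\in\wC}$. These eigenvectors form an orthonormal basis of $\mathbb{C}^{|\localcode|}$, because there are $|\wC|=|\localcode|$ of them and they are orthonormal. Hence $\Tr(G)=\sum_{\chi\in\wC}\kappa_\chi$. Directly, however, $\Tr(G)=\sum_{\bc\in\localcode}\braket{\psi_{\bc}|\psi_{\bc}}=|\localcode|$. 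Each $\ket{\psi_{\bc}}=X^{\bc}\QFT^{\otimes|I|}\ket{\psi}$ is a unit vector, since it is a unitary image of $\ket{\psi}$ and $\sum_{\bx}|\alpha_{\bx}|^2=1$. Equating the two expressions gives the claim.

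As an independent check that avoids the eigenbasis, I would also argue from the definition $\kappa_\chi=\sum_{\bc}g(\bc)\chi(\bc)$. Exchanging the sums gives
\[
\sum_{\chi\in\wC}\kappa_\chi=\sum_{\bc\in\localcode}g(\bc)\sum_{\chi\in\wC}\chi(\bc).
\]
By the character orthogonality of Lemma~\ref{lem:character-identities}, the inner sum equals $|\localcode|\,\delta_{\bc,\mzero}$. This is item~2 applied with $\subspace=\{\mzero\}$, whose annihilator is all of $\wC$; equivalently, it is the dual form of item~1. The whole expression therefore reduces to $|\localcode|\,g(\mzero)=|\localcode|\braket{\psi_{\mzero}|\psi_{\mzero}}=|\localcode|$.

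There is no real obstacle here. The only point needing care is the completeness step: we must know that $\{\kappa_\chi\}$ is the full multiset of eigenvalues of $G$, not merely some of them. This holds because the $|\localcode|$ orthonormal vectors $\ket{v_\chi}$ form a complete basis. Note also that item~2 of Lemma~\ref{lem:character-identities} is stated for proper subspaces, and $\{\mzero\}$ is proper whenever $\dim\localcode\ge1$. In any case, $\sum_{\chi}\chi(\bc)=|\localcode|\delta_{\bc,\mzero}$ is the standard dual orthogonality relation, which I would cite directly.
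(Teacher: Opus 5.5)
Your proof is correct and is the same as the paper's, which computes $\Tr(G)$ two ways: as $\sum_{\chi\in\wC}\kappa_\chi$ using Lemma~\ref{lem:Gram eigenvalues}, and directly as $\sum_{\bc\in\localcode}\braket{\psi_{\bc}|\psi_{\bc}}=|\localcode|$. Your completeness remark on the eigenbasis $\{\ket{v_\chi}\}$ fills in a step the paper leaves implicit, and your character-orthogonality cross-check via $\sum_{\chi\in\wC}\chi(\bc)=|\localcode|\delta_{\bc,\mzero}$ is also valid.
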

\begin{proof}
    This follows directly from the trace relation 
    \begin{align*}
        \sum_{\chi\in \wC}\kappa_{\chi} =\Tr(G)=\sum_{\bc}G(\bc,\bc)=\sum_{\bc\in \localcode}\braket{\psi_{\bc}|\psi_{\bc}}=|\localcode|.
    \end{align*}
\end{proof}
\noindent Next, we prove that the character states $\{\ket{\chi}\}_{\chi\in \wC}$ span the space spanned by codeword states $\{\ket{\psi_{\bc}}\}_{\bc\in \localcode}$.
\begin{lemma}
    For character vectors $\characterbasis$, 
    \begin{align*}
        \sum_{\chi\in \wC}\ket{\chi}\bra{\chi}=P_{\mathcal{W}},
    \end{align*}
where $P_{\mathcal{W}}$ is the projection operator onto the codeword state space $\mathcal{W}=\text{span}\{\ket{\psi_{\bc}}\}_{\bc\in \localcode}\}$.
\end{lemma}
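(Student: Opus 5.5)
The plan is to show that $\{\ket{\chi}\}_{\chi\in\wC}$ is an orthonormal basis of $\mathcal{W}$. Once that is established, the sum of the rank-one projectors onto an orthonormal basis of a subspace is the orthogonal projector onto that subspace, which is exactly $P_{\mathcal{W}}$. Throughout we work under the standing assumption of this subsection that the codeword states are linearly independent, so $\kappa_\chi>0$ for every $\chi$ and every normalized vector $\ket{\chi}$ is well defined.

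\textbf{Orthonormality.} By \eqref{eq:innerproductCharacterStates}, $\braket{\tilde\chi_1|\tilde\chi_2}=\kappa_{\chi_1}\delta_{\chi_1,\chi_2}$. Dividing by $\sqrt{\kappa_{\chi_1}\kappa_{\chi_2}}$ gives $\braket{\chi_1|\chi_2}=\delta_{\chi_1,\chi_2}$.

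\textbf{Each $\ket{\chi}$ lies in $\mathcal{W}$.} Each $\ket{\tilde\chi}$ is, by definition, a linear combination of the states $\ket{\psi_{\bc}}$, so $\ket{\chi}\in\mathcal{W}$.

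\textbf{Spanning.} We invert the character transform. Using the first identity of Lemma~\ref{lem:character-identities} in its dual form (column orthogonality), $\sum_{\chi\in\wC}\chi(\bc)\overline{\chi(\bc')}=|\localcode|\delta_{\bc,\bc'}$, we obtain
\[
\frac{1}{\sqrt{|\localcode|}}\sum_{\chi\in\wC}\overline{\chi(\bc')}\,\ket{\tilde\chi}=\ket{\psi_{\bc'}}.
\]
Hence every $\ket{\psi_{\bc'}}$ lies in the span of the vectors $\ket{\chi}$, and so the $\ket{\chi}$ span $\mathcal{W}$.

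An alternative for spanning is a dimension count: there are $|\wC|=|\localcode|$ orthonormal vectors, all inside $\mathcal{W}$, and $\dim\mathcal{W}=|\localcode|$ by linear independence. This route avoids needing the dual orthogonality relation, which Lemma~\ref{lem:character-identities} does not state explicitly (it is nonetheless standard). I would use the dimension count as the main argument and mention the inversion formula as a remark.

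The step requiring the most care is justifying completeness, that is, spanning, rather than orthonormality. In the linearly dependent case some $\kappa_\chi=0$, so $\ket{\tilde\chi}=0$; the statement then holds if the sum is restricted to characters with $\kappa_\chi>0$, since the inversion formula still expresses each $\ket{\psi_{\bc}}$ through the nonzero $\ket{\tilde\chi}$. That case lies outside the current assumption.
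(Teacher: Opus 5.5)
Your proof is correct and uses essentially the same ingredients as the paper: orthonormality of the $\ket{\chi}$, membership of each $\ket{\chi}$ in $\mathcal{W}$, and the dual orthogonality relation $\sum_{\chi}\chi^{-1}(\bv)\chi(\bc)=|\localcode|\delta_{\bv,\bc}$ to show that $\sum_{\chi}\ket{\chi}\bra{\chi}$ fixes each $\ket{\psi_{\bv}}$. Your inversion formula is exactly the paper's computation of $\sum_{\chi}\ket{\chi}\braket{\chi|\psi_{\bv}}$. Your preferred dimension-count route to spanning is a harmless variant that is valid under the subsection's standing linear-independence assumption, and it avoids the dual orthogonality identity, which Lemma~\ref{lem:character-identities} does not state explicitly.
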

\begin{proof}
For $g(\bv-\bc)=\braket{\psi_{\bc}|\psi_{\bv}}$ and for the codeword $\bv\in \localcode$, we obtain 

\begin{align*}
    \braket{\chi|\psi_{\bv}}&\ =\frac{1}{\sqrt{|\localcode|\kappa_{\chi}}}\sum_{\bc\in \localcode}\chi^{-1}(\bc)\braket{\psi_{\bc}|\psi_{\bv}}\\
    &\ =\frac{1}{\sqrt{|\localcode|\kappa_{\chi}}}\sum_{\ba\in \localcode} g(\ba)\chi^{-1}(\bv-\ba)\\
    &\ =\frac{\sqrt{\kappa_{\chi}}}{\sqrt{|\localcode|}}\chi^{-1}(\bv).
\end{align*}
Then, for the state $\ket{\psi_{\bv}}$ for $\bv\in \localcode$ this yields  
\begin{align*}
    P_{\mathcal{W}}\ket{\psi_{\bv}}&\ = \sum_{\chi\in \wC}\ket{\chi}\braket{\chi|\psi_{\bv}}\\
    &\ = \sum_{\chi\in\wC}\frac{\sqrt{\kappa_{\chi}}}{\sqrt{|\localcode|}}\chi^{-1}(\bv)\ket{\chi}\\
    &\ = \frac{1}{|\localcode|}\sum_{\chi\in \wC}\sum_{\bc\in \localcode}\chi^{-1}(\bv)\chi(\bc)\ket{\psi_{\bc}}.
\end{align*}
Since $\sum_{\chi\in \wC}\chi^{-1}(\bv)\chi(\bc)=|\localcode|\delta_{\bv,\bc}$, it follows that 
\begin{align*}
    P_{\mathcal{W}}\ket{\psi_{\bv}}=\ket{\psi_{\bv}}.
\end{align*}
The character vectors $\characterbasis$ also satisfy
    \begin{align*}
       \braket{\chi_{1}|\chi_{2}}=\delta_{\chi_1,\chi_2}.
    \end{align*}
Since $\ket{\chi}$ is defined with respect to the linear span of the codeword states $\{\psi_{\bc}\}_{\bc\in \localcode}$, for any vector $\ket{v}\in \mathcal{W}^{\perp}$ this implies $\braket{\chi|v}=0$ for all $ \chi\in \wC$. Thus we conclude that $P_{\mathcal{W}}$ is a projector onto the subspace $\mathcal{W}$.
\end{proof}
Based on the above lemmas, we are ready to state the diagonalization lemma for operators $\{\Pi_{\subspace}\}_{\subspace\in \codesubspaces^{\localcode}}$.
  \begin{lemma}\label{lem: povm subspace diagonal character basis}
Consider the codeword states $\{\ket{\psi_{\bc}}\}_{\bc\in \localcode}$, the corresponding character vectors $\characterbasis$, and an optimal POVM $\fullpovm$ satisfying Lemma~\ref{lem :coset symmetry}. Then the operators $\{\Pi_{\subspace}\}_{\subspace\in \codesubspaces^{\localcode}}$ are diagonal with respect to
$\characterbasis$. In other words, we can write every operator $\{\Pi_{\subspace}\}_{\subspace\in \codesubspaces^{\localcode}}$ as 
\begin{align*}
    \Pi_{\subspace}=\sum_{\chi\in \wC}\gamma_{\subspace}(\chi)\ket{\chi}\bra{\chi}.
\end{align*}
  \end{lemma}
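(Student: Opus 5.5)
We will combine the translation invariance of $\Pi_{\subspace}$ from Lemma~\ref{lem: povm X symmetry relations} with the restriction to $\mathcal{W}$ from Lemma~\ref{lem: POVM space}. The key observation is that each character vector $\ket{\chi}$ is an eigenvector of every shift $X^{\bw}$ with $\bw\in\localcode$, and that distinct characters give distinct eigenvalue functions $\bw\mapsto\chi^{-1}(\bw)$. Any operator on $\mathcal{W}$ that commutes with all these shifts must therefore be diagonal in $\characterbasis$.

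\textbf{Step 1: the character vectors are joint eigenvectors of the shifts.} Since $X^{\bw}\ket{\psi_{\bc}}=X^{\bw}X^{\bc}\QFT^{\otimes|I|}\ket{\psi}=\ket{\psi_{\bc+\bw}}$, reindexing gives
\begin{align*}
X^{\bw}\ket{\chi}=\frac{1}{\sqrt{|\localcode|\kappa_\chi}}\sum_{\bc\in\localcode}\chi(\bc)\ket{\psi_{\bc+\bw}}=\chi^{-1}(\bw)\ket{\chi}.
\end{align*}
In particular $X^{\bw}$ maps $\mathcal{W}$ to itself. Because $X^{\bw}$ is unitary, it also maps $\mathcal{W}^{\perp}$ to itself.

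\textbf{Step 2: place $\Pi_{\subspace}$ inside $\mathcal{L}(\mathcal{W})$.} We take the optimal POVM to satisfy both Lemma~\ref{lem :coset symmetry} and Lemma~\ref{lem: POVM space}. This is possible because the projection $\Pi_{\subspace,\bs}\mapsto P_{\mathcal{W}}\Pi_{\subspace,\bs}P_{\mathcal{W}}$ commutes with conjugation by $X^{\bw}$: by Step~1, $X^{\bw}$ commutes with $P_{\mathcal{W}}$, so the coset symmetry relation survives the projection. Hence $\Pi_{\subspace}=\sum_{\bs}\Pi_{\subspace,\bs}$ is supported on $\mathcal{W}$, and we can write
\[
\Pi_{\subspace}=\sum_{\chi_1,\chi_2\in\wC}\bra{\chi_1}\Pi_{\subspace}\ket{\chi_2}\,\ket{\chi_1}\bra{\chi_2},
\]
using $\sum_\chi\ket{\chi}\bra{\chi}=P_{\mathcal{W}}$.

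\textbf{Step 3: kill the off-diagonal entries.} By Lemma~\ref{lem: povm X symmetry relations}, $\Pi_{\subspace}=X^{\bw}\Pi_{\subspace}X^{-\bw}$ for every $\bw\in\localcode$. Applying Step~1 to both sides gives
\[
\bra{\chi_1}\Pi_{\subspace}\ket{\chi_2}=\chi_1(\bw)\,\chi_2^{-1}(\bw)\,\bra{\chi_1}\Pi_{\subspace}\ket{\chi_2}\qquad\text{for all }\bw\in\localcode .
\]
If $\chi_1\neq\chi_2$, there is some $\bw$ with $\chi_1\chi_2^{-1}(\bw)\neq 1$; otherwise the sum over $\bw$ in Lemma~\ref{lem:character-identities}(1) would equal $|\localcode|$ rather than $0$. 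For that $\bw$ the entry must vanish. This leaves exactly $\Pi_{\subspace}=\sum_\chi\gamma_{\subspace}(\chi)\ket{\chi}\bra{\chi}$ with $\gamma_{\subspace}(\chi)=\bra{\chi}\Pi_{\subspace}\ket{\chi}\ge 0$.

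The main obstacle is Step~2: ensuring a single optimal POVM is simultaneously coset-symmetric and supported on $\mathcal{W}$. Without the support restriction, $\Pi_{\subspace}$ could have components on $\mathcal{W}^{\perp}$ or mixed blocks, and it would not be diagonal in $\characterbasis$ alone. To settle this, I would verify that the compression by $P_{\mathcal{W}}$ is $X^{\bw}$-equivariant, which follows from Step~1. The remaining steps are routine character calculations.
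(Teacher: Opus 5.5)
Your proof is correct, but it reaches the diagonal form by a different mechanism than the paper's proof. The paper does not use the eigenvector property of $\ket{\chi}$ under shifts in this lemma. Instead, it uses Lemma~\ref{lem: povm X symmetry relations} to show that $\bra{\psi_{\bc}}\Pi_{\subspace}\ket{\psi_{\bc'}}=\mu_{\subspace}(\bc'-\bc)$ depends only on the difference, so the matrix of $\Pi_{\subspace}$ in the codeword-state frame is circulant. It then expands $\bra{\chi_1}\Pi_{\subspace}\ket{\chi_2}$ as a double sum over codewords and collapses it with character orthogonality, mirroring the Gram-matrix computation. That route gives the explicit formula $\gamma_{\subspace}(\chi)=\kappa_{\chi}^{-1}\sum_{\bc}\mu_{\subspace}(\bc)\chi(\bc)$.

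Your route is a commutant argument: the $\ket{\chi}$ are joint eigenvectors of $\{X^{\bw}\}_{\bw\in\localcode}$ with pairwise distinct eigencharacters, so any operator on $\mathcal{W}$ commuting with all shifts is diagonal. This is shorter and is exactly the argument the paper itself uses later in Lemma~\ref{lem: povm dual cosets}. It gives $\gamma_{\subspace}(\chi)=\bra{\chi}\Pi_{\subspace}\ket{\chi}$ directly, which agrees with the paper's formula.

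Your Step~2 also makes explicit a point the paper's proof leaves implicit. Computing $\bra{\chi_1}\Pi_{\subspace}\ket{\chi_2}$ only determines the compression $P_{\mathcal{W}}\Pi_{\subspace}P_{\mathcal{W}}$. The identity $\Pi_{\subspace}=\sum_{\chi}\gamma_{\subspace}(\chi)\ket{\chi}\bra{\chi}$ additionally requires the $\mathcal{W}^{\perp}$ block and the mixed blocks to vanish, and this need not hold for an arbitrary coset-symmetric optimal POVM. For example, one can add a multiple of $P_{\mathcal{W}^{\perp}}$, which is shift-invariant, without changing feasibility or the objective. Your observation that $X^{\bw}$ commutes with $P_{\mathcal{W}}$, so the compression of Lemma~\ref{lem: POVM space} preserves the coset-symmetry relation, is what produces a single optimal POVM with both properties.

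A minor point: the phase you write in Step~3, $\chi_1(\bw)\chi_2^{-1}(\bw)$, is what conjugation by $X^{-\bw}$ produces rather than by $X^{\bw}$. This is harmless, since the invariance holds for every $\bw\in\localcode$.
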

  \begin{proof}
      For the operator $\Pi_{\subspace}=\sum_{\bs\in \Image(H_{\subspace})}\Pi_{\subspace,\bs}$ and for the codewords $\bc$ and $\bc'$, using Lemma~\ref{lem: povm X symmetry relations}, the following holds 
      \begin{align}\label{eq:pi_V code invariance}
          \bra{\psi_{\bc}}\Pi_{\subspace}\ket{\psi_{\bc'}} &\ = \bra{\psi_{\mathbf{0}}}X^{-\bc}\Pi_{\subspace}\ket{\psi_{\bc'}} \\
          &\ = \bra{\psi_{\mathbf{0}}}\Pi_{\subspace}X^{-\bc}\ket{\psi_{\bc'}}\nonumber \\
          &\ = \bra{\psi_{\mathbf{0}}}\Pi_{\subspace}\ket{\psi_{\bc'- \bc}}. \nonumber
      \end{align}
    This means $\bra{\psi_{\bc}}\Pi_{\subspace}\ket{\psi_{\bc'}}$ only depends on $\bc'-\bc$ which we denote as $\mu_{\subspace}( \bc'-\bc)$.
Moreover,

\begin{align*}
    \bra{\chi_{1}}\Pi_{\subspace}\ket{\chi_{2}} &\ = \frac{1}{\sqrt{\kappa_{\chi_1}\kappa_{\chi_2}}|\localcode|}\sum_{\bc,\bc'\in \localcode}\chi_{1}^{-1}(\bc)\chi_{2}(\bc')\bra{\psi_{\bc}}\Pi_{\subspace}\ket{\psi_{\bc'}}\\
    &\ =\frac{1}{\sqrt{\kappa_{\chi_1}\kappa_{\chi_2}}|\localcode|}\sum_{\bc,\bc'\in \localcode}\chi_{1}^{-1}(\bc)\chi_{2}(\bc')\mu_{\subspace}( \bc'-\bc)\\
    &\ =\frac{1}{\sqrt{\kappa_{\chi_1}\kappa_{\chi_2}}|\localcode|}\sum_{\bc,\ba\in \localcode}\chi_{1}^{-1}(\bc)\chi_{2}(\bc+\ba)\mu_{\subspace}(\ba)\\
    &\ = \frac{1}{\sqrt{\kappa_{\chi_1}\kappa_{\chi_2}}|\localcode|}\sum_{\ba\in \localcode}\mu_{\subspace}(\ba)\chi_{2}(\ba)\left(\sum_{\bc\in \localcode}\chi_{1}^{-1}(\bc)\chi_{2}(\bc)\right)\\
    &\ =\gamma_{\subspace}(\chi_1)\delta_{\chi_1,\chi_2},
\end{align*}
where $\gamma_{\subspace}(\chi_1)=\frac{1}{\kappa_{\chi_1}}\sum_{\bc\in \localcode}\mu_{\subspace}(\bc)\chi_{1}(\bc)$.
  \end{proof}
\noindent  Since each operator $\Pi_{\subspace}$ is diagonal with respect to the basis $\{\ket{\chi}\}_{\chi\in\wC}$ for all $\subspace\in \codesubspaces^{\localcode}$, we obtain a simple characterization of the operator $\Pi_e$, i.e., the measurement operator for inconclusive outcome in the space spanned by the states $\{\ket{\psi_{\bc}}\}_{\bc\in \localcode}$. 
\begin{lemma}\label{lem: pi_e diagonal form}
     Consider the codeword states $\{\ket{\psi_{\bc}}\}_{\bc\in \localcode}$  and the corresponding character vectors $\characterbasis$, the POVM $\fullpovm$ that achieves the optimal objective such that
     \begin{align*}
         \Pi_{e}=\mI-\sum_{\subspace\in \codesubspaces^{\localcode}}\Pi_{\subspace}.
     \end{align*}
    Then the operator $P_{\mathcal{W}}\Pi_{e}P_{\mathcal{W}}$ is diagonalizable with respect to $\characterbasis$.
\end{lemma}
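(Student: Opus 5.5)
The plan is to compress $\Pi_e$ to $\mathcal{W}$ and use the decompositions already established for $P_{\mathcal{W}}$ and for each $\Pi_{\subspace}$. By Lemma~\ref{lem: POVM space} (and Corollary~\ref{corol:povm space}), we may take the optimal POVM to satisfy both the coset symmetry of Lemma~\ref{lem :coset symmetry} and $\Pi_{\subspace,\bs}\in\mathcal{L}(\mathcal{W})$. The projection step in Lemma~\ref{lem: POVM space} does not destroy coset symmetry, because $P_{\mathcal{W}}$ commutes with every $X^{\bc}$ for $\bc\in\localcode$: the translation $X^{\bc}$ only permutes the states $\ket{\psi_{\bc'}}$ and therefore maps $\mathcal{W}$ to itself. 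Since the two reductions are compatible, the diagonal form of Lemma~\ref{lem: povm subspace diagonal character basis} applies to these same operators. We should also check that the symmetrization in Lemma~\ref{lem :coset symmetry} preserves membership in $\mathcal{L}(\mathcal{W})$; this is immediate from the same commutation.

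With this in hand, we compute
\begin{align*}
P_{\mathcal{W}}\Pi_e P_{\mathcal{W}}
&= P_{\mathcal{W}} - \sum_{\subspace\in\codesubspaces^{\localcode}} P_{\mathcal{W}}\Pi_{\subspace}P_{\mathcal{W}}.
\end{align*}
By the lemma giving $\sum_{\chi\in\wC}\ket{\chi}\bra{\chi}=P_{\mathcal{W}}$, the first term is $\sum_{\chi}\ket{\chi}\bra{\chi}$. By Lemma~\ref{lem: povm subspace diagonal character basis}, each $\Pi_{\subspace}$ equals $\sum_{\chi}\gamma_{\subspace}(\chi)\ket{\chi}\bra{\chi}$. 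Even without assuming $\Pi_{\subspace}\in\mathcal{L}(\mathcal{W})$, the proof of that lemma only computes the matrix elements $\bra{\chi_1}\Pi_{\subspace}\ket{\chi_2}$, so it directly gives $P_{\mathcal{W}}\Pi_{\subspace}P_{\mathcal{W}}=\sum_{\chi}\gamma_{\subspace}(\chi)\ket{\chi}\bra{\chi}$. Combining the two terms yields
\begin{align*}
P_{\mathcal{W}}\Pi_e P_{\mathcal{W}}=\sum_{\chi\in\wC}\Bigl(1-\sum_{\subspace\in\codesubspaces^{\localcode}}\gamma_{\subspace}(\chi)\Bigr)\ket{\chi}\bra{\chi},
\end{align*}
which is diagonal in $\characterbasis$. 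As a byproduct, $\Pi_e\succeq 0$ gives the scalar constraints $\sum_{\subspace}\gamma_{\subspace}(\chi)\le 1$ for every $\chi$, and these constraints are what will be used later in the LP.

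The main obstacle is bookkeeping rather than mathematics. One must justify that $P_{\mathcal{W}}\Pi_{\subspace}P_{\mathcal{W}}$, as opposed to $\Pi_{\subspace}$ itself, is what the earlier diagonalization controls. The second point is that the sandwiched identity equals $P_{\mathcal{W}}$ rather than $\mI$. Both follow because $\{\ket{\chi}\}$ is an orthonormal basis of $\mathcal{W}$ under the standing linear-independence assumption ($\kappa_\chi>0$). The only fact from the symmetry lemmas that is genuinely needed is that all matrix elements $\bra{\chi_1}\Pi_{\subspace}\ket{\chi_2}$ with $\chi_1\neq\chi_2$ vanish.
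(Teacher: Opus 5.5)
Your proposal is correct and follows the paper's proof: you expand $P_{\mathcal{W}}\Pi_e P_{\mathcal{W}} = P_{\mathcal{W}} - \sum_{\subspace} P_{\mathcal{W}}\Pi_{\subspace}P_{\mathcal{W}}$, substitute $P_{\mathcal{W}}=\sum_{\chi}\ket{\chi}\bra{\chi}$ and the character-basis diagonal form of $\Pi_{\subspace}$, and read off the coefficients $1-\sum_{\subspace}\gamma_{\subspace}(\chi)$. Your added check, that $X^{\bc}$ commutes with $P_{\mathcal{W}}$ so the coset symmetrization and the restriction to $\mathcal{W}$ are compatible, is valid and makes explicit a step the paper leaves implicit.
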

\begin{proof}
    Applying $P_{\cW}$ on $\Pi_e$, we obtain
    \begin{align*}
     P_{\mathcal{W}}\Pi_{e}P_{\mathcal{W}} &\ =P_{\mathcal{W}}-  P_{\mathcal{W}}\left(\sum_{\subspace\in \codesubspaces^{\localcode}}\Pi_{\subspace}\right)P_{\mathcal{W}} \\
     &\ = P_{\mathcal{W}}-\sum_{\subspace\in \codesubspaces^{\localcode}}\sum_{\chi\in \wC}\gamma_{\subspace}(\chi)\ket{\chi}\bra{\chi}\\
     &\ = \sum_{\chi\in \wC}(1-\sum_{\subspace\in \codesubspaces^{\localcode}}\gamma_{\subspace}(\chi))\ket{\chi}\bra{\chi}.
    \end{align*}
\end{proof}
\noindent Since we require $\Pi_{e}\succeq 0$, this implies $P_{\mathcal{W}}\Pi_{e}P_{\mathcal{W}}\succeq 0$. Hence, we require  the following to hold
\begin{align*}
    \sum_{\subspace\in \codesubspaces^{\localcode}}\gamma_{\subspace}(\chi)\leq 1,\quad \forall\chi\in \wC.
\end{align*}
Moreover, since $\Pi_{\subspace}\succeq 0$ $\forall \subspace\in \codesubspaces^{\localcode}$,  we must have
\begin{align*}
    \gamma_{\subspace}(\chi)\geq 0, \quad \forall \subspace\in \codesubspaces^{\localcode},\chi\in \wC.
\end{align*}
Thus, by Lemma~\ref{lem: povm subspace diagonal character basis} and~\ref{lem: pi_e diagonal form}, the problem in \eqref{SDP main} reduces to the following equivalent formulation:

  \begin{align}{\label{SDP chracter}}
        \text{Maximize}  \quad &\   \sum_{\subspace\in \codesubspaces^{\localcode}}\Tr(\Pi_{\subspace}\ket{\psi_{\mzero}}\bra{\psi_{\mzero}})\reward(\subspace)\\
 \text{Subject to:} \quad &\ \Tr\left(\Pi_{\subspace,\bs}\ket{\psi_{\bc}}\bra{\psi_{\bc}}\right)=0, \quad \text{if $\bc\notin \subspace_{\bs}$}, \forall\subspace\in \codesubspaces^{\localcode},\bs\in \Image(H_{\subspace}) \nonumber \\
    &\ \sum_{\subspace\in \codesubspaces^{\localcode}}\gamma_{\subspace}(\chi)\leq 1 \qquad \forall \chi\in \wC\\
    &\ \gamma_{\subspace}(\chi)\geq 0, \quad \forall \subspace\in \codesubspaces^{\localcode},\chi\in \wC.\nonumber
    \end{align}

\subsection[Diagonalization of the 
affine filtering operator]%
{Diagonalization of the operator $\Pi_{\subspace,\bs}$}
Although the diagonalization of the operators $\{\Pi_{\subspace}\}_{\subspace\in \codesubspaces^{\localcode}}$ reduces the SDP in \eqref{SDP main} to a much simpler optimization problem, the problem still remains an SDP due to the unambiguity constraints i.e. $\Tr\left(\Pi_{\subspace,\bs}\ket{\psi_{\bc}}\bra{\psi_{\bc}}\right)=0$ $\forall\bc\notin \subspace_{\bs}$ $\forall\subspace\in \codesubspaces^{\localcode},\bs\in \Image(H_{\subspace})$. In this section, we will show that the operators $\{\Pi_{\subspace,\bs}\}_{\subspace\in \codesubspaces^{\localcode},\bs\in \Image(H_{\subspace})}$ are diagonalizable in a carefully constructed basis. In Lemma~\ref{lem: povm dual cosets}, we establish a key property for these operators with respect to the basis $\{\ket{\chi}\}_{\chi\in \wC}$. 
\begin{lemma}\label{lem: povm dual cosets}
There exists a POVM $\fullpovm$, which achieves the optimal objective value for the SDP in \eqref{SDP main} such that the POVM elements satisfy
\begin{align*}
    \bra{\chi_{1}}\Pi_{\subspace,\bs}\ket{\chi_{2}}=0, \forall \chi_{2}\chi_{1}^{-1}\notin \subspaceperp.
\end{align*}
\end{lemma}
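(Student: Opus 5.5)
We take an optimal POVM that simultaneously satisfies Lemma~\ref{lem :coset symmetry} and Lemma~\ref{lem: POVM space}, and show that its translation invariance under $\subspace$ forces the stated block structure. To get both properties at once, we start from a coset-symmetric optimal POVM and then compress each element by $P_{\mathcal W}$, as in the proof of Lemma~\ref{lem: POVM space}. This compression preserves coset symmetry because $X^{\bw}$ commutes with $P_{\mathcal W}$ for every $\bw\in\localcode$: $X^{\bw}$ permutes the states $\ket{\psi_{\bc}}$ and hence maps $\mathcal W$ onto itself unitarily. Objective value, feasibility and positivity are preserved by those earlier arguments.

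The key step is to compute how $X^{\bv}$ acts on the character vectors. For $\bv\in\localcode$,
\[
X^{\bv}\ket{\tilde\chi}=\frac{1}{\sqrt{|\localcode|}}\sum_{\bc}\chi(\bc)\ket{\psi_{\bc+\bv}}=\chi^{-1}(\bv)\ket{\tilde\chi},
\]
so $X^{\bv}\ket{\chi}=\overline{\chi(\bv)}\ket{\chi}$. By Lemma~\ref{lem: povm X symmetry relations}, $\Pi_{\subspace,\bs}=X^{\bv}\Pi_{\subspace,\bs}X^{-\bv}$ for all $\bv\in\subspace$. Taking matrix elements gives
\[
\bra{\chi_1}\Pi_{\subspace,\bs}\ket{\chi_2}=\chi_1(\bv)\,\overline{\chi_2(\bv)}\,\bra{\chi_1}\Pi_{\subspace,\bs}\ket{\chi_2}=(\chi_1\chi_2^{-1})(\bv)\,\bra{\chi_1}\Pi_{\subspace,\bs}\ket{\chi_2}.
\]

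Now suppose $\chi_2\chi_1^{-1}\notin\subspaceperp$. Its inverse $\chi_1\chi_2^{-1}$ is then also not in the subgroup $\subspaceperp$. By definition of the annihilator, there is some $\bv\in\subspace$ with $(\chi_1\chi_2^{-1})(\bv)\neq1$. Using this $\bv$ in the identity above forces the matrix element to vanish. Alternatively, one can average the identity over $\bv\in\subspace$ and invoke the orthogonality of characters restricted to $\subspace$, in the spirit of Lemma~\ref{lem:character-identities}, to get the same conclusion.

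The only point needing care is the compatibility step in the first paragraph: the coset symmetry must survive the restriction to $\mathcal{L}(\mathcal W)$, and the $\ket{\chi}$ must form a well-defined orthonormal basis of $\mathcal W$. The latter uses the standing linear-independence assumption $\kappa_\chi>0$. Once these hold, the argument is a one-line eigenvalue computation.
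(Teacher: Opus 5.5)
Your proposal is correct and follows essentially the same route as the paper: the eigenrelation $X^{\bv}\ket{\chi}=\chi^{-1}(\bv)\ket{\chi}$, the $\subspace$-translation invariance of $\Pi_{\subspace,\bs}$ from Lemma~\ref{lem: povm X symmetry relations}, and a choice of $\bv\in\subspace$ on which the relevant character is nontrivial. Two small remarks. First, conjugating by $X^{\bv}$ actually produces the phase $(\chi_2\chi_1^{-1})(\bv)$ rather than its conjugate; this is harmless because $\subspace=-\subspace$. Second, your compression by $P_{\mathcal{W}}$, justified by $X^{\bw}$ commuting with $P_{\mathcal{W}}$, is not needed for this lemma, but it correctly supplies the compatibility with Lemma~\ref{lem: POVM space} that the paper uses implicitly in the next lemma.
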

\begin{proof}
    
For the unitary $X^{\bv}$ and the character state $\ket{\chi_{1}}$, the following holds  
\begin{align*}
    X^{\bv}\ket{\chi_{1}} &\ =\frac{1}{\sqrt{|\localcode|\kappa_{\chi_{1}}}}\sum_{\bc\in \localcode}\chi_{1}(\bc)X^{\bv}\ket{\psi_{\bc}}\\
    &\ = \frac{1}{\sqrt{|\localcode|\kappa_{\chi_{1}}}}\sum_{\bc\in \localcode}\chi_{1}(\bc)\ket{\psi_{\bc+\bv}}\\
    &\ = \frac{1}{\sqrt{|\localcode|\kappa_{\chi_{1}}}}\sum_{\bc'\in \localcode}\chi_{1}(\bc'-\bv)\ket{\psi_{\bc'}}\\
    &\ = \frac{\chi_{1}(-\bv)}{\sqrt{|\localcode|\kappa_{\chi_{1}}}}\sum_{\bc'\in \localcode}\chi_{1}(\bc')\ket{\psi_{\bc'}}\\
    &\ =\chi_{1}^{-1}(\bv)\ket{\chi_{1}}.
\end{align*}
Then, for some $\bv\in \subspace$, we obtain

\begin{align*}
    \bra{\chi_{1}}\Pi_{\subspace,\bs}\ket{\chi_{2}} &\ = \bra{\chi_{1}}X^{\bv}\Pi_{\subspace,\bs}X^{-\bv}\ket{\chi_{2}}\\
    &\ = \chi_{1}^{-1}(\bv)\chi_{2}(\bv)(\bra{\chi_{1}}\Pi_{\subspace,\bs}\ket{\chi_{2}})\\
    &\ =(\chi_{2}\chi_{1}^{-1})(\bv) (\bra{\chi_{1}}\Pi_{\subspace,\bs}\ket{\chi_{2}}).
\end{align*} 
This gives 
\begin{align*}
    \left((\chi_{2}\chi_{1}^{-1})(\bv)-1\right)\bra{\chi_{1}}\Pi_{\subspace,\bs}\ket{\chi_{2}}=0.
\end{align*}
Since $\chi_{2}\chi_{1}^{-1}\notin \subspaceperp$, there exists a codeword $\bv\in \subspace$ for which $(\chi_{2}\chi_{1}^{-1})(\bv)\neq 1$. Then, we must have $\bra{\chi_{1}}\Pi_{\subspace,\bs}\ket{\chi_{2}}=0$ for all $\chi_{2},\chi_{1}$ such that $\chi_{2}\chi_{1}^{-1}\notin \subspaceperp$.
\end{proof}
We now use Lemma~\ref{lem: povm dual cosets} to diagonalize the operators $\{\Pi_{\subspace,\bs}\}_{\subspace\in \codesubspaces^{\localcode},\bs\in \Image(H_{\subspace})}$. This lemma will play a key role for our LP reduction proof which we discuss in more detail in Section~\ref{sec:SDP to LP}.
For the remainder of this subsection, we assume that the Gram matrix $G$ is full rank, so that
$\kappa_{\chi}>0$ for all $\chi\in\wC$ and the normalized character basis is well defined.
The rank-deficient case is handled separately in Section~\ref{sec: rank deficient}.
\begin{lemma}\label{lem: povm zeta diagonal form}
    There exists a POVM $\fullpovm$, which achieves the optimal objective value for the SDP in \eqref{SDP main} such that the POVM elements admit the following diagonal form 
    \begin{align*}
        \Pi_{\subspace,\bs}=\sum_{\subspaceperp_{m}\in \wC/\subspaceperp}\beta_{\subspace,\bs}(\subspaceperp_{m})\ket{\zeta_{\subspace,\bs}^{\subspaceperp_{m}}}\bra{\zeta_{\subspace,\bs}^{\subspaceperp_{m}}},
    \end{align*}
    where $\beta_{\subspace,\bs}(\subspaceperp_{m})\geq 0$ $\forall\subspace\in \codesubspaces^{\localcode},\bs\in\Image(H_{\subspace}), \subspaceperp_{m}\in\wC/\subspaceperp$ 
    and the orthonormal vector $\ket{\zeta_{\subspace,\bs}^{\subspaceperp_{m}}}$ associated with the coset
$\subspaceperp_m\in \dualcosetspace$ is represented
    in the character basis by
    \begin{align}
        \ket{\zeta_{\subspace,\bs}^{\subspaceperp_{m}}}=\frac{1}{\sqrt{\eta_{\subspace}(\subspaceperp_{m})}}\sum_{\chi\in \subspaceperp_m}\frac{\chi(-\ba)}{\sqrt{\kappa_{\chi}}}\ket{\chi},
    \end{align}
    for some codeword $\ba\in \subspace_{\bs}$ and $\eta_{\subspace}(\subspaceperp_{m})=\sum_{\chi\in \subspaceperp_m }\frac{1}{\kappa_{\chi}}$.
\end{lemma}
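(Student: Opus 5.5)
We start from an optimal POVM that has all the structure established so far. By Lemma~\ref{lem :coset symmetry}, Lemma~\ref{lem: POVM space} and Corollary~\ref{corol:povm space}, together with Lemma~\ref{lem: povm dual cosets}, we fix an optimal POVM with three properties. It is coset symmetric. Each $\Pi_{\subspace,\bs}\in\mathcal{L}(\mathcal{W}^{\perp}_{\sim\subspace_{\bs}}\cap\mathcal{W})$. And $\bra{\chi_1}\Pi_{\subspace,\bs}\ket{\chi_2}=0$ whenever $\chi_2\chi_1^{-1}\notin\subspaceperp$. We need to check that the symmetrization and the projection onto $\mathcal{W}$ do not destroy each other. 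This holds because $P_{\mathcal{W}}$ commutes with every $X^{\bc}$, $\bc\in\localcode$, so both can be imposed at once.

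By part 3 of Lemma~\ref{lem:character-identities}, the third property says that $\Pi_{\subspace,\bs}$ is block diagonal with respect to the orthogonal decomposition $\mathcal{W}=\bigoplus_{m}\mathcal{B}_m$, where $\mathcal{B}_m\coloneqq\mathrm{span}\{\ket{\chi}:\chi\in\subspaceperp_m\}$. Write $P_m$ for the projector onto $\mathcal{B}_m$, so that $\Pi_{\subspace,\bs}=\sum_m P_m\Pi_{\subspace,\bs}P_m$. It then suffices to show that each block $P_m\Pi_{\subspace,\bs}P_m$ is a nonnegative multiple of $\ket{\zeta^{\subspaceperp_m}_{\subspace,\bs}}\bra{\zeta^{\subspaceperp_m}_{\subspace,\bs}}$.

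Next we translate unambiguity blockwise. Positivity gives $\Pi_{\subspace,\bs}\ket{\psi_{\bc}}=0$ for all $\bc\notin\subspace_{\bs}$, and block diagonality then gives $(P_m\Pi_{\subspace,\bs}P_m)P_m\ket{\psi_{\bc}}=0$ for every $m$. From the computation $\braket{\chi|\psi_{\bc}}=\sqrt{\kappa_\chi/|\localcode|}\,\chi^{-1}(\bc)$ in the proof that $\sum_\chi\ket{\chi}\bra{\chi}=P_{\mathcal W}$, we get
\[
P_m\ket{\psi_{\bc}}=\frac{1}{\sqrt{|\localcode|}}\sum_{\chi\in\subspaceperp_m}\sqrt{\kappa_\chi}\,\chi^{-1}(\bc)\ket{\chi}.
\]
Now fix a representative $\chi_0\in\subspaceperp_m$ and write $\chi=\chi_0\phi$ with $\phi\in\subspaceperp$. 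Since $\phi(\bc)$ depends only on the coset $\bc+\subspace$, that is, on $H_{\subspace}\bc$, the vector $P_m\ket{\psi_{\bc}}$ equals the phase $\chi_0^{-1}(\bc)$ times a vector $\ket{u_{\bt}}$ that depends only on the syndrome $\bt=H_{\subspace}\bc$.

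The key dimension count, which I expect to be the main point needing care, is as follows. There are $|\Image(H_\subspace)|=|\subspaceperp|=\dim\mathcal{B}_m$ such vectors $\ket{u_{\bt}}$. They are linearly independent. Indeed, their coefficient matrix is the character table of $\localcode/\subspace$ (paired with $\subspaceperp$), which is invertible by character orthogonality, multiplied by the invertible diagonal matrix $\mathrm{diag}(\sqrt{\kappa_\chi})$; here we use the full-rank assumption $\kappa_\chi>0$. Hence the vectors $\{\ket{u_{\bt}}\}_{\bt\neq\bs}$ span a hyperplane of $\mathcal{B}_m$. Since the block annihilates this hyperplane and is Hermitian and PSD, its range lies in the one-dimensional orthogonal complement of the hyperplane. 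So the block is $\beta\ket{\zeta}\bra{\zeta}$ with $\beta\ge 0$ and $\ket{\zeta}$ a unit vector spanning that complement.

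It remains to identify $\ket{\zeta}$ with the stated formula. Take $\ba\in\subspace_{\bs}$ and $\ket{\zeta}\propto\sum_{\chi\in\subspaceperp_m}\chi(-\ba)\kappa_\chi^{-1/2}\ket{\chi}$. For any $\bc$ we have
\[
\braket{\zeta|P_m\psi_{\bc}}\propto\sum_{\chi\in\subspaceperp_m}\chi(\ba-\bc)=\chi_0(\ba-\bc)\sum_{\phi\in\subspaceperp}\phi(\ba-\bc),
\]
which vanishes unless $\ba-\bc\in\subspace$, by part 2 of Lemma~\ref{lem:character-identities}. So this $\ket{\zeta}$ is orthogonal to the hyperplane and hence spans its complement. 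Its squared norm is $\sum_{\chi\in\subspaceperp_m}1/\kappa_\chi=\eta_\subspace(\subspaceperp_m)$, which gives the stated normalization.

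Finally, the choice of $\ba$ is immaterial. Replacing $\ba$ by $\ba+\bv$ with $\bv\in\subspace$ multiplies the vector by $\chi(-\bv)=\chi_0(-\bv)$, which is constant on the coset $\subspaceperp_m$. This is a global phase, so the projector $\ket{\zeta}\bra{\zeta}$ is unchanged. The vectors $\ket{\zeta^{\subspaceperp_m}_{\subspace,\bs}}$ for different $m$ lie in orthogonal blocks, so they are orthonormal, which completes the decomposition.
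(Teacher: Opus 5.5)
Your proof is correct. It reaches the conclusion by a different dimension argument than the paper uses.

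The paper first writes down the vectors $\ket{\zeta_{\subspace,\bs}^{\subspaceperp_m}}$. It checks that they are orthonormal and orthogonal to every $\ket{\psi_{\bc}}$ with $\bc\notin\subspace_{\bs}$. It then does a single global count: linear independence of the $|\localcode|$ codeword states gives $\dim(\mathcal{W}^{\perp}_{\sim\subspace_{\bs}}\cap\mathcal{W})=|\subspace_{\bs}|=q^{\dim(\subspace)}$. This equals the number of cosets in $\dualcosetspace$, so the $\zeta$ vectors form an orthonormal basis of the support space from Corollary~\ref{corol:povm space}. Lemma~\ref{lem: povm dual cosets} then removes the off-diagonal entries in that basis.

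You instead argue block by block. You project the codeword states onto each coset block $\mathcal{B}_m$. You show, via invertibility of the character table of $\localcode/\subspace$ times $\mathrm{diag}(\sqrt{\kappa_\chi})$, that the projections of the wrong-syndrome states span a hyperplane of $\mathcal{B}_m$. Hence each block of $\Pi_{\subspace,\bs}$ has rank at most one, and only afterwards do you identify the surviving direction as $\zeta$.

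The paper's route is shorter. It needs only linear independence of the states, not a separate per-block independence argument. Your route is more constructive: it derives $\zeta$ rather than positing it, and it shows why exactly one nonnegative scalar per coset survives, which is the structure the LP relies on.

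You also explicitly check that projecting onto $\mathcal{W}$ preserves coset symmetry, because $P_{\mathcal{W}}$ commutes with every $X^{\bc}$, $\bc\in\localcode$. The paper uses this compatibility only implicitly when it combines Lemmas~\ref{lem :coset symmetry}, \ref{lem: POVM space} and \ref{lem: povm dual cosets} for a single optimal POVM.
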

\begin{proof}
For $\subspace\in \codesubspaces^{\localcode},\bs\in\Image(H_{\subspace}), \subspaceperp_{m}\in\wC/\subspaceperp$ and a codeword $\ba\in \subspace_{\bs}$, define the vector
\begin{align}
    \ket{\zeta_{\subspace,\bs}^{\subspaceperp_m,\ba}}\coloneqq\frac{1}{\sqrt{\eta_{\subspace}(\subspaceperp_m)}}\sum_{\chi\in \subspaceperp_m}\frac{\chi(-\ba)}{\sqrt{\kappa_{\chi}}}\ket{\chi}.
\end{align}
Consider $\ba_{1},\ba_{2}\in \subspace_{\bs}$  and $\chi_{1},\chi_{2}\in \subspaceperp_m$ with $\bs\in \Image(H_{\subspace})$ and $\subspaceperp_m\in \dualcosetspace$ such that $\ba_2-\ba_1=\bv\in \subspace$ and  $\chi_{2}\chi_{1}^{-1}=\chi'\in \subspaceperp$.
Then the characters satisfy 
\begin{align*}
    \frac{\chi_{1}(-\ba_1)}{\chi_{2}(-\ba_1)} &\ = \chi_{2}(\ba_1)\chi_{1}(-\ba_1)\\
    &\ = \chi_{2}\chi_{1}^{-1}(\ba_1)= \chi'(\ba_1).
\end{align*}
Similarly, we obtain
\begin{align*}
    \frac{\chi_{1}(-\ba_2)}{\chi_{2}(-\ba_2)} = \chi'(\ba_2).
\end{align*}
Moreover, from $\ba_{2}=\bv+\ba_{1}$ this yields
\begin{align*}
    \chi'(\ba_{2})=\chi'(\ba_{1}+\bv)=\chi'(\ba_1)\chi'(\bv).
\end{align*}
Since $\bv\in \subspace$ and $\chi'\in \subspaceperp$, $\chi'(\bv)=1$, which implies $\chi'(\ba_{2})=\chi'(\ba_{1})$. Since it holds for any $\ba_{1},\ba_{2}\in \subspace_{\bs}$  and $\chi_{1},\chi_{2}\in \subspaceperp_m$ with $\bs\in \Image(H_{\subspace})$ and $\subspaceperp_m\in \dualcosetspace$, the vectors $\ket{\zeta_{\subspace,\bs}^{\subspaceperp_m,\ba}}$ for fixed $\subspace\in \codesubspaces^{\localcode},\bs\in\Image(H_{\subspace}), \subspaceperp_m\in \dualcosetspace$ and different codewords $\ba\in \subspace_{\bs}$ differ from each other by a character factor. So for fixed $\subspace\in \codesubspaces^{\localcode},\bs\in\Image(H_{\subspace}), \subspaceperp_m\in \dualcosetspace$, we define the vector $\ket{\zeta_{\subspace,\bs}^{\subspaceperp_m}}$ as 
\begin{align*}
   \ket{\zeta_{\subspace,\bs}^{\subspaceperp_m}} =\frac{1}{\sqrt{\eta_{\subspace}(\subspaceperp_{m})}}\sum_{\chi\in \subspaceperp_m}\frac{\chi(-\ba)}{\sqrt{\kappa_{\chi}}}\ket{\chi},
\end{align*}
 for some arbitrary codeword $\ba\in \subspace_{\bs}$. Consider the vectors $\ket{\zeta_{\subspace,\bs}^{\subspaceperp_{m_1}}}$ and $\ket{\zeta_{\subspace,\bs}^{\subspaceperp_{m_2}}}$, for some $\subspace\in \codesubspaces^{\localcode},\bs\in\Image(H_{\subspace}), \subspaceperp_{m_1},\subspaceperp_{m_2}\in \dualcosetspace$. Then, for a codeword $\ba\in \subspace_{\bs}$, they satisfy the following relation
\begin{align}\label{eq:zeta relation}
    \braket{\zeta_{\subspace,\bs}^{\subspaceperp_{m_1}}|\zeta_{\subspace,\bs}^{\subspaceperp_{m_2}}} &\ = \frac{1}{\sqrt{\eta_{\subspace}(\subspaceperp_{m_1})\eta_{\subspace}(\subspaceperp_{m_2})}}\sum_{\chi_{1}\in\subspaceperp_{m_1}}\sum_{\chi_{2}\in\subspaceperp_{m_2}}\frac{\chi_{1}(\ba)\chi_{2}(-\ba)}{\sqrt{\kappa_{\chi_{1}}\kappa_{\chi_{2}}}}\braket{\chi_{1}|\chi_{2}}.
\end{align}
When $m_1\neq m_{2}$, the sets $\subspaceperp_{m_1}$ and $\subspaceperp_{m_2}$ are disjoint which implies $\braket{\chi_{1}|\chi_{2}}=0$ $\forall \chi_{1}\in\subspaceperp_{m_1},\chi_{2}\in\subspaceperp_{m_2}$. Thus \eqref{eq:zeta relation} reduces to 
\begin{align}
  \braket{\zeta_{\subspace,\bs}^{\subspaceperp_{m_1}}|\zeta_{\subspace,\bs}^{\subspaceperp_{m_2}}} &\ = \frac{1}{\eta_{\subspace}(\subspaceperp_{m_1})}\sum_{\chi_{1}\in\subspaceperp_{m_1}}  \frac{1}{\kappa_{\chi_{1}}}\delta_{m_1,m_2}.
\end{align}
Setting $\eta_{\subspace}(\subspaceperp_{m})=\sum_{\chi\in \subspaceperp_{m}}\frac{1}{\kappa_{\chi}}$, we obtain
\begin{align}\label{eq:zeta orthogonality}
    \braket{\zeta_{\subspace,\bs}^{\subspaceperp_{m_1}}|\zeta_{\subspace,\bs}^{\subspaceperp_{m_2}}} = \delta_{ m_1,m_2}.
\end{align}
Thus $\{\ket{\zeta_{\subspace,\bs}^{\subspaceperp_m}}\}_{\subspaceperp_m\in \dualcosetspace}$ forms a collection of orthonormal vectors.
Observe that for a codeword $\bc\in \localcode$, the state $\ket{\psi_{\bc}}$ is represented in character basis $\characterbasis$  as follows
\begin{align}
    \ket{\psi_{\bc}}= \sum_{\chi\in \wC}\frac{\sqrt{\kappa_{\chi}}}{\sqrt{|\localcode|}}\chi^{-1}(\bc)\ket{\chi}.
\end{align}
Therefore, the inner product $\braket{\psi_{\bc}|\zeta_{\subspace,\bs}^{\subspaceperp_{m}}}$ satisfies 
\begin{align}\label{eq: psi zeta inner product}
    \braket{\psi_{\bc}|\zeta_{\subspace,\bs}^{\subspaceperp_{m}}} &\ =  \frac{1}{\sqrt{\eta_{\subspace}(\subspaceperp_m)|\localcode|}}\sum_{\chi\in \wC}\sum_{\chi'\in\subspaceperp_m}\frac{\chi(\bc)\chi'(-\ba)\sqrt{\kappa_{\chi}}}{\sqrt{\kappa_{\chi'}}}\braket{\chi|\chi'}\\
    &\ = \frac{1}{\sqrt{\eta_{\subspace}(\subspaceperp_m)|\localcode|}}\sum_{\chi'\in\subspaceperp_m}\chi'(\bc-\ba).
\end{align}
Then for some character $\tilde{\chi}\in \subspaceperp_m$, we obtain
\begin{align}\label{eq: dual coset sum}
    \braket{\psi_{\bc}|\zeta_{\subspace,\bs}^{\subspaceperp_{m}}} &\ = \frac{\tilde{\chi}(\bc-\ba)}{\sqrt{\eta_{\subspace}(\subspaceperp_m)|\localcode|}}\sum_{\chi''\in \subspaceperp}\chi''(\bc-\ba).
\end{align}
Using the definition of $\subspaceperp$, it follows that
\begin{align}
    \sum_{\chi''\in \subspaceperp}\chi''(\bc-\ba)= \begin{cases}
      &  |\subspaceperp|, \quad \text{if $\bc-\ba\in \subspace$}\\
      & 0, \quad \text{otherwise}.
    \end{cases}
\end{align}
Thus, $\braket{\psi_{\bc}|\zeta_{\subspace,\bs}^{\subspaceperp_m}}=0$ whenever $\bc-\ba\notin \subspace$ for $\ba\in \subspace_{\bs}$ which implies $\braket{\psi_{\bc}|\zeta_{\subspace,\bs}^{\subspaceperp_m}}=0$, for $\bc\notin \subspace_{\bs}$. 

\noindent Therefore, every orthonormal vector $\ket{\zeta_{\subspace,\bs}^{\subspaceperp_m}}$ lies in $\mathcal{W}_{\sim \subspace_{\bs}}^{\perp}$ $\forall\subspace\in \codesubspaces^{\localcode},\bs\in\Image(H_{\subspace}), \subspaceperp_m\in\dualcosetspace$. Since each vector $\ket{\zeta_{\subspace,\bs}^{\subspaceperp_m}}$ is spanned by character states $\characterbasis$, each orthonormal vector $\ket{\zeta_{\subspace,\bs}^{\subspaceperp_m}}\in \mathcal{W}_{\sim \subspace_{\bs}}^{\perp}\cap \mathcal{W}$ $\forall\subspace\in \codesubspaces^{\localcode},\bs\in\Image(H_{\subspace}), \subspaceperp_m\in \dualcosetspace$. 
Under the full-rank assumption on $G$, the states
$\{\ket{\psi_{\bc}}\}_{\bc\in\localcode}$ are linearly independent, and hence
the dimension of $\mathcal{W}_{\sim \subspace_{\bs}}^{\perp}\cap \mathcal{W}$ is  $\dim(\mathcal{W}_{\sim \subspace_{\bs}}^{\perp}\cap \mathcal{W})=q^{\dim(\subspace)}$. As  $\zetabasis$ forms a collection of orthonormal vectors and there are $q^{\dim(\subspace)}$ such vectors, $\zetabasis$ forms an orthonormal basis for $\mathcal{W}_{\sim \subspace_{\bs}}^{\perp}\cap \mathcal{W}$.
For the vectors $\ket{\zeta_{\subspace,\bs}^{\subspaceperp_{m_1}}}$ and $\ket{\zeta_{\subspace,\bs}^{\subspaceperp_{m_2}}}$, for some $\subspace\in \codesubspaces^{\localcode},\bs\in\Image(H_{\subspace}), \subspaceperp_{m_1},\subspaceperp_{m_2}\in\dualcosetspace$ the inner product $\bra{\zeta_{\subspace,\bs}^{\subspaceperp_{m_1}}}\Pi_{\subspace,\bs}\ket{\zeta_{\subspace,\bs}^{\subspaceperp_{m_2}}} $ satisfies 
\begin{align*}
    &\ \bra{\zeta_{\subspace,\bs}^{\subspaceperp_{m_1}}}\Pi_{\subspace,\bs}\ket{\zeta_{\subspace,\bs}^{\subspaceperp_{m_2}}} \\
    &\ = \frac{1}{\sqrt{\eta_{\subspace}(\subspaceperp_{m_1})\eta_{\subspace}(\subspaceperp_{m_2})}}\sum_{\chi_{1}\in\subspaceperp_{m_1}}\sum_{\chi_{2}\in\subspaceperp
    _{m_2}}\frac{\chi_{1}(\ba)\chi_{2}(-\ba)}{\sqrt{\kappa_{\chi_{1}}\kappa_{\chi_{2}}}}\bra{\chi_{1}}\Pi_{\subspace,\bs}\ket{\chi_{2}}.
\end{align*}
When $m_{1}\neq m_{2}$, it satisfies $\chi_{2}\chi_{1}^{-1}\notin \subspaceperp$ $\forall \chi_{1}\in\subspaceperp_{m_1},\chi_{2}\in\subspaceperp_{m_2}$. Hence, using Lemma~\ref{lem: povm dual cosets}, this gives  $\bra{\chi_{1}}\Pi_{\subspace,\bs}\ket{\chi_{2}}=0$ $\forall \chi_{1}\in\subspaceperp_{m_1},\chi_{2}\in\subspaceperp_{m_2}$ with $m_1\neq m_2$. Since $\Pi_{\subspace,\bs}\in \mathcal{L}(\mathcal{W}^{\perp}_{\sim \subspace_{\bs}}\cap \mathcal{W})$, the POVM element $\Pi_{\subspace,\bs}$ is diagonal in the basis $\zetabasis$. Thus we can write each POVM element $\Pi_{\subspace,\bs}$ for $\subspace\in \codesubspaces^{\localcode},\bs\in\Image(H_{\subspace})$ as 
\begin{align}\label{eq: povm diagonal structure}
   \Pi_{\subspace,\bs}=\sum_{\subspaceperp_{m}\in \wC/\subspaceperp}\beta_{\subspace,\bs}(\subspaceperp_{m})\ket{\zeta_{\subspace,\bs}^{\subspaceperp_{m}}}\bra{\zeta_{\subspace,\bs}^{\subspaceperp_{m}}}.
\end{align}
Choosing $\beta_{\subspace,\bs}(\subspaceperp_{m})\geq 0$ for all $\subspaceperp_m\in \dualcosetspace$ makes $\Pi_{\subspace,\bs}$ PSD and completes the proof.

\end{proof}
\subsection{Obtaining the LP from SDP}\label{sec:SDP to LP}
Using the diagonal form of the operators $\{\Pi_{\subspace,\bs}\}_{\subspace\in\codesubspaces^{\localcode},\bs\in \Image(H_{\subspace})}$, we now reduce the SDP in \eqref{SDP main} to an LP.
Consider scalars $\beta_{\subspace}(\subspaceperp_m)$ defined  for each $\subspace\in \codesubspaces^{\localcode}$ and $\subspaceperp_m\in\dualcosetspace $. Define the function \\
$\flp(\{\beta_{\subspace}(\subspaceperp_m)\}_{\subspace\in \codesubspaces^{\localcode},\subspaceperp_m\in\dualcosetspace })$ as follows
\begin{align}
   \flp(\{\beta_{\subspace}(\subspaceperp_m)\}_{\subspace\in \codesubspaces^{\localcode},\subspaceperp_m\in\dualcosetspace })\coloneqq  \sum_{\subspace\in \codesubspaces^{\localcode}}\sum_{\subspaceperp_m\in \dualcosetspace}\beta_{\subspace}(\subspaceperp_m)\frac{q^{2(\dim(\localcode)-\dim(\subspace))}}{\eta_{\subspace}(\subspaceperp_m)|\localcode|}\reward(\subspace).
\end{align}
Next, we define the set of feasible scalar collections $\{\beta_{\subspace}(\subspaceperp_m)\}_{\subspace\in \codesubspaces^{\localcode},\subspaceperp_m\in\dualcosetspace }$, denoted by $\lpfeasible$,  satisfying:
\begin{align}
 &\   \lpfeasible \\
 &\ \coloneqq  \left\{ \{\beta_{\subspace}(\subspaceperp_m)\}_{\subspace\in \codesubspaces^{\localcode},\subspaceperp_m\in\dualcosetspace } \;\middle|\; 
    \begin{aligned}
        & \sum_{\subspace\in \codesubspaces^{\localcode}}\sum_{\subspaceperp_m\in\dualcosetspace}\frac{\beta_{\subspace}(\subspaceperp_m)q^{\dim(\localcode)-\dim(\subspace)}\mone(\chi\in \subspaceperp_m)}{\eta_{\subspace}(\subspaceperp_m)\kappa_{\chi}}\leq 1, \forall \chi\in \wC \\
        & \text{ where $\eta_{\subspace}(\subspaceperp_m)=\sum_{\chi\in \subspaceperp_m}\frac{1}{\kappa_{\chi}}$, $\kappa_{\chi}=\sum_{\bc\in \localcode}g(\bc)\chi(\bc)$, $g(\bc)=\braket{\psi_{\mzero}|\psi_{\bc}}$ } \\
        & \beta_{\subspace}(\subspaceperp_m)\geq 0, \qquad \forall \subspace\in \codesubspaces^{\localcode}, \subspaceperp_m\in \dualcosetspace
    \end{aligned}
    \right\}\nonumber.
\end{align}
Finally, we define the optimal LP value, $\objlp(\lpfeasible)$, as the maximization of the objective function over the feasible set:
\begin{align}
    &\ \objlp(\lpfeasible) \\
    &\ \coloneqq \text{Max} \left\{ \flp\left(\fullbeta\right) \;\middle|\; \fullbeta\in \lpfeasible \right\}\nonumber.
\end{align}
\begin{theorem}\label{thm:affine usd sdp-lp equivalence}
    For a local code $\localcode\subset\mbF^{|I|}_{q}$ supported on the index set $I$, and a collection of quantum states $\{\ket{\psi_{\bc}}\}_{\bc\in \localcode}$ defined in \eqref{def: quantum state}, for any non-negative reward function $\reward(\subspace)$ for all $\subspace\in\codesubspaces^{\localcode}$ and $\kappa_{\chi}>0$ $\forall\chi\in\wC$, the following equality holds
    \begin{align}
        \objsdp(\sdpfeasible)=\objlp(\lpfeasible).
    \end{align}
Moreover, the optimal POVM $\fullpovmstar$, which attains the SDP objective\\
$\objsdp(\sdpfeasible)$, can be constructed from optimal scalars
$\fullbetastar$, which attain the LP objective $\objlp(\lpfeasible)$, such that
\begin{align*}
      \Pi_{\subspace,\bs}^* =\sum_{\subspaceperp_{m}\in \wC/\subspaceperp}\beta_{\subspace}^*(\subspaceperp_{m})\ket{\zeta_{\subspace,\bs}^{\subspaceperp_{m}}}\bra{\zeta_{\subspace,\bs}^{\subspaceperp_{m}}},
\end{align*}
for all $\subspace\in \codesubspaces^{\localcode}$ and $\bs\in \Image(H_{\subspace})$, where the vector $\ket{\zeta_{\subspace,\bs}^{\subspaceperp_{m}}}$ is defined in Lemma~\ref{lem: povm zeta diagonal form}. The matrix $\Pi_{e}^*$ is formed as
\begin{align*}
\Pi_{e}^*=\mI-\sum_{\subspace\in \codesubspaces^{\localcode}}\Pi_{\subspace}^*,
\end{align*}
where $\Pi_{\subspace}^*=\sum_{\bs\in \Image(H_{\subspace})}\Pi_{\subspace,\bs}^*$.
\end{theorem}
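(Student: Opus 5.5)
The plan is to prove the two inequalities $\objsdp(\sdpfeasible)\le\objlp(\lpfeasible)$ and $\objsdp(\sdpfeasible)\ge\objlp(\lpfeasible)$ separately. Both rest on two scalar computations with the $\zeta$-vectors of Lemma~\ref{lem: povm zeta diagonal form}. First, from \eqref{eq: psi zeta inner product} and the second character identity of Lemma~\ref{lem:character-identities},
\[
|\braket{\psi_{\mzero}|\zeta_{\subspace,\bs}^{\subspaceperp_m}}|^2=\frac{q^{2(\dim(\localcode)-\dim(\subspace))}}{\eta_{\subspace}(\subspaceperp_m)|\localcode|}\,\mone(\bs=\mzero),
\]
using $|\subspaceperp|=q^{\dim(\localcode)-\dim(\subspace)}$. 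Second, directly from the definition,
\[
|\braket{\chi|\zeta_{\subspace,\bs}^{\subspaceperp_m}}|^2=\frac{\mone(\chi\in\subspaceperp_m)}{\eta_{\subspace}(\subspaceperp_m)\kappa_{\chi}},
\]
independently of $\bs$. These two identities are exactly the coefficients appearing in $\flp$ and in the constraints of $\lpfeasible$.

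\emph{Direction LP $\le$ SDP.} Given feasible $\fullbeta\in\lpfeasible$, I would define $\Pi_{\subspace,\bs}$ by the formula in the theorem, with $\beta_{\subspace}$ independent of $\bs$, and set $\Pi_e=\mI-\sum_{\subspace}\Pi_{\subspace}$.
\begin{itemize}
\item Each $\Pi_{\subspace,\bs}$ is PSD because $\beta_{\subspace}(\subspaceperp_m)\ge 0$.
\item Unambiguity holds because $\ket{\zeta_{\subspace,\bs}^{\subspaceperp_m}}\perp\ket{\psi_{\bc}}$ for all $\bc\notin\subspace_{\bs}$, as shown in the proof of Lemma~\ref{lem: povm zeta diagonal form}.
\item For positivity of $\Pi_e$: since $\{\ket{\zeta_{\subspace,\bs}^{\subspaceperp_m}}\}_{\bs}$ are supported on the disjoint character sets and sum coherently over $\bs$, I will show that $\Pi_{\subspace}$ is diagonal in $\characterbasis$ with entries $\gamma_{\subspace}(\chi)=\sum_{\bs}\sum_m\beta_{\subspace}(\subspaceperp_m)|\braket{\chi|\zeta_{\subspace,\bs}^{\subspaceperp_m}}|^2$. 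This equals the LP constraint expression. The cleanest route is to expand $\sum_{\bs}$ and use character orthogonality over syndromes, so that off-diagonal terms $\chi_1\ne\chi_2$ in the same coset cancel.
\item Hence $\sum_{\subspace}\Pi_{\subspace}\preceq P_{\mathcal W}\preceq\mI$ and $\Pi_e\succeq0$.
\item The objective $\frac{1}{|\localcode|}\sum_{\bc}\ldots$ equals $\flp$. By the translation covariance $X^{\bw}\ket{\zeta_{\subspace,\bs}^{\subspaceperp_m}}\propto\ket{\zeta_{\subspace,\bs+H_{\subspace}\bw}^{\subspaceperp_m}}$, it reduces to the $\bc=\mzero$ term, and only $\bs=\mzero$ contributes.
\end{itemize}

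\emph{Direction SDP $\le$ LP.} Start from an optimal POVM and apply Lemma~\ref{lem :coset symmetry}, then the restriction of Lemma~\ref{lem: POVM space} and Corollary~\ref{corol:povm space}. To keep both properties at once, I note that $\mathcal W$ is invariant under $X^{\bc}$ for $\bc\in\localcode$, so $P_{\mathcal W}$ commutes with these shifts and compression preserves coset symmetry. Lemma~\ref{lem: povm zeta diagonal form} then gives coefficients $\beta_{\subspace,\bs}(\subspaceperp_m)\ge0$. Coset symmetry together with the covariance of the $\zeta$-vectors forces $\beta_{\subspace,\bs}=\beta_{\subspace,\mzero}=:\beta_{\subspace}$. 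The same two scalar computations then show that the symmetrized objective $\sum_{\subspace}\Tr(\Pi_{\subspace}\ket{\psi_{\mzero}}\bra{\psi_{\mzero}})\reward(\subspace)$ equals $\flp(\beta)$. They also show that the condition $\sum_{\subspace}\gamma_{\subspace}(\chi)\le1$, which follows from Lemma~\ref{lem: pi_e diagonal form}, is exactly the LP constraint, so $\beta\in\lpfeasible$.

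Combining both directions gives equality of the optimal values, and the LP-to-POVM construction applied to optimal $\beta^*$ yields the stated optimal POVM. The main obstacle is the bookkeeping in the converse step: showing that $\Pi_{\subspace}=\sum_{\bs}\Pi_{\subspace,\bs}$ is genuinely diagonal in the character basis with the claimed diagonal value, and that compression to $\mathcal W$ preserves coset symmetry. I expect the first to follow from $\sum_{\bs}\chi_1(\ba_{\bs})\overline{\chi_2(\ba_{\bs})}=0$ for $\chi_1\ne\chi_2$ in the same coset of $\subspaceperp$, which is character orthogonality on $\localcode/\subspace$.
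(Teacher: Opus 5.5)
Your proposal is correct and follows essentially the same route as the paper. Both directions rest on the same two overlaps $|\braket{\psi_{\mzero}|\zeta_{\subspace,\bs}^{\subspaceperp_m}}|^2$ and $|\braket{\chi|\zeta_{\subspace,\bs}^{\subspaceperp_m}}|^2$, on coset symmetry forcing $\beta_{\subspace,\bs}=\beta_{\subspace}$, and on the positivity condition from Lemma~\ref{lem: pi_e diagonal form}. You also make explicit two points the paper leaves implicit, which tightens the argument without changing it: compressing by $P_{\mathcal W}$ preserves coset symmetry because $P_{\mathcal W}$ commutes with $X^{\bc}$, and for the LP-constructed POVM, $\Pi_{\subspace}$ is diagonal in $\characterbasis$ by character orthogonality on $\localcode/\subspace$.
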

\begin{proof}
    Since $\kappa_{\chi}>0$ for all $\chi\in\wC$, the normalized character basis $\{\ket{\chi}\}_{\chi\in\wC}$ and the quantities $\eta_{\subspace}(\subspaceperp_m)$ from Lemma~\ref{lem: povm zeta diagonal form} are well defined. By Lemma~\ref{lem :coset symmetry}, we may restrict attention to an optimal 
    affine filtering POVM satisfying the coset symmetry relation
\begin{align*}
    \Pi_{\subspace,\bs_2}=X^{\bw_2-\bw_1}\Pi_{\subspace,\bs_1}X^{-(\bw_2-\bw_1)}
\end{align*}
for all $\subspace\in \codesubspaces^{\localcode}$, $\bs_1,\bs_2\in \Image(H_{\subspace})$, $\bw_1\in \subspace_{\bs_1}$, and $\bw_2\in \subspace_{\bs_2}$. By Lemma~\ref{lem: povm zeta diagonal form}, each informative POVM element admits an expansion in the basis $\{\ket{\zeta_{\subspace,\bs}^{\subspaceperp_m}}\}_{\subspaceperp_m\in \dualcosetspace}$.
For $\bw_{1},\ba_1\in \subspace_{\bs_1}$ and $\bw_{2}\in \subspace_{\bs_2}$ we have
\begin{align}
    X^{\bw_{2}-\bw_{1}}\ket{\zeta_{\subspace,\bs_1}^{\subspaceperp_m}} &\ = \frac{1}{\sqrt{\eta_{\subspace}(\subspaceperp_m)}}\sum_{\chi\in \subspaceperp_m}\frac{\chi(-\ba_1)}{\sqrt{\kappa_{\chi}}}X^{\bw_{2}-\bw_{1}}\ket{\chi}\\
    &\ = \frac{1}{\sqrt{\eta_{\subspace}(\subspaceperp_m})}\sum_{\chi\in \subspaceperp_m}\frac{\chi(-\ba_1-\bw_2+\bw_1)}{\sqrt{\kappa_{\chi}}}\ket{\chi}.
\end{align}
Therefore conjugation by $X^{\bw_2-\bw_1}$ maps each rank-one projector $\ket{\zeta_{\subspace,\bs_1}^{\subspaceperp_m}}\bra{\zeta_{\subspace,\bs_1}^{\subspaceperp_m}}$ to  $\ket{\zeta_{\subspace,\bs_2}^{\subspaceperp_m}}\bra{\zeta_{\subspace,\bs_2}^{\subspaceperp_m}}.$
For a coset-symmetrized optimal POVM, the coefficients in the expansion of $\Pi_{\subspace,\bs}$ in this basis are independent of $\bs$. Therefore, for every $\subspace\in \codesubspaces^{\localcode}$ and every $\subspaceperp_m\in \dualcosetspace$, we may write
\begin{align*}
    \beta_{\subspace,\bs}(\subspaceperp_m)=\beta_{\subspace}(\subspaceperp_m), \qquad \forall \bs\in \Image(H_{\subspace}).
\end{align*}

\noindent For a character vector $\ket{\chi}$ with $\chi\in \subspaceperp_m$ and for some $\subspace\in \codesubspaces^{\localcode},\bs\in\Image(H_{\subspace}), \subspaceperp_{m}\in \dualcosetspace$, codeword $\ba\in \subspace_{\bs}$, the inner product satisfies
\begin{align}\label{eq:character zeta relation}
    \braket{\chi|\zeta_{\subspace,\bs}^{\subspaceperp_m}}=\frac{\chi(-\ba)}{\sqrt{\eta_{\subspace}(\subspaceperp_{m})\kappa_{\chi}}}.
\end{align}
Since the POVM elements $\Pi_{\subspace,\bs}$ for some subspace $\subspace\in\codesubspaces^{\localcode}$ satisfy the coset symmetry relation $\forall\bs\in \Image(H_{\subspace})$, using Lemma~\ref{lem: povm X symmetry relations}, we immediately get $\Pi_{\subspace}=X^{\bc}\Pi_{\subspace}X^{-\bc}$ for all $\subspace\in \codesubspaces^{\localcode}$ and $\bc\in \localcode$. Using Lemma~\ref{lem: povm subspace diagonal character basis},
 the operator $\Pi_{\subspace}$ has the diagonal form $ \Pi_{\subspace}=\sum_{\chi\in \wC}\gamma_{\subspace}(\chi)\ket{\chi}\bra{\chi}$ for all $\subspace\in \codesubspaces^{\localcode}$. From \eqref{eq:character zeta relation}, we can compute the expression for $\gamma_{\subspace}(\chi)$ for all $\subspace\in \codesubspaces^{\localcode}$ and $\chi\in \wC$ as follows
\begin{align}
    \gamma_{\subspace}(\chi) &\ =\sum_{\bs\in \Image(H_{\subspace})}\bra{\chi}\Pi_{\subspace,\bs}\ket{\chi}\\
    &\ = \sum_{\bs\in \Image(H_{\subspace})}\sum_{\subspaceperp_{m}\in\dualcosetspace}\beta_{\subspace}(\subspaceperp_m)|\braket{\chi|\zeta_{\subspace,\bs}^{\subspaceperp_m}}|^{2}\\
    &\ = |\Image(H_{\subspace})|\sum_{\subspaceperp_{m}\in \dualcosetspace}\frac{\beta_{\subspace}(\subspaceperp_m)\mone(\chi\in \subspaceperp_m)}{\eta_{\subspace}(\subspaceperp_m)\kappa_{\chi}}\\
    &\ = \sum_{\subspaceperp_m\in \dualcosetspace}\frac{\beta_{\subspace}(\subspaceperp_m)|\Image(H_{\subspace})|\mone(\chi\in \subspaceperp_m)}{\eta_{\subspace}(\subspaceperp_m)\kappa_{\chi}}.
\end{align}
Since $|\Image(H_{\subspace})|=q^{\dim(\localcode)-\dim(\subspace)}$, we obtain 
\begin{align}\label{eq: gamma expression wrt dual cosets}
    \gamma_{\subspace}(\chi)= \sum_{\subspaceperp_m\in \dualcosetspace}\frac{\beta_{\subspace}(\subspaceperp_m)q^{\dim(\localcode)-\dim(\subspace)}\mone(\chi\in \subspaceperp_m)}{\eta_{\subspace}(\subspaceperp_m)\kappa_{\chi}}.
\end{align}
The diagonal coefficients of $\Pi_{\subspace}$ in the character basis are completely determined by the scalars $\{\beta_{\subspace}(\subspaceperp_m)\}$. By Lemma~\ref{lem: pi_e diagonal form}, positivity of the inconclusive operator $\Pi_e$ on the span of codeword states is equivalent to requiring
\begin{align*}
    1-\sum_{\subspace\in \codesubspaces^{\localcode}}\gamma_{\subspace}(\chi)\geq 0, \qquad \forall \chi\in \wC.
\end{align*}
Substituting \eqref{eq: gamma expression wrt dual cosets} gives the LP feasibility constraint.
This implies
\begin{align}
    \sum_{\subspace\in \codesubspaces^{\localcode}}\sum_{\subspaceperp_m\in\dualcosetspace}\frac{\beta_{\subspace}(\subspaceperp_m)q^{\dim(\localcode)-\dim(\subspace)}\mone(\chi\in \subspaceperp_m)}{\eta_{\subspace}(\subspaceperp_m)\kappa_{\chi}}\leq 1, \qquad \forall \chi\in \wC.
\end{align}
When a codeword $\bc\in \subspace_{\bs}$, using \eqref{eq: psi zeta inner product}  and \eqref{eq: dual coset sum}, this gives
\begin{align}
    \bra{\psi_{\bc}}\Pi_{\subspace,\bs}\ket{\psi_{\bc}} &\ = \sum_{\subspaceperp_m\in \dualcosetspace}\beta_{\subspace}(\subspaceperp_m)\frac{|\subspaceperp|^{2}}{\eta_{\subspace}(\subspaceperp_m)|\localcode|}.
\end{align}

\noindent For fixed $\subspace\in \codesubspaces^{\localcode}$, the above quantity is the same for every syndrome $\bs$ whenever the codeword $\bc$ lies in $\subspace_{\bs}$. In particular, when $\bc=\mzero$, the unambiguity condition implies
\begin{align*}
    \bra{\psi_{\mzero}}\Pi_{\subspace,\bs}\ket{\psi_{\mzero}}=0,
    \qquad \forall \bs\neq \mzero,
\end{align*}
because $\mzero\notin \subspace_{\bs}$ whenever $\bs\neq \mzero$. Therefore
\begin{align}
    \Tr(\Pi_{\subspace}\ket{\psi_{\mzero}}\bra{\psi_{\mzero}})
    &\ =\sum_{\bs\in \Image(H_{\subspace})}\bra{\psi_{\mzero}}\Pi_{\subspace,\bs}\ket{\psi_{\mzero}} \\
    &\ = \bra{\psi_{\mzero}}\Pi_{\subspace,\mzero}\ket{\psi_{\mzero}} \\
    &\ = \sum_{\subspaceperp_m\in \dualcosetspace}\beta_{\subspace}(\subspaceperp_m)\frac{|\subspaceperp|^{2}}{\eta_{\subspace}(\subspaceperp_m)|\localcode|}.
\end{align}
Using $|\subspaceperp|=q^{\dim(\localcode)-\dim(\subspace)}$, we finally obtain 
\begin{align}\label{eq: lp probablity success relation for subspace}
    \Tr(\Pi_{\subspace}\ket{\psi_{\mzero}}\bra{\psi_{\mzero}})= \sum_{\subspaceperp_m\in \dualcosetspace}\beta_{\subspace}(\subspaceperp_m)\frac{q^{2(\dim(\localcode)-\dim(\subspace))}}{\eta_{\subspace}(\subspaceperp_m)|\localcode|}.
\end{align}
Every feasible coset-symmetrized 
affine filtering POVM determines a feasible collection of LP variables $\{\beta_{\subspace}(\subspaceperp_m)\}_{\subspace,\subspaceperp_m}$ with the same objective value. This shows
\begin{align*}
    \objsdp(\sdpfeasible)\leq \objlp(\lpfeasible).
\end{align*}
Conversely, given any feasible LP variables $\{\beta_{\subspace}(\subspaceperp_m)\}_{\subspace,\subspaceperp_m}\in \lpfeasible$, define
\begin{align*}
    \Pi_{\subspace,\bs}:=\sum_{\subspaceperp_m\in \dualcosetspace}\beta_{\subspace}(\subspaceperp_m)\ket{\zeta_{\subspace,\bs}^{\subspaceperp_m}}\bra{\zeta_{\subspace,\bs}^{\subspaceperp_m}},
\end{align*}
for each $\subspace\in \codesubspaces^{\localcode}$ and $\bs\in \Image(H_{\subspace})$, and let
\begin{align*}
    \Pi_{\subspace}:=\sum_{\bs\in \Image(H_{\subspace})}\Pi_{\subspace,\bs},
    \qquad
    \Pi_e:=\mI-\sum_{\subspace\in \codesubspaces^{\localcode}}\Pi_{\subspace}.
\end{align*}
Since $\beta_{\subspace}(\subspaceperp_m)\geq 0$, each operator
$\Pi_{\subspace,\bs}$ is PSD. By
Lemma~\ref{lem: povm zeta diagonal form}, every vector
$\ket{\zeta_{\subspace,\bs}^{\subspaceperp_m}}$ lies in
$\mathcal{W}^{\perp}_{\sim \subspace_{\bs}}\cap \mathcal{W}$, so the 
affine filtering
unambiguity constraints hold. Moreover, the LP feasibility constraint is exactly the
condition from Lemma~\ref{lem: pi_e diagonal form} guaranteeing
\[
P_{\mathcal{W}}\Pi_e P_{\mathcal{W}}\succeq \mzero.
\]
Since all informative POVM elements act on $\mathcal{W}$, the operator $\Pi_e$
acts as the identity on $\mathcal{W}^{\perp}$. Hence $\Pi_e\succeq \mzero$.
These operators therefore form a feasible POVM attaining the LP objective, which proves
\begin{align*}
    \objlp(\lpfeasible)\leq \objsdp(\sdpfeasible).
\end{align*}
Combining the two inequalities proves the theorem.
Consequently the SDP optimization problem stated in \eqref{SDP main} can be converted into an LP which is formally stated below.

\begin{LPbox}
For a local code $\localcode\subset\mbF^{|I|}_{q}$ supported on the index set $I$, and a collection of quantum states $\{\ket{\psi_{\bc}}\}_{\bc\in \localcode}$ defined in \eqref{def: quantum state}, suppose we want to find the optimal POVM $\fullpovmstar$ for the affine filtering measurement defined in Def~\ref{def:affineUSD}. Then, this problem can be solved using the following LP with variables $\{\beta_{\subspace}(\subspaceperp_m)\}_{\subspace,\subspaceperp_m}$ 
\begin{align}\label{LP: final form}
        \text{Maximize}  \quad &\   \sum_{\subspace\in \codesubspaces^{\localcode}}\sum_{\subspaceperp_m\in \dualcosetspace}\beta_{\subspace}(\subspaceperp_m)\frac{q^{2(\dim(\localcode)-\dim(\subspace))}}{\eta_{\subspace}(\subspaceperp_m)|\localcode|}\reward(\subspace)\\
 \text{Subject to:} \quad 
    &\ \sum_{\subspace\in \codesubspaces^{\localcode}}\sum_{\subspaceperp_m\in\dualcosetspace}\frac{\beta_{\subspace}(\subspaceperp_m)q^{\dim(\localcode)-\dim(\subspace)}\mone(\chi\in \subspaceperp_m)}{\eta_{\subspace}(\subspaceperp_m)\kappa_{\chi}}\leq 1, \qquad \forall \chi\in \wC\\
    &\ \beta_{\subspace}(\subspaceperp_m)\geq 0, \qquad \forall \subspace\in \codesubspaces^{\localcode}, \subspaceperp_m\in \dualcosetspace \nonumber
    \end{align}
    where $\eta_{\subspace}(\subspaceperp_m)=\sum_{\chi\in \subspaceperp_m}\frac{1}{\kappa_{\chi}}$, $\kappa_{\chi}=\sum_{\bc\in \localcode}g(\bc)\chi(\bc)$, $g(\bc)=\braket{\psi_{\mzero}|\psi_{\bc}}$.
\end{LPbox}
\end{proof}

 \noindent  Define the quantity $\theta_{\subspace}(\subspaceperp_m)=\beta_{\subspace}(\subspaceperp_m)\frac{q^{\dim(\localcode)-\dim(\subspace)}}{\eta_{\subspace}(\subspaceperp_m)}$. Then the LP primal form reduces to 
    \begin{align}\label{eq:theta-LP form}
        \text{Maximize}  \quad &\   \sum_{\subspace\in \codesubspaces^{\localcode}}\sum_{\subspaceperp_m\in \dualcosetspace}\theta_{\subspace}(\subspaceperp_m)\frac{q^{\dim(\localcode)-\dim(\subspace)}}{|\localcode|}\reward(\subspace)\\
 \text{Subject to:} \quad 
    &\ \sum_{\subspace\in \codesubspaces^{\localcode}}\sum_{\subspaceperp_m\in\dualcosetspace:\chi\in \subspaceperp_m}\theta_{\subspace}(\subspaceperp_m)\leq \kappa_{\chi}, \qquad \forall \chi\in \wC\\
    &\ \theta_{\subspace}(\subspaceperp_m)\geq 0, \qquad \forall \subspace\in \codesubspaces^{\localcode}, \subspaceperp_m\in \dualcosetspace.
    \end{align}
We denote this LP by $ \text{LP}\Big(
    \{\theta_{\subspace}(\subspaceperp_m)\}_{
    \subspace\in \codesubspaces^{\localcode},
    \subspaceperp_m\in \dualcosetspace},
    \{\kappa_{\chi}\}_{\chi\in \wC}
    \Big)$ and its optimal value by \\
    $\objlp(\{\theta_{\subspace}(\subspaceperp_m)\}_{
    \subspace\in\codesubspaces^{\localcode},
    \subspaceperp_m\in\dualcosetspace},
    \{\kappa_{\chi}\}_{\chi\in\wC})$. Then from Theorem~\ref{thm:affine usd sdp-lp equivalence}
    \begin{align*}
        \objlp(\lpfeasible)= \objlp(\{\theta_{\subspace}(\subspaceperp_m)\}_{
    \subspace\in\codesubspaces^{\localcode},
    \subspaceperp_m\in\dualcosetspace},
    \{\kappa_{\chi}\}_{\chi\in\wC}).
    \end{align*}
\subsection{Finding the Dual LP}
By standard LP duality, the primal LP in Eq.~\eqref{LP: final form} has the following dual form.
Consider dual variables $\dualvariable_{\chi}\geq 0$ for each $\chi\in \wC$. Then dual problem is 
\begin{align}
    \text{Minimize}  \quad \sum_{\chi\in \wC}\dualvariable_{\chi}.
\end{align}
For each subspace $\subspace\in \codesubspaces^{\localcode}$ and coset $\subspaceperp_m\in \dualcosetspace$, the corresponding dual constraint is
\begin{align}
   \sum_{\chi\in \wC} \frac{q^{\dim(\localcode)-\dim(\subspace)}\mone(\chi\in \subspaceperp_m)}{\eta_{\subspace}(\subspaceperp_m)\kappa_{\chi}}\dualvariable_{\chi}\geq \frac{q^{2(\dim(\localcode)-\dim(\subspace))}}{\eta_{\subspace}(\subspaceperp_m)|\localcode|}\reward(\subspace)
\end{align}
which can be further reduced to 
\begin{align}
    \sum_{\chi\in\subspaceperp_m}\frac{\dualvariable_{\chi}}{\kappa_{\chi}}\geq \frac{q^{\dim(\localcode)-\dim(\subspace)}\reward(\subspace)}{|\localcode|}.
\end{align}
Since $|\localcode|=q^{\dim(\localcode)}$, the final dual LP form can be written as 
\begin{LPdualbox}
    \begin{align}
    \text{Minimize}  \quad &\  \sum_{\chi\in \wC}\dualvariable_{\chi}\\
     \text{Subject to:} \quad &\ \sum_{\chi\in\subspaceperp_m}\frac{\dualvariable_{\chi}}{\kappa_{\chi}}\geq \frac{\reward(\subspace)}{q^{\dim(\subspace)}}, \quad \forall\subspace\in \codesubspaces^{\localcode}, \subspaceperp_m\in \dualcosetspace.\\
     &\ \dualvariable_{\chi}\geq 0,\quad \forall\chi\in \wC
\end{align}
where $\kappa_{\chi}=\sum_{\bc\in \localcode}g(\bc)\chi(\bc)$, $g(\bc)=\braket{\psi_{\mzero}|\psi_{\bc}}$.
\end{LPdualbox}

\subsection{Optimal Objective for Two Level Gram Matrix Eigenvalue Spectrum}\label{sec:SDP-LP fgum special case}
In this section, we show that if the Gram matrix for the collection of quantum states
$\{\ket{\psi_{\bc}}\}_{\bc\in \localcode}$ has a specific two-level eigenvalue spectrum
$\{\kappa_{\chi}\}_{\chi\in \wC}$ such that
$\kappa_{\chitriv}=\kappamax$ and
$\kappa_{\chi}=\kappamin$ for all $\chi\neq \chitriv$,
and if we fix
$\reward(\subspace)=\dim(\localcode)-\dim(\subspace)$,
then the LP objective, namely the optimal expected number of equations, becomes directly computable.
Moreover, this eigenvalue spectrum is directly related to the choice of quantum states in the local quantum decoding based optimization framework of~\cite{shutty2026lqd}.
In Lemma~\ref{lem:kappa_config_for_global_usd}, we rigorously prove the closed-form expression of the affine filtering objective for this specific eigenvalue spectrum of the Gram matrix $G$.
\begin{lemma}\label{lem:kappa_config_for_global_usd}
    Consider the collection of quantum states $\{\ket{\psi_{\bc}}\}_{\bc\in \localcode}$, such that $\kappa_{\chitriv}=\kappamax$ and $\kappa_{\chi}=\kappamin$ $\forall\chi\neq \chitriv$, with $\kappamax> \kappamin$ and $\reward(\subspace)=\dim(\localcode)-\dim(\subspace)$. Then the following holds
    \begin{align*}
        \objlp(\lpfeasible)=\kappamin\dim(\localcode).
    \end{align*}
\end{lemma}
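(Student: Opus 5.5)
We work with the $\theta$-form of the LP in \eqref{eq:theta-LP form} and with the dual LP, and exhibit a primal feasible point and a dual feasible point that both have value $\kappamin\dim(\localcode)$. Weak duality then gives $\objlp(\lpfeasible)=\kappamin\dim(\localcode)$. Write $k=\dim(\localcode)$, so that $|\localcode|=q^k$.

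\textbf{Primal construction.} The idea is to mimic codeword filtering, but to use the trivial coset on the subspace $\subspace=\{\mzero\}$.

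First take $\subspace=\{\mzero\}$. Then $\subspaceperp=\wC$, so $\dualcosetspace$ has a single coset, namely $\wC$ itself. Set $\theta_{\{\mzero\}}(\wC)=\kappamin$. This contributes $\kappamin$ to every character constraint. It also contributes $\kappamin\,q^{k}\,k/q^k=k\kappamin$ to the objective.

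Set every other $\theta$ to zero. With this choice each character constraint reads $\kappamin\le\kappa_\chi$, which holds because $\kappamin<\kappamax$.

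So the primal value is at least $k\kappamin$. Note that the leftover slack $\kappamax-\kappamin$ sits only at $\chitriv$.

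\textbf{Dual construction.} Set $y_{\chitriv}=0$ and $y_\chi=\kappamin/q^k\cdot c$ for $\chi\neq\chitriv$, with the constant $c$ to be tuned.

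A natural target is $\sum_\chi y_\chi=k\kappamin$. This forces $y_\chi=k\kappamin/(q^k-1)$ for every $\chi\ne\chitriv$.

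We then check the dual constraint for each $\subspace$ of dimension $d<k$ and each coset $\subspaceperp_m$. The coset $\subspaceperp_m$ has $q^{k-d}$ elements, and it contains $\chitriv$ only when it is $\subspaceperp$ itself. The constraint is
\[
\sum_{\chi\in\subspaceperp_m}\frac{y_\chi}{\kappa_\chi}\ge\frac{k-d}{q^d}.
\]
The worst case is $\subspaceperp_m=\subspaceperp$, where the left side equals $(q^{k-d}-1)\,k/(q^k-1)$. We therefore need
\[
k\,(q^{k-d}-1)\,q^d\ge (k-d)(q^k-1), \qquad \text{i.e.}\qquad k\,(q^k-q^d)\ge (k-d)(q^k-1).
\]
For $d=0$ both sides are equal. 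For $d\ge1$ we write $f(d)=k(q^k-q^d)-(k-d)(q^k-1)$, and we want to show $f(d)\ge 0$ on $0\le d\le k-1$.

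\textbf{Main obstacle: the inequality $f(d)\ge 0$.} We have $f(0)=0$, and at $d=k-1$ we get $f(k-1)=kq^{k-1}(q-1)-(q^k-1)\ge 0$. The function $f$ is concave in $d$, since the term $-kq^d$ is concave and the remaining terms are linear. A concave function that is nonnegative at both endpoints is nonnegative in between, which settles the range. If instead the cheap uniform dual fails somewhere, I would give $\chitriv$ a positive weight. The cost of that is only $y_{\chitriv}/\kappamax$ in the coset constraints, and this is where $\kappamax>\kappamin$ matters.

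Combining the primal value $\ge k\kappamin$ with the dual value $k\kappamin$ yields the claim.
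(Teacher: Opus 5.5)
Your proof is correct and is essentially the paper's argument. The paper uses the same primal point $\theta_{\{\mzero\}}(\wC)=\kappamin$, and its upper bound, obtained by summing the constraints over all $\chi\neq\chitriv$ with the common weight $\Xi_{\{\mzero\}}(\wC)=\dim(\localcode)/(|\localcode|-1)$, is exactly weak duality with your certificate $y_{\chitriv}=0$, $y_\chi=\kappamin\dim(\localcode)/(|\localcode|-1)$; your concavity argument for $k(q^k-q^d)\ge(k-d)(q^k-1)$, together with $f(k-1)\ge 0$ via $q^k-1=(q-1)\sum_{i<k}q^i\le k(q-1)q^{k-1}$, supplies the step the paper merely asserts when it claims $\Xi_{\{\mzero\}}(\wC)\ge\Xi_{\subspace}(\subspaceperp_m)$, so your fallback of putting weight on $\chitriv$ is never needed.
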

\begin{proof}
    Since $\kappa_{\chi}=\kappamin$ $\forall\chi\neq \chitriv$, summing constraints over all $\chi\neq\chitriv$ on the LP in \eqref{eq:theta-LP form}, this yields 
    \begin{align*}
        \sum_{\chi\neq\chitriv}\sum_{\subspace\in \codesubspaces^{\localcode}}\sum_{\subspaceperp_m\in\dualcosetspace:\chi\in \subspaceperp_m}\theta_{\subspace}(\subspaceperp_m)\leq \kappamin (|\localcode|-1).
    \end{align*}
    For subspace $\subspace\in \codesubspaces^{\localcode}$ and $\subspaceperp_{m}\in \dualcosetspace$, let $\xi_{\subspace}(\subspaceperp_{m})$ be the number of nontrivial characters (i.e. $\chi\neq\chitriv$) in the coset $\subspaceperp_m$. Then we obtain 
    \begin{align*}
        \sum_{\subspace\in \codesubspaces^{\localcode}}\sum_{\subspaceperp_m\in\dualcosetspace}\theta_{\subspace}(\subspaceperp_m)\xi_{\subspace}(\subspaceperp_{m})\leq \kappamin (|\localcode|-1).
    \end{align*}
    Define the quantity $\Xi_{\subspace}(\subspaceperp_{m})$ as
    \begin{align*}
        \Xi_{\subspace}(\subspaceperp_{m})=\frac{q^{\dim(\localcode)-\dim(\subspace)}\reward(\subspace)}{|\localcode|\xi_{\subspace}(\subspaceperp_m)}.
    \end{align*}
    Consider the subspace $\subspace=\{\mzero\}$, in this case $\subspaceperp=\wC$ and $\dualcosetspace=\{\subspaceperp\}$. It also satisfies 
    \begin{align*}
        \Xi_{\subspace=\{\mzero\}}(\subspaceperp) & =\frac{q^{\dim(\localcode)}\dim(\localcode)}{|\localcode|(|\localcode|-1)}\\
        & = \frac{\dim(\localcode)}{|\localcode|-1}.
    \end{align*}
    For any other subspace  $\subspace\neq \{\mzero\}$ and $\subspaceperp_{m}\in \dualcosetspace$, if $\chitriv\in \subspaceperp_{m}$, it follows that
    \begin{align*}
        \Xi_{\subspace}(\subspaceperp_m)
         &  = \frac{q^{\dim(\localcode)-\dim(\subspace)}(\dim(\localcode)-\dim(\subspace))}{|\localcode|(|\subspaceperp_{m}|-1)}\\
         & = \frac{q^{\dim(\localcode)-\dim(\subspace)}(\dim(\localcode)-\dim(\subspace))}{|\localcode|(q^{\dim(\localcode)-\dim(\subspace)}-1)}.
    \end{align*}
For subspace  $\subspace\neq \{\mzero\}$ and $\subspaceperp_{m}\in \dualcosetspace$, if $\chitriv\notin\subspaceperp_m$, then 
    \begin{align*}
        \Xi_{\subspace}(\subspaceperp_m)
         &  = \frac{q^{\dim(\localcode)-\dim(\subspace)}(\dim(\localcode)-\dim(\subspace))}{|\localcode||\subspaceperp_{m}|}\\
         & = \frac{\dim(\localcode)-\dim(\subspace)}{|\localcode|}.
    \end{align*}
    Therefore, $\Xi_{\subspace=\{\mzero\}}(\subspaceperp)\geq \Xi_{\subspace\neq\{\mzero\}}(\subspaceperp_{m})$. Writing the LP objective in terms of $\Xi_{\subspace}(\subspaceperp_{m})$ we obtain
    \begin{align*}
        & \flp(\fulltheta)\\
        & =\frac{1}{|\localcode|}\sum_{\subspace\in \codesubspaces^{\localcode}}\sum_{\subspaceperp_m\in \dualcosetspace}\theta_{\subspace}(\subspaceperp_m)q^{\dim(\localcode)-\dim(\subspace)}\reward(\subspace)\\
        & = \sum_{\subspace\in \codesubspaces^{\localcode}}\sum_{\subspaceperp_m\in \dualcosetspace}\theta_{\subspace}(\subspaceperp_m)\Xi_{\subspace}(\subspaceperp_m)\xi_{\subspace}(\subspaceperp_m)\\
        & = \theta_{\subspace=\{\mzero\}}(\subspaceperp)\Xi_{\subspace=\{\mzero\}}(\subspaceperp)\xi_{\subspace=\{\mzero\}}(\subspaceperp)+\sum_{\subspace\in \codesubspaces^{\localcode}:\subspace\neq\{\mzero\}}\sum_{\subspaceperp_m\in \dualcosetspace}\theta_{\subspace}(\subspaceperp_m)\Xi_{\subspace}(\subspaceperp_m)\xi_{\subspace}(\subspaceperp_m)\\
        & \leq \Xi_{\subspace=\{\mzero\}}(\subspaceperp)\left(\sum_{\subspace\in \codesubspaces^{\localcode}}\sum_{\subspaceperp_m\in \dualcosetspace}\theta_{\subspace}(\subspaceperp_m)\xi_{\subspace}(\subspaceperp_m)\right)\\
        & \leq \frac{\dim(\localcode)}{|\localcode|-1}\kappamin(|\localcode|-1)=\kappamin\dim(\localcode).
    \end{align*}
    This is achieved by setting $\theta_{\subspace}(\subspaceperp_{m})=0$ $\forall \subspace\in \codesubspaces^{\localcode}$ such that $\subspace\neq \{\mzero\}$ and $\subspaceperp_{m}\in \dualcosetspace$ and $\theta_{\subspace=\{\mzero\}}(\subspaceperp)=\kappamin$.
\end{proof}


\begin{corollary}
For the collection of quantum states $\{\ket{\psi_{\bc}}\}_{\bc\in \localcode}$, such that
$\kappa_{\chitriv}=\kappamax$ and $\kappa_{\chi}=\kappamin$ for all $\chi\neq\chitriv$,
with $\kappamax>\kappamin$ and
$\reward(\subspace)=\dim(\localcode)-\dim(\subspace)$, an optimal 
affine filtering measurement is a 
codeword filtering measurement from Definition~\ref{def:global USD}. Equivalently, one optimal measurement is
\begin{align*}
    \left\{\Pi_{\bc}^{\mathrm{Cw\mbox{-}Filt}}\right\}_{\bc\in\localcode}\cup
    \left\{\Pi_{e}^{\mathrm{Cw\mbox{-}Filt}}\right\},
\end{align*}
which outputs the transmitted codeword $\bc$ or an erasure.
\end{corollary}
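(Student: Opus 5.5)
The plan is to combine Lemma~\ref{lem:kappa_config_for_global_usd} with the LP-to-POVM construction in Theorem~\ref{thm:affine usd sdp-lp equivalence}. Note first that $\kappamin\geq 0$. Since Theorem~\ref{thm:affine usd sdp-lp equivalence} requires $\kappa_{\chi}>0$ for all $\chi$, I treat the full-rank case $\kappamin>0$ here; the rank-deficient case $\kappamin=0$ is trivial for the objective, because then the LP value is $0$, and the measurement with $\Pi_e=\mI$ is then optimal and is also a (degenerate) codeword filtering measurement.

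Lemma~\ref{lem:kappa_config_for_global_usd} gives $\objlp=\kappamin\dim(\localcode)$, attained by $\theta_{\{\mzero\}}(\wC)=\kappamin$ with all other $\theta$ set to zero. Translating back, $\beta_{\{\mzero\}}(\wC)=\kappamin\,\eta_{\{\mzero\}}(\wC)/q^{\dim(\localcode)}$ and $\beta_{\subspace}\equiv 0$ for $\subspace\neq\{\mzero\}$. By Theorem~\ref{thm:affine usd sdp-lp equivalence}, $\objsdp=\objlp$, and the optimal POVM is $\Pi^*_{\subspace,\bs}=\sum_m\beta^*_{\subspace}(\subspaceperp_m)\ket{\zeta^{\subspaceperp_m}_{\subspace,\bs}}\bra{\zeta^{\subspaceperp_m}_{\subspace,\bs}}$. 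All operators with $\subspace\neq\{\mzero\}$ therefore vanish. The only surviving conclusive elements are $\Pi^*_{\{\mzero\},\bs}$, one for each syndrome $\bs\in\Image(H_{\{\mzero\}})$. Since $\ker H_{\{\mzero\}}=\{\mzero\}$, the affine subspaces $\subspace_{\bs}$ are exactly the singletons $\{\bc\}$ with $\bc\in\localcode$.

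Next, identify this POVM with a codeword filtering measurement by setting $\Pi^{\mathrm{Cw\mbox{-}Filt}}_{\bc}:=\Pi^*_{\{\mzero\},H_{\{\mzero\}}\bc}$ and $\Pi^{\mathrm{Cw\mbox{-}Filt}}_e:=\Pi^*_e$. Then check the conditions of Definition~\ref{def:global USD}. The completeness relation holds because the zero operators drop out of $\sum\Pi^*+\Pi^*_e=\mI$. Unambiguity holds because the affine filtering constraint for the singleton $\{\bc\}$ reads $\Tr(\Pi_{\bc}\ket{\psi_{\bc'}}\bra{\psi_{\bc'}})=0$ for all $\bc'\neq\bc$. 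Positivity is inherited from the POVM. Explicitly, each $\Pi^{\mathrm{Cw\mbox{-}Filt}}_{\bc}$ is a scaled rank-one projector onto $\ket{\zeta^{\wC}_{\{\mzero\},\bs}}\propto\sum_{\chi}\chi(-\bc)\kappa_\chi^{-1/2}\ket{\chi}$, which is the standard dual-basis (USD) vector of $\ket{\psi_{\bc}}$.

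Conversely, every codeword filtering measurement is an affine filtering measurement that uses only $\subspace=\{\mzero\}$. Its reward per conclusive outcome is $\reward(\{\mzero\})=\dim(\localcode)$, so it is feasible for \eqref{SDP main} and cannot beat the optimum. The one above attains the optimum and is therefore optimal. It also recovers the conclusive probability $\kappamin$, matching the known optimal symmetric USD value.

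The main point needing care is the bookkeeping that the zero operators for $\subspace\neq\{\mzero\}$ really are absent from the measurement, so that the object is a POVM indexed only by codewords. This is immediate from the explicit construction, so I expect no real obstacle beyond citing the lemma and theorem correctly and handling the $\kappamin=0$ edge case.
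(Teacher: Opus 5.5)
Your proposal is correct and follows the paper's argument: Lemma~\ref{lem:kappa_config_for_global_usd} puts all LP mass on $\subspace=\{\mzero\}$. Theorem~\ref{thm:affine usd sdp-lp equivalence} then gives a POVM whose only conclusive outcomes are singletons $\{\bc\}$, i.e.\ a codeword filtering measurement; your translation to $\beta$, the relabeling $\Pi^{\mathrm{Cw\mbox{-}Filt}}_{\bc}=\Pi^*_{\{\mzero\},H_{\{\mzero\}}\bc}$, and the check against the USD value $\kappamin$ are just more explicit bookkeeping.

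One small point concerns your aside for $\kappamin=0$, which the paper does not treat. There you should not appeal to an LP value, because the theorem needs $\kappa_\chi>0$ and the limiting-LP lemma only controls the perturbed families. Argue directly instead: $\sum_\chi\kappa_\chi=|\localcode|$ forces $G$ to be the all-ones matrix, so all $\ket{\psi_{\bc}}$ coincide, every conclusive outcome has probability zero, and $\Pi_e=\mI$ is optimal.
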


\begin{proof}
By Lemma~\ref{lem:kappa_config_for_global_usd}, the LP optimum is achieved by setting
$\theta_{\subspace}(\subspaceperp_m)=0$ for all $\subspace\neq\{\mzero\}$ and concentrating all mass on
$\subspace=\{\mzero\}$. In this case every conclusive affine outcome is a singleton affine subspace
$\{\bc\}$, so the resulting affine filtering measurement coincides with the 
codeword filtering measurement from
Definition~\ref{def:global USD}. Hence the
codeword filtering is optimal among affine filtering measurements for this two level Gram matrix eigenvalue spectrum.
\end{proof}
Since the FGUM in~\cite{shutty2026lqd} has conclusive outcomes that identify the transmitted local SPC codeword exactly, with the remaining outcome treated as an erasure, it is a 
codeword filtering measurement for the SPC-associated state family considered above.  Therefore, for the same choice of quantum states, the above corollary shows that FGUM is an optimal 
affine filtering measurement for maximizing the expected number of recovered linear equations.
\subsection{Limiting LP for Full-Rank Perturbations of Rank-Deficient Gram Matrices}\label{sec: rank deficient}
So far, we have shown that if the Gram matrix is full rank, or equivalently if the quantum states $\{\ket{\psi_{\bc}}\}_{\bc\in \localcode}$ are linearly independent, then we can reduce the SDP for constructing optimal 
affine filtering measurement to an LP.  
In this section, we provide a limiting LP formulation associated with rank-deficient Gram matrices.
We perturb the state family so that the corresponding Gram matrix becomes full rank,
apply the LP characterization from the previous section, and then pass to the limit. 
This gives a limiting LP characterization of the values of the full-rank perturbed Gram matrices.
For $\epsilon>0$, define a perturbed seed state
\[
\ket{\psi^\epsilon}
=
\sum_{\bx\in\mathbb{F}_q^{|I|}}
\alpha_{\bx}^{\epsilon}\ket{\bx},
\qquad
\alpha_{\bx}^{\epsilon}
:=
\begin{cases}
\dfrac{\alpha_{\bx}}{|\alpha_{\bx}|}
\sqrt{\dfrac{|\alpha_{\bx}|^2+\epsilon/q^{|I|}}{1+\epsilon}},
& \alpha_{\bx}\neq 0,\\[2ex]
\sqrt{\dfrac{\epsilon/q^{|I|}}{1+\epsilon}},
& \alpha_{\bx}=0 .
\end{cases}
\]
and let
\[
    \ket{\psi_{\bc}^{\epsilon}}
    =
    X^{\bc}\QFT^{\otimes |I|}\ket{\psi^{\epsilon}},
    \qquad \bc\in\localcode .
\]
\begin{lemma}\label{lem:non-fullrank Gram}
Consider the collection of quantum states
$\{\ket{\psi_{\bc}}\}_{\bc\in\localcode}$ with rank-deficient Gram matrix
$G$ characterized by eigenvalues $\{\kappa_{\chi}\}_{\chi\in \wC}$.
For $\epsilon>0$, let $\{\ket{\psi_{\bc}^{\epsilon}}\}_{\bc\in\localcode}$
be the perturbed state family defined above. Its Gram matrix is
\[
    G_{\epsilon}=\frac{G+\epsilon\mI}{1+\epsilon},
\]
with eigenvalues
\[
    \kappa_{\chi}^{\epsilon}
    =
    \frac{\kappa_{\chi}+\epsilon}{1+\epsilon},
    \qquad \chi\in\wC .
\]
For every $\epsilon>0$, the perturbed state family has full-rank Gram
matrix. Hence, by Theorem~\ref{thm:affine usd sdp-lp equivalence}, the
full-rank 
affine filtering measurement optimization reduces to
\[
    \text{LP}\Big(
    \{\theta_{\subspace}(\subspaceperp_m)\}_{
    \subspace\in \codesubspaces^{\localcode},
    \subspaceperp_m\in \dualcosetspace},
    \{\kappa_{\chi}^{\epsilon}\}_{\chi\in \wC}
    \Big),
\]
and
\[
    \objlp(\cL(\localcode,\{\ket{\psi_{\bc}^{\epsilon}}\}_{\bc\in\localcode}))
    =
    \objlp\Big(
    \{\theta_{\subspace}(\subspaceperp_m)\}_{
    \subspace\in\codesubspaces^{\localcode},
    \subspaceperp_m\in\dualcosetspace},
    \{\kappa_{\chi}^{\epsilon}\}_{\chi\in\wC}
    \Big).
\]
As $\epsilon\to 0$, these LPs converge to
\[
    \text{LP}\Big(
    \{\theta_{\subspace}(\subspaceperp_m)\}_{
    \subspace\in \codesubspaces^{\localcode},
    \subspaceperp_m\in \dualcosetspace},
    \{\kappa_{\chi}\}_{\chi\in \wC}
    \Big),
\]
and their optimal values satisfy
\[
    \lim_{\epsilon\to 0}
    \objlp\Big(
    \{\theta_{\subspace}(\subspaceperp_m)\}_{
    \subspace\in\codesubspaces^{\localcode},
    \subspaceperp_m\in\dualcosetspace},
    \{\kappa_{\chi}^{\epsilon}\}_{\chi\in\wC}
    \Big)
    =
    \objlp\Big(
    \{\theta_{\subspace}(\subspaceperp_m)\}_{
    \subspace\in\codesubspaces^{\localcode},
    \subspaceperp_m\in\dualcosetspace},
    \{\kappa_{\chi}\}_{\chi\in\wC}
    \Big).
\]
Consequently,
\begin{align*}
    \lim_{\epsilon\to 0} \objsdp(\cM(\localcode,\{\ket{\psi_{\bc}^{\epsilon}}\}_{\bc\in \localcode}))  &=  \lim_{\epsilon\to 0}
    \objlp(\cL(\localcode,\{\ket{\psi_{\bc}^{\epsilon}}\}_{\bc\in\localcode}))\\
   & =
    \objlp\Big(
    \{\theta_{\subspace}(\subspaceperp_m)\}_{
    \subspace\in\codesubspaces^{\localcode},
    \subspaceperp_m\in\dualcosetspace},
    \{\kappa_{\chi}\}_{\chi\in\wC}
    \Big).
\end{align*}
In particular, if $\subspaceperp_m$ contains a character $\chi$ such that
$\kappa_{\chi}=0$, then every feasible solution of the limiting LP satisfies
\[
    \theta_{\subspace}(\subspaceperp_m)=0
\]
for all $\subspace\in\codesubspaces^{\localcode}$.
\end{lemma}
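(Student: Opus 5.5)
First, I will compute the perturbed Gram matrix directly. Since $\QFT^{\otimes|I|}$ is unitary, $g^{\epsilon}(\bc)=\braket{\psi^{\epsilon}_{\mzero}|\psi^{\epsilon}_{\bc}}=\bra{\psi^\epsilon}(\QFT^\dagger)^{\otimes|I|}X^{\bc}\QFT^{\otimes|I|}\ket{\psi^\epsilon}=\bra{\psi^\epsilon}Z^{\bc}\ket{\psi^\epsilon}=\sum_{\bx}|\alpha^{\epsilon}_{\bx}|^2\omega^{\bc\cdot\bx}$. By construction $|\alpha_{\bx}^{\epsilon}|^2=(|\alpha_{\bx}|^2+\epsilon/q^{|I|})/(1+\epsilon)$ in both cases. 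Hence
\[
g^{\epsilon}(\bc)=\frac{g(\bc)+\frac{\epsilon}{q^{|I|}}\sum_{\bx}\omega^{\bc\cdot\bx}}{1+\epsilon}=\frac{g(\bc)+\epsilon\,\delta_{\bc,\mzero}}{1+\epsilon}.
\]
This gives $G_\epsilon=(G+\epsilon\mI)/(1+\epsilon)$. The perturbed family is still of the form \eqref{def: quantum state}, so its Gram matrix is circulant. Lemma~\ref{lem:Gram eigenvalues} then yields $\kappa^\epsilon_\chi=\sum_{\bc}g^\epsilon(\bc)\chi(\bc)=(\kappa_\chi+\epsilon)/(1+\epsilon)>0$. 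Positivity follows from $\kappa_\chi\ge 0$. Therefore Theorem~\ref{thm:affine usd sdp-lp equivalence} applies for every $\epsilon>0$. Together with the $\theta$-substitution, it gives the identities $\objsdp=\objlp(\cL)=\objlp(\theta,\kappa^\epsilon)$.

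The main step is continuity of the optimal value of the $\theta$-LP in the right-hand side $\kappa$ at the limit point. Only the constraint vector depends on $\epsilon$; the objective coefficients $q^{\dim\localcode-\dim\subspace}\reward(\subspace)/|\localcode|$ and the $0/1$ constraint matrix do not. I will write $F(\kappa)$ for the optimal value as a function of the right-hand side $\kappa\in\mathbb{R}_{\ge0}^{\wC}$. The feasible set is nonempty ($\theta=0$) and bounded, because every $\theta_\subspace(\subspaceperp_m)$ appears in at least one constraint $\le\kappa_\chi\le\max\kappa$. So $F$ is finite.

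I will prove continuity by explicit two-sided bounds rather than citing general parametric LP theory. For the lower bound, any $\theta$ feasible for $\kappa$ is feasible for $\kappa^\epsilon$ after scaling by $(1+\epsilon)^{-1}$. Indeed, $\kappa_\chi/(1+\epsilon)\le\kappa^\epsilon_\chi$. Hence $F(\kappa^\epsilon)\ge F(\kappa)/(1+\epsilon)$.

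For the upper bound, I will use the dual LP, written in $\theta$ form: minimize $\sum_\chi \kappa_\chi y_\chi$ subject to $\sum_{\chi\in\subspaceperp_m}y_\chi\ge q^{\dim\localcode-\dim\subspace}\reward(\subspace)/|\localcode|$ and $y\ge 0$. The dual feasible region does not depend on $\kappa$. Take an optimal dual solution $y^*$ for $\kappa$; it exists because the dual is feasible (large constant $y$) and strong duality holds. Then $F(\kappa^\epsilon)\le\sum_\chi\kappa^\epsilon_\chi y^*_\chi=(F(\kappa)+\epsilon\sum_\chi y^*_\chi)/(1+\epsilon)$. Letting $\epsilon\to0$ gives $\lim F(\kappa^\epsilon)=F(\kappa)$. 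This is the sense in which the LPs converge: their constraint data $\kappa^\epsilon\to\kappa$ with the same objective and matrix. The chain of equalities for $\objsdp$ then follows.

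Finally, suppose $\kappa_\chi=0$ for some $\chi\in\subspaceperp_m$. The constraint for $\chi$ in the limiting LP reads $\sum_{\subspace'}\sum_{\subspaceperp'_{m'}\ni\chi}\theta_{\subspace'}(\subspaceperp'_{m'})\le 0$. Every term is nonnegative, and the term $\theta_\subspace(\subspaceperp_m)$ appears in this sum. Hence it vanishes for every feasible solution.

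I expect the only delicate point to be justifying the continuity step without appealing to vague "convergence of LPs". The dual bound above handles it cleanly, because the dual feasible set is independent of $\epsilon$.
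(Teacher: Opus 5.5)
Your proof is correct. The Gram-matrix computation, the argument that $\theta_{\subspace}(\subspaceperp_m)=0$ whenever some $\kappa_\chi=0$, and the lower bound all follow the paper. For the lower bound, both you and the paper scale a limiting-LP optimum into the perturbed feasible set. Your uniform factor $(1+\epsilon)^{-1}$ simplifies the paper's $\lambda(\epsilon)=\min\{1,\min_{\chi\in S_+}\kappa^{\epsilon}_\chi/\kappa_\chi\}$, which is always at least $(1+\epsilon)^{-1}$.

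The real difference is the upper bound. The paper argues by compactness. The optimal $\theta^{\epsilon_n,*}$ are uniformly bounded, via $\sum_\chi\kappa^\epsilon_\chi=|\localcode|$. It extracts a convergent subsequence and shows the limit is feasible for the $\epsilon=0$ LP. You instead fix a dual optimum $y^*$ for $\kappa$ in the $\theta$-form dual, whose feasible region does not depend on the right-hand side, and apply weak duality at $\kappa^\epsilon$.

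Your route is shorter and avoids the subsequence bookkeeping. It is also quantitative: in your notation it gives $F(\kappa)/(1+\epsilon)\le F(\kappa^\epsilon)\le \bigl(F(\kappa)+\epsilon\sum_\chi y^*_\chi\bigr)/(1+\epsilon)$. That is an explicit $O(\epsilon)$ rate, matching the Lipschitz behaviour cited in the paper's subsequent remark. Your $\theta$-form dual also stays well defined when some $\kappa_\chi=0$, unlike the paper's $\beta$-form dual, which divides by $\kappa_\chi$. The compactness route uses no duality theory, only closedness and uniform boundedness of the feasible sets. It is therefore more generic, but it gives only qualitative convergence.

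One harmless slip: $(\QFT^{\dagger})^{\otimes|I|}X^{\bc}\QFT^{\otimes|I|}=Z^{-\bc}$, not $Z^{\bc}$. Since $\sum_{\bx}\omega^{\pm\bc\cdot\bx}=q^{|I|}\delta_{\bc,\mzero}$, your formula for $g^\epsilon$ is unaffected.
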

\begin{proof}
Consider the LP in \eqref{eq:theta-LP form} $ \text{LP}\Big(
    \{\theta_{\subspace}(\subspaceperp_m)\}_{
    \subspace\in \codesubspaces^{\localcode},
    \subspaceperp_m\in \dualcosetspace},
    \{\kappa_{\chi}\}_{\chi\in \wC}
    \Big)$ where the right hand side values $\{\kappa_{\chi}\}_{\chi\in \wC}$ corresponds to the eigenvalues of rank-deficient Gram matrix $G$. If
$\kappa_{\chi}=0$ for some $\chi\in\wC$, then the corresponding constraint
in \eqref{eq:theta-LP form} gives
\begin{align*}
    \sum_{\subspace\in \codesubspaces^{\localcode}}
    \sum_{\subspaceperp_m\in\dualcosetspace:\chi\in \subspaceperp_m}
    \theta_{\subspace}(\subspaceperp_m)\leq 0.
\end{align*}
Since $\theta_{\subspace}(\subspaceperp_m)\geq 0$ for all
$\subspace\in \codesubspaces^{\localcode}$ and
$\subspaceperp_m\in\dualcosetspace$, every term in the above sum must
vanish. Therefore,
\begin{align*}
    \theta_{\subspace}(\subspaceperp_m)=0,
\end{align*}
whenever $\subspaceperp_m$ contains a character $\chi$ with
$\kappa_{\chi}=0$. Thus every feasible solution of \eqref{eq:theta-LP form} is
supported only on dual cosets $\subspaceperp_m$ for which
$\kappa_{\chi}>0$ for all $\chi\in\subspaceperp_m$.
 For perturbed states $\{\ket{\psi_{\bc}^{\epsilon}}\}_{\bc\in\localcode}$, we obtain
\begin{align}
    \braket{\psi_{\bc}^{\epsilon}|\psi_{\bc'}^{\epsilon}}
    &=
    \sum_{\bx\in\mathbb F_q^{|I|}}
    \frac{|\alpha_{\bx}|^2+\epsilon/q^{|I|}}{1+\epsilon}
    \omega^{\langle \bx,\bc-\bc'\rangle} \\
    &=
    \frac{1}{1+\epsilon}
    \sum_{\bx\in\mathbb F_q^{|I|}}
    |\alpha_{\bx}|^2
    \omega^{\langle \bx,\bc-\bc'\rangle}
    +
    \frac{\epsilon}{1+\epsilon}
    \frac{1}{q^{|I|}}
    \sum_{\bx\in\mathbb F_q^{|I|}}
    \omega^{\langle \bx,\bc-\bc'\rangle}.
\end{align}
The last sum is equal to $q^{|I|}$ if $\bc=\bc'$ and is equal to $0$ otherwise.
Hence
\[
    \braket{\psi_{\bc}^{\epsilon}|\psi_{\bc'}^{\epsilon}}
    =
    \frac{G_{\bc,\bc'}+\epsilon\delta_{\bc,\bc'}}{1+\epsilon}.
\]
Thus the Gram matrix of
$\{\ket{\psi_{\bc}^{\epsilon}}\}_{\bc\in\localcode}$ is
\[
    G_{\epsilon}
    =
    \frac{G+\epsilon\mI}{1+\epsilon}.
\]
Since $G$ is diagonalized by the character basis, $G_{\epsilon}$ is diagonalized
by the same basis.  Therefore, if $\{\kappa_{\chi}\}_{\chi\in\wC}$ are the
eigenvalues of $G$, then the eigenvalues of $G_{\epsilon}$ are
\[
    \kappa_{\chi}^{\epsilon}
    =
    \frac{\kappa_{\chi}+\epsilon}{1+\epsilon},
    \qquad \chi\in\wC .
\]
In particular, $\kappa_{\chi}^{\epsilon}>0$ for every $\chi\in\wC$, so
$G_{\epsilon}$ is full rank.
    For subspace $\subspace$ and $\subspaceperp_{m}\in \dualcosetspace$, we consider 
    \begin{align*}
        \eta_{\subspace}^{\epsilon}(\subspaceperp_{m})=\sum_{\chi\in\subspaceperp_m}\frac{1}{\kappa_{\chi}^{\epsilon}}.
    \end{align*}
    Let $\{\theta_{\subspace}^{\epsilon,*}(\subspaceperp_m)\}_{\subspace\in \codesubspaces^{\localcode},\subspaceperp_m\in\dualcosetspace }$ denote the optimal solution for $\text{LP}\Big(
    \{\theta_{\subspace}(\subspaceperp_m)\}_{
    \subspace\in \codesubspaces^{\localcode},
    \subspaceperp_m\in \dualcosetspace},
    \{\kappa_{\chi}^{\epsilon}\}_{\chi\in \wC}
    \Big)$.
Since $G_{\epsilon}$ is full rank, using Theorem~\ref{thm:affine usd sdp-lp equivalence} we reduce the SDP in \eqref{SDP main} for perturbed quantum sates $\{\ket{\psi_{\bc}^{\epsilon}}\}_{\bc\in \localcode}$ to an LP. Therefore, from \eqref{eq:theta-LP form}, we get 
\begin{align*}
    \objsdp(\cM(\localcode,\{\ket{\psi_{\bc}^{\epsilon}}\}_{\bc\in \localcode})) & =  \objlp(\cL(\localcode,\{\ket{\psi_{\bc}^{\epsilon}}\}_{\bc\in \localcode}))\\
    & = \objlp\Big(
    \{\theta_{\subspace}(\subspaceperp_m)\}_{
    \subspace\in\codesubspaces^{\localcode},
    \subspaceperp_m\in\dualcosetspace},
    \{\kappa_{\chi}^{\epsilon}\}_{\chi\in\wC}
    \Big)\\
    & = \sum_{\subspace\in \codesubspaces^{\localcode}}\sum_{\subspaceperp_m\in \dualcosetspace}\theta_{\subspace}^{\epsilon,*}(\subspaceperp_m)\frac{q^{\dim(\localcode)-\dim(\subspace)}}{|\localcode|}\reward(\subspace).
\end{align*}
Next, define the feasible set $F(\epsilon)$ for the LP parameters by
\begin{align}
 &\   F(\epsilon) \\
 &\ \coloneqq  \left\{ \{\theta_{\subspace}^{\epsilon}(\subspaceperp_m)\}_{\subspace\in \codesubspaces^{\localcode},\subspaceperp_m\in\dualcosetspace } \;\middle|\; 
    \begin{aligned}
        & \sum_{\subspace\in \codesubspaces^{\localcode}}\sum_{\subspaceperp_m\in\dualcosetspace:\chi\in \subspaceperp_m}\theta_{\subspace}^{\epsilon}(\subspaceperp_m)\leq \kappa_{\chi}^{\epsilon}, \qquad \forall \chi\in \wC \\
        & \theta_{\subspace}^{\epsilon}(\subspaceperp_m)\geq 0, \qquad \forall \subspace\in \codesubspaces^{\localcode}, \subspaceperp_m\in \dualcosetspace
    \end{aligned}
    \right\}\nonumber.
\end{align}
Since $\theta_{\subspace}^{\epsilon}(\subspaceperp_m)\geq 0$ and $\kappa^{\epsilon}_{\chi}\geq 0$ for $\epsilon\geq 0$, $F(\epsilon)$ is non empty $\forall \epsilon\geq 0$.
Using the fact $\sum_{\chi}\kappa_{\chi}=|\localcode|$, we obtain
\begin{align*}
    \sum_{\chi}\kappa_{\chi}^{\epsilon}=\frac{\sum_{\chi}(\kappa_{\chi}+\epsilon)}{1+\epsilon}=|\localcode|,
\end{align*}
where we used the fact that $|\wC|=|\localcode|$.
This implies $\theta^{\epsilon}_{\subspace}(\subspaceperp_m)$ is upper bounded by $|\localcode|$. Thus, $F(\epsilon)$ is closed and bounded, hence compact. 
Next, consider sequences $\{\epsilon_n\}$, $\{\kappa^{\epsilon_n}_{\chi}\}$ $\forall\chi\in \wC$ and $\{\theta_{\subspace}^{\epsilon_n, *}(\subspaceperp)\}$ for all $ \subspace,\subspaceperp_m$, where the optimal LP parameters $\{\theta_{\subspace}^{\epsilon_n, *}(\subspaceperp_m)\}_{\subspace\in \codesubspaces^{\localcode},\subspaceperp_m\in\dualcosetspace }\in F(\epsilon_n)$ .  Since $0\leq\theta_{\subspace}^{\epsilon,*}(\subspaceperp_m)\leq |\localcode|$ for all $\subspace,\subspaceperp_m$, from uniform boundedness the sequence $\{\theta_{\subspace}^{\epsilon_n, *}(\subspaceperp)\}$ admits  a convergent subsequence $\{\theta_{\subspace}^{\epsilon_{n_k}, *}(\subspaceperp)\}$ for all $ \subspace,\subspaceperp_m$,  such that $\theta_{\subspace}^{\epsilon_{n_k},*}(\subspaceperp_m)\to \overline{\theta}_{\subspace}(\subspaceperp_m)$. By the continuity of linear constraints and using $\kappa_{\chi}^{\epsilon_{n_k}}\to \kappa_{\chi}$, this yields $\{\overline{\theta}_{\subspace}(\subspaceperp_m)\}_{\subspace\in \codesubspaces^{\localcode},\subspaceperp_m\in\dualcosetspace }\in F(0)$. Therefore 
\begin{align*}
    \lim_{k\to \infty}\objlp(\cL(\localcode,\{\ket{\psi_{\bc}^{\epsilon_{n_k}}}\}_{\bc\in \localcode})) & = \lim_{k\to \infty}\sum_{\subspace\in \codesubspaces^{\localcode}}\sum_{\subspaceperp_m\in \dualcosetspace}\theta_{\subspace}^{\epsilon_{n_k},*}(\subspaceperp_m)\frac{q^{\dim(\localcode)-\dim(\subspace)}}{|\localcode|}\reward(\subspace)\\
    & = \sum_{\subspace\in \codesubspaces^{\localcode}}\sum_{\subspaceperp_m\in \dualcosetspace}\overline{\theta}_{\subspace}(\subspaceperp_m)\frac{q^{\dim(\localcode)-\dim(\subspace)}}{|\localcode|}\reward(\subspace)\\
    & \leq \sum_{\subspace\in \codesubspaces^{\localcode}}\sum_{\subspaceperp_m\in \dualcosetspace}\theta_{\subspace}^{*}(\subspaceperp_m)\frac{q^{\dim(\localcode)-\dim(\subspace)}}{|\localcode|}\reward(\subspace)\\
    & = \mathrm{OPT}_{\mathrm{LP}}(\{\theta_{\subspace}(\subspaceperp_m)\}_{ \subspace\in \codesubspaces^{\localcode}, \subspaceperp_m\in \dualcosetspace},\{\kappa_{\chi}\}_{\chi\in \wC}),
\end{align*}
where $\{\theta^{*}_{\subspace}(\subspaceperp_m)\}_{\subspace\in \codesubspaces, \subspaceperp_m\in \dualcosetspace}\in F(0)$ is the optimal solution of $\text{LP}(\{\theta_{\subspace}(\subspaceperp_m)\}_{ \subspace\in \codesubspaces^{\localcode}, \subspaceperp_m\in \dualcosetspace},\{\kappa_{\chi}\}_{\chi\in \wC})$.
Next, consider the set $S_{+}=\{\chi\colon \kappa_{\chi}> 0\}$. Since $\sum_{\chi}\kappa_{\chi}=|\localcode|$, the set $S_{+}$ is nonempty. Define the quantity $\lambda(\epsilon)$ as follows 
 \begin{align*}
     \lambda(\epsilon)\coloneqq\min\{1,\min_{\chi\in S_+}\frac{\kappa_{\chi}^{\epsilon}}{\kappa_{\chi}}\}.
 \end{align*}
 When $\kappa_{\chi}> 0$, this gives
 \begin{align*}
  &   \sum_{\subspace\in \codesubspaces^{\localcode}}\sum_{\subspaceperp_m\in\dualcosetspace:\chi\in \subspaceperp_m}\lambda(\epsilon)\theta_{\subspace}^{*}(\subspaceperp_m)\\
  & =\lambda(\epsilon) \sum_{\subspace\in \codesubspaces^{\localcode}}\sum_{\subspaceperp_m\in\dualcosetspace:\chi\in \subspaceperp_m}\theta^{*}_{\subspace}(\subspaceperp_m)\\
  & \leq \lambda(\epsilon)\kappa_{\chi}\\
  & \leq \kappa_{\chi}^{\epsilon}.
 \end{align*}
 When $\kappa_{\chi}=0$, the sum satisfies
 \begin{align*}
     &   \sum_{\subspace\in \codesubspaces^{\localcode}}\sum_{\subspaceperp_m\in\dualcosetspace:\chi\in \subspaceperp_m}\lambda(\epsilon)\theta_{\subspace}^{*}(\subspaceperp_m)\\
     & = 0\\
     & \leq \kappa_{\chi}^{\epsilon}.
 \end{align*}
 This implies $\{\lambda(\epsilon)\theta^{*}_{\subspace}(\subspaceperp_m)\}_{\subspace\in \codesubspaces, \subspaceperp_m\in \dualcosetspace}\in F(\epsilon)$. Therefore
 \begin{align*}
    \objlp(\cL(\localcode,\{\ket{\psi_{\bc}^{\epsilon}}\}_{\bc\in \localcode})) & =   \sum_{\subspace\in \codesubspaces^{\localcode}}\sum_{\subspaceperp_m\in \dualcosetspace}\theta_{\subspace}^{\epsilon,*}(\subspaceperp_m)\frac{q^{\dim(\localcode)-\dim(\subspace)}}{|\localcode|}\reward(\subspace)\\
    & \geq \sum_{\subspace\in \codesubspaces^{\localcode}}\sum_{\subspaceperp_m\in\dualcosetspace}\lambda(\epsilon)\theta_{\subspace}^{*}(\subspaceperp_m)\frac{q^{\dim(\localcode)-\dim(\subspace)}}{|\localcode|}\reward(\subspace)\\
    & =\lambda(\epsilon)\mathrm{OPT}_{\mathrm{LP}}(\{\theta_{\subspace}(\subspaceperp_m)\}_{ \subspace\in \codesubspaces^{\localcode}, \subspaceperp_m\in \dualcosetspace},\{\kappa_{\chi}\}_{\chi\in \wC}).
 \end{align*}
 Combining the above facts and since $\lim_{\epsilon\to 0}\lambda(\epsilon)=1$, we obtain
 \begin{align}
  \lim_{\epsilon\to 0} \objsdp(\cM(\localcode,\{\ket{\psi_{\bc}^{\epsilon}}\}_{\bc\in \localcode}))  &=
     \lim_{\epsilon\to 0}\objlp(\cL(\localcode,\{\ket{\psi_{\bc}^{\epsilon}}\}_{\bc\in \localcode}))\\
     & = \lim_{\epsilon \to 0}\objlp\Big(
    \{\theta_{\subspace}(\subspaceperp_m)\}_{
    \subspace\in\codesubspaces^{\localcode},
    \subspaceperp_m\in\dualcosetspace},
    \{\kappa_{\chi}^{\epsilon}\}_{\chi\in\wC}
    \Big)\\
     & = \objlp(\{\theta_{\subspace}(\subspaceperp_m)\}_{ \subspace\in \codesubspaces^{\localcode}, \subspaceperp_m\in \dualcosetspace},\{\kappa_{\chi}\}_{\chi\in \wC}).
 \end{align}
 \end{proof}

 \begin{remark}
The limiting LP proof is also consistent with standard stability results for parametric linear programming. Under right-hand-side perturbations, the value function of a finite-dimensional LP with fixed objective vector and fixed
constraint matrix is piecewise linear, and hence locally Lipschitz, on regions where the problem remains feasible and bounded \cite{gisbert2019lipschitz}.
\end{remark}
\section{Affine Filtering Measurement Based Decoding}\label{sec:affine spc decoder}
The previous section characterizes 
affine filtering measurements for a fixed local code and a corresponding collection of symmetric codeword states.  We now specialize
this construction to local codes arising from SPC constraints in a regular LDPC
code and describe the resulting decoding procedure.  The role of the local
measurement is to convert each conclusive outcome into linear constraints on a
small subset of code symbols.  These locally generated constraints are then used
together with the global parity-check equations to determine the transmitted
codeword.
\subsection{Decoding on I.I.D. States for Single Parity Check Codes}

In Section~\ref{sec:affineUSD}, we described 
affine filtering measurements for an arbitrary local code $\cC_{I}$ supported on an index set $I$. In this section, we specialize that framework to single-parity-check (SPC) local codes and the corresponding 
affine filtering decoder for LDPC codes on i.i.d.\ pure-state channels. Let $\{\beta_{\subspace}^{*}(\subspaceperp_m)\}_{\subspace\in\codesubspaces^{\cC_I},\,\subspaceperp_m\in\dualcosetspace}$ denote an optimal solution of the LP in~\eqref{LP: final form}. Then the corresponding POVM
\begin{align}
\{\Pi_{\subspace,\bs}^{*}\}_{\subspace\in \codesubspaces^{\cC_{I}},\,\bs\in \Image(H_{\subspace})}\cup\{\Pi_{e}^{*}\}
\end{align}
with
\begin{align}
\Pi_{\subspace,\bs}^{*}
&=
\sum_{\subspaceperp_{m}\in \dualcosetspace}
\beta_{\subspace}^{*}(\subspaceperp_{m})
\ket{\zeta_{\subspace,\bs}^{\subspaceperp_{m}}}
\bra{\zeta_{\subspace,\bs}^{\subspaceperp_{m}}},
\\
\Pi_{e}^{*}
&=
\mI-\sum_{\subspace\in\codesubspaces^{\cC_I}}\Pi_{\subspace}^{*},
\end{align}
gives the optimal affine information for a degree-$D$ SPC local code. We now consider an i.i.d.~pure-state channel in which each symbol is transmitted through the same pure-state channel. Fix $0\leq \alpha\leq 1$ and a non-empty subset $S\subset [q]$, define
\begin{align}
\ket{\psi'}
&\coloneqq
\sqrt{\alpha}\sum_{i\in S}\frac{1}{\sqrt{|S|}}\ket{i}
+
\sqrt{1-\alpha}\sum_{j\in [q]\setminus S}\frac{1}{\sqrt{q-|S|}}\ket{j}.
\end{align}
For each input symbol $a\in\mbF_q$, the channel output is the pure state obtained from the same seed by the action of $X^a\QFT$. Since the channel is i.i.d., the output corresponding to a local codeword $\bc\in\cC_I$ is the tensor product of the single-symbol outputs, and hence is given by
\begin{align}
\ket{\psi_{\bc}}
&=
X^{\bc}\QFT^{\otimes |I|}\ket{\psi'}^{\otimes |I|}.
\end{align}
Therefore the family $\{\ket{\psi_{\bc}}\}_{\bc\in\cC_I}$ is exactly of the form considered in Section~\ref{sec:affineUSD}, and the LP in~\eqref{LP: final form} applies directly to this local SPC setting. Our goal in this section is to characterize the affine information revealed by the optimal SPC POVM for this family of states as a function of $\alpha$ and $S$. For a fixed $(k,D)$ pair and a fixed subset $S\subset[q]$, let 
$\alpha_{(k,D)}^{\mathrm{Affine\text{-}Filtering}}$
denote the parameter value for which the LP objective equals $D-k$, namely
\begin{align}
\sum_{\subspace\in \codesubspaces^{\cC_{I}}}
\sum_{\subspaceperp_m\in \dualcosetspace}
\beta_{\subspace}^{*}(\subspaceperp_m)
\frac{q^{2(\dim(\cC_{I})-\dim(\subspace))}}
{\eta_{\subspace}(\subspaceperp_m)|\cC_{I}|}
\reward(\subspace)
&=
D-k.
\end{align}
Now consider the quantum decoder which, upon receiving the state $\ket{\psi_{\bc}}$ for a codeword $\bc\in\cC$, applies the 
affine filtering POVM to disjoint SPC constraints and then runs Gaussian elimination to recover $\bc$. For a random $(k,D)$-regular LDPC code, dependencies among parity checks may reduce the decoding threshold below 
$\alpha_{(k,D)}^{\mathrm{Affine\text{-}Filtering}}$.
Moreover, it is not always possible to find disjoint SPC supports $\{I_j\}_{j\in [J]}$ such that $\bigsqcup_{j\in [J]} I_j=[N]$, where $N$ is the code length. Accordingly, the value 
$\alpha_{(k,D)}^{\mathrm{Affine\text{-}Filtering}}$ 
obtained from the LP gives an upper bound on the performance of the 
affine-filtering+GE decoder.


 Recent works on BPQM~\cite{brandsen2022belief, mandal2026belief} derive thresholds for $(k,D)$-regular LDPC codes using density-evolution analysis. A basic limitation of this approach is that it relies on the assumption that the neighborhood of a root node in the factor graph induced by a random $(k,D)$-regular LDPC code is tree-like up to sufficiently large depth. Moreover, it is not clear how to construct a practical BPQM decoder for LDPC codes with comparable complexity guarantees. Nevertheless, the threshold obtained from density evolution provides an upper bound on the performance of a BPQM-based quantum decoder.

\subsection{Greedy Affine Filtering Decoder for SPC  }
To construct an 
affine filtering measurement-based quantum decoder, we first find single parity check codes from the large code $\localcode$.
Consider a random parity check matrix $H$ of $(k,D)$-regular LDPC code $\cC$ and codeword $\bc\in\cC$ satisfying $H\bc=\mathbf{0}$. To avoid correlated outcomes when applying 
affine filtering measurement to SPCs, the SPCs must be pairwise disjoint. This can be done efficiently using a greedy search. Suppose the greedy search produces $J$ SPCs supported on index sets $\{I_{j}\}_{j\in [J]}$ with $J\leq \frac{N}{D}$. In this case $j$\textsuperscript{th} SPC corresponds to the code $\cC_{I_{j}}$. Let $I_{\text{single}}$ be the index set containing indices that do not belong to any of the index sets $\{I_{j}\}_{j\in [J]}$. In other words, $\forall k\in I_{\text{single}}$, we have $k\notin I_{j}$ for all $j \in [J]$. 
Thus the index sets satisfy $\sqcup_{j\in [J]}I_{j}\sqcup I_{\text{single}}=[N]$. Let $\{\Pi_{\subspace,\bs}^*\}_{\subspace\in \codesubspaces^{\cC_{I}},\bs\in \Image(H_{\subspace})}\cup\{\Pi_{e}^{*}\}$ be the POVM obtained from the LP solution \eqref{LP: final form} corresponding to degree $D$ SPC.  

\begin{algorithm}[!ht]
\caption{Affine Filtering Measurement on SPC based Quantum Decoding (Affine-Filtering+GE)\label{algo:affine usd decoding}}
\begin{algorithmic}[1]
      \Require Codeword state $\ket{\psi_{\bc}}$ with $\bc\in\cC$, POVM $\{\Pi_{\subspace,\bs}^*\}_{\subspace\in \codesubspaces^{\cC_{I}},\bs\in \Image(H_{\subspace})}\cup\{\Pi_{e}^{*}\}$, and single qudit measurement $\{\Pi_{u}^{\text{USD}}\}_{u\in \mathbb{F}_q}\cup\{\Pi_{e}^{\text{USD}}\}$
      \State Use greedy search to obtain disjoint SPCs supported on $\{I_{j}\}_{j\in [J]}$ from parity check matrix $H$ of code $\cC$\\
      Construct index set $I_{\text{single}}$ using indices $k$ such that $\forall k\in I_{\text{single}}$, $k\notin I_{j}$ $\forall j \in J$, and the disjoint union satisfies $\sqcup_{j\in J}I_{j}\sqcup I_{\text{single}}=[N]$\\
      Set $H_{\text{update}}=H$ and $\bb=\mathbf{0}$
      \For{each index set $I_{j}$ with $j\in [J]$}\\
\quad Apply POVM $\{\Pi_{\subspace,\bs}^*\}_{\subspace\in \codesubspaces^{\cC_{I}},\bs\in \Image(H_{\subspace})}\cup\{\Pi_{e}^{*}\}$ on state $\ket{\psi_{\bc_{I_j}}}$\\
\quad 
Obtain outcome $\subspace_{\bs}$ w.p. $p^*_{\subspace}$ for the unique syndrome $\bs\in\Image(H_{\subspace})$ such that $\bc_{I_j}\in\subspace_{\bs}$,\\
\quad or erasure w.p. $1-\sum_{\subspace\in \codesubspaces^{\cC_{I}}}p^*_{\subspace}$
\If{Outcome is an affine subspace $\subspace_{\bs}$}\\
\qquad Obtain updated linear system from $H_{\text{update}}=\begin{bmatrix}
    H_{\text{update}}\\
    (H_{\subspace})^{I_{j}}
\end{bmatrix}$ and $\bb=\begin{bmatrix}
    \bb\\
    \bs
\end{bmatrix}$
\EndIf\\

\EndFor\\

    \For{each index $k\in I_{\text{single}}$ }\\
\quad Apply POVM $\{\Pi_{u}^{\text{USD}}\}_{u\in \mathbb{F}_q}\cup\{\Pi_{e}^{\text{USD}}\}$ on single qudit state $\ket{\psi_{c_{k}}}$\\
\quad Obtain outcome $u\in \mathbb{F}_q$ w.p. $p_{u}^{\text{USD}}$ and erasure w.p. $1-\sum_{u}p_{u}^{\text{USD}}$
\If{Outcome is $u\in \mathbb{F}_q$}\\
\qquad Set $c_{k}=u$\\
\qquad Obtain updated linear system from $H_{\text{update}}=\begin{bmatrix}
    H_{\text{update}}\\
    e_{k}^{T}
\end{bmatrix}$ and $\bb=\begin{bmatrix}
    \bb\\
    c_{k}
\end{bmatrix}$
     \EndIf
    \EndFor\\
Use Gaussian elimination to obtain the codeword $\bc$ from linear system $H_{\text{update}}\bc=\bb$\\
\If{Gaussian elimination is successful}\\
\quad \Return Flag= Decoding is successful and codeword $\bc$
\Else\\
\Return Flag= Decoding Failure
\EndIf
    \end{algorithmic}
\end{algorithm}

We now explain how each step of Algorithm~\ref{algo:affine usd decoding} contributes linear information about the transmitted codeword and how this information is assembled into a global system of equations. The decoder begins with the parity-check equations of the LDPC code itself, and initializes
\begin{align}
H_{\mathrm{update}}=H,
\qquad
\bb=\mzero.
\end{align}
Throughout the decoding procedure, every additional conclusive measurement outcome contributes further linear constraints that are appended to this system.
Consider an arbitrary selected SPC support $I_j$, and let $\cC_{I_j}$ denote the corresponding SPC code. For this local code, Section~\ref{sec:affineUSD} provides the 
affine filtering POVM obtained from the LP in~\eqref{LP: final form} once the associated family of local codeword states $\{\ket{\psi_{\bc_{I_j}}}\}_{\bc_{I_j}\in\cC_{I_j}}$ is specified. In the present setting, these local states are induced by the i.i.d.\ PSC described above. The decoder may therefore apply the corresponding 
affine filtering POVM directly to the local state $\ket{\psi_{\bc_{I_j}}}$. For a fixed subspace $\subspace\in\codesubspaces^{\cC_{I_j}}$, define
\begin{align}
p_{\subspace}^{*}
\coloneqq
\sum_{\subspaceperp_m\in \dualcosetspace}
\beta_{\subspace}^{*}(\subspaceperp_m)
\frac{|\subspaceperp|^{2}}
{\eta_{\subspace}(\subspaceperp_m)|\localcode_{I}|}.
\end{align}
When the 
affine filtering measurement is applied to $\ket{\psi_{\bc_{I_j}}}$, the outcome is either an affine subspace $\subspace_{\bs}$, for the unique syndrome $\bs\in\Image(H_{\subspace})$ satisfying $\bc_{I_j}\in\subspace_{\bs}$ or an erasure. A conclusive outcome $\subspace_{\bs}$ determines a valid collection of linear equations on the coordinates indexed by $I_j$, namely
\begin{align}
H_{\subspace}\bc_{I_j}=\bs,
\end{align}
where $H_{\subspace}$ is any matrix satisfying $\ker(H_{\subspace})=\subspace$.
To incorporate this local information into the global decoding problem, we embed the matrix $H_{\subspace}$ into the full set of codeword coordinates. More precisely, let $(H_{\subspace})^{I_j}\in \mbF_q^{(\dim(C_{I_j})-\dim(\subspace))\times N}$ denote the matrix obtained by placing the columns of $H_{\subspace}$ in the coordinates indexed by $I_j$ and setting all remaining columns equal to $\mzero$. Then a conclusive 
affine filtering outcome yields the global update
\begin{align}
H_{\mathrm{update}}
=
\begin{bmatrix}
H_{\mathrm{update}}\\
(H_{\subspace})^{I_j}
\end{bmatrix},
\quad \bb
=
\begin{bmatrix}
\bb\\
\bs
\end{bmatrix}.
\end{align}
Repeating this procedure over all selected SPC supports $\{I_j\}_{j=0}^{J-1}$ accumulates the linear information obtained from the 
affine filtering measurements.
We next consider the remaining coordinates in $I_{\mathrm{single}}$, namely those indices that do not belong to any selected SPC support. For each $k\in I_{\mathrm{single}}$, we apply the optimal single-qudit USD measurement
\begin{align}
\{\Pi_{u}^{\mathrm{USD}}\}_{u\in\mbF_q}\cup\{\Pi_{e}^{\mathrm{USD}}\}
\end{align}
to the state $\ket{\psi_{c_k}}$. If the outcome is a symbol $u\in\mbF_q$, then the corresponding linear constraint is $c_k=u$,
which is appended to the global system through the update
\begin{align}
H_{\mathrm{update}}
=
\begin{bmatrix}
H_{\mathrm{update}}\\
e_k^T
\end{bmatrix},
\quad
\bb
=
\begin{bmatrix}
\bb\\
u
\end{bmatrix}.
\end{align}
If the single-qudit measurement returns an erasure, then no additional equation is added. After processing all SPC supports and all indices in $I_{\mathrm{single}}$, the decoder applies Gaussian elimination to the linear system
\begin{align}
H_{\mathrm{update}}\bc=\bb.
\end{align}
If this system determines the transmitted codeword uniquely, then the decoder outputs $\bc$; otherwise, it declares decoding failure. In this way, the decoder converts every conclusive 
affine filtering or single-qudit USD outcome into explicit linear constraints and then combines all such constraints through Gaussian elimination.
\section{Numerical Results }\label{sec:numerical results}
\subsection[Affine Filtering Decoder for Binary Code]%
{Affine Filtering Decoder Performance for LDPC Codes with $q=2$}
In Table \ref{tab:quantum decoding method comparison}, we provide a detailed comparison between several decoding algorithms  for $(k,D)$ LDPC codes from Gallager ensembles. 
Since for $q=2$, the problem is symmetric with respect to $\alpha=\frac{1}{2}$, we report the threshold for $\alpha\geq \frac{1}{2}$. Moreover, by symmetry, the choice of elements in the subset $S\subset [q]$ does not matter; we therefore use only the cardinality $|S|=r$ as a parameter for the channel $W$. For $q=2$, the only choice is $r=1$, so we will not specify $r$ to describe the channel. 

The BPQM threshold refers to the  threshold for $\alpha$ obtained using density evolution~\cite{PMBPQM_BSCQ} for the BPQM based quantum decoder. 
Similarly, the affine filtering threshold refers to the threshold
$\alpha_{(k,D)}^{\mathrm{Affine-Filtering}}$.
We observe that for $(k,D)$ pairs $(5,6),(6,7)$ and $(7,8)$, the
affine filtering upper bound is strictly larger than the corresponding
thresholds for other methods. Since there is no known classical method yet to
simulate the performance of BPQM based decoder for LDPC codes other than
density evolution, the threshold from affine filtering suggests it can become a
valid candidate for designing quantum decoders for LDPC codes, as it
outperforms qudit PGM+BP and qudit USD+GE performance for most pairs $(k,D)$.

  \begin{table}[h]
       \centering
     \begin{tabular}{|c|c|c|c|c|}
        \hline
     	(k,D)  & USD+GE & PGM+BP  & BPQM  & Affine Filtering+GE  \\
     	\hline
     	(3,4) & 0.875 &  0.87065 & \bf{0.9339} & 0.88268 \\
     	\hline
     	(3,5)& 0.8  &  0.81648 & \bf{0.8937}  & 0.84375 \\
     	\hline
     	(3,6)& 0.75 &  0.77562 & \bf{0.86159} & 0.79375  \\
     	\hline
     	(3,7) & 0.71428 & 0.74727 & \bf{0.83521} & 0.72926  \\
     	\hline
     	(3,8)& 0.6875  &  0.72351 & \bf{0.81541} & 0.70079 \\
     	\hline
     	(4,5)&  0.9 &  0.8597 & \bf{0.92607} & 0.90824  \\
     	\hline   
        (4,6) & 0.8333 &  0.82062 & \bf{0.89745} & 0.84376  \\
     	\hline
     	(4,7)&  0.78571&  0.78862 & \bf{0.87405} & 0.79946 \\
     	\hline
     	(4,8) & 0.75  &  0.76539 & \bf{0.85192} & 0.76225  \\
     	\hline
     	(5,6) & 0.91667 &  0.84305 & 0.91616 & \bf{0.91964}  \\
     	\hline
     	(5,7) & 0.85714& 0.81422 & \bf{0.8937} & 0.86445 \\
     	\hline
     	(5,8) & 0.8125  &  0.78752 & \bf{0.873} & 0.81973  \\
     	\hline
     	(6,7) & 0.92857 &  0.82759 & 0.90567 & \bf{0.92996}  \\
     	\hline
     	(6,8)&  0.875 &  0.80327 & \bf{0.88504} & 0.87725  \\
     	\hline
     	(7,8) &  0.9375&  0.813 & 0.8937 & \bf{0.9377}  \\
     	\hline
     \end{tabular}

   \caption{Comparison of Quantum Decoding Methods for $(k,D)$ regular LDPC codes}
       \label{tab:quantum decoding method comparison}
 \end{table}

While the theoretical bounds provide insights about the performance of decoding methods, to validate the performance we simulate 
affine filtering measurement along with GE based decoder for $(k,D)$ regular LDPC codes. Since, for LDPC codes, we do not have a method to simulate BPQM based decoder classically yet, we compare the performance of 
affine filtering+GE-based decoder with qubit USD+GE and qubit PGM+BP based decoders.

 In Figures~\ref{plot:3,4_F2_r1_plot} and~\ref{plot:4,5_F2_r1_plot} we compare the performance of 
 affine filtering+GE-decoder with qubit PGM+BP and qubit USD+GE decoders for the $(3,4)$ and $(4,5)$ regular LDPC code with standard Gallager construction respectively for code blocklength $N=1600$. We observe that decoder performance almost matches with the threshold and outperforms both qubit-wise decoders. In Figure~\ref{plot:6,7_F2_r1_plot}, we compare 
 affine filtering+GE-decoder performance with qubit wise USD+GE decoder for $(6,7)$ regular LDPC codes and different codeword lengths $N$. Since, for $(6,7)$ LDPC code, the BP performance is much worse compared to qubit USD+GE decoder we omit the curves for qubit PGM+BP decoder. More specifically, for $(6,7)$ LDPC code, the 
 affine filtering+GE decoder also achieves much higher threshold than the BPQM threshold suggesting it is the best performing decoder for this ensemble on an i.i.d PSC. 
 In Figure~\ref{6,7 threshold}, we provide a closer view of 
 affine filtering+GE decoder curves for different lengths, to show that the decoder performance achieves nearly the same threshold as the theoretical upper bound.  
 \begin{figure}[!htbp]
    \centering
    \includegraphics[width=0.6\linewidth]{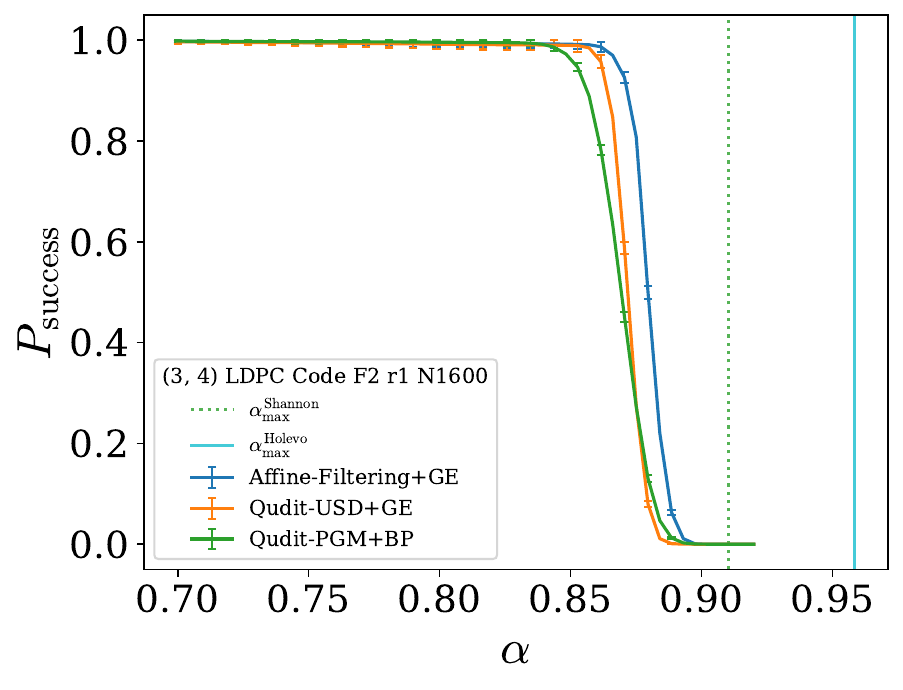}
    \caption{Performance Comparison for (3,4) Regular LDPC Codes on $\mathbb{F}_2$ and $r=1$ with $N=1600$}
    \label{plot:3,4_F2_r1_plot}
\end{figure}

\begin{figure}[!htbp]

    \centering
    \includegraphics[width=0.6\linewidth]{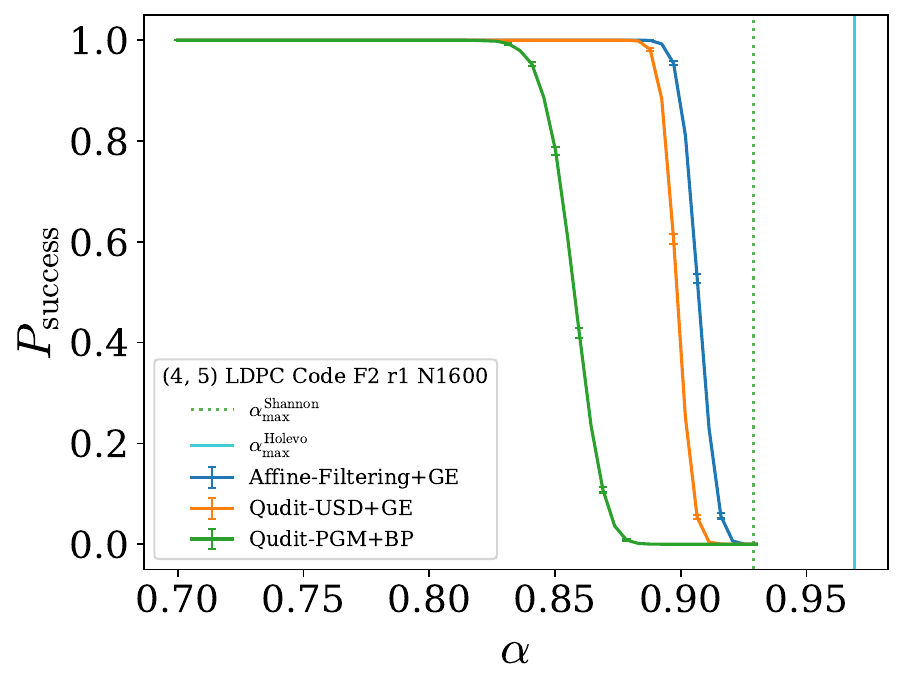}
    \caption{Performance Comparison for (4,5) Regular LDPC Codes on $\mathbb{F}_2$ and $r=1$ with $N=1600$}
    \label{plot:4,5_F2_r1_plot}
\end{figure}

\begin{figure}[!htbp]
    \centering
    \includegraphics[width=0.6\linewidth]{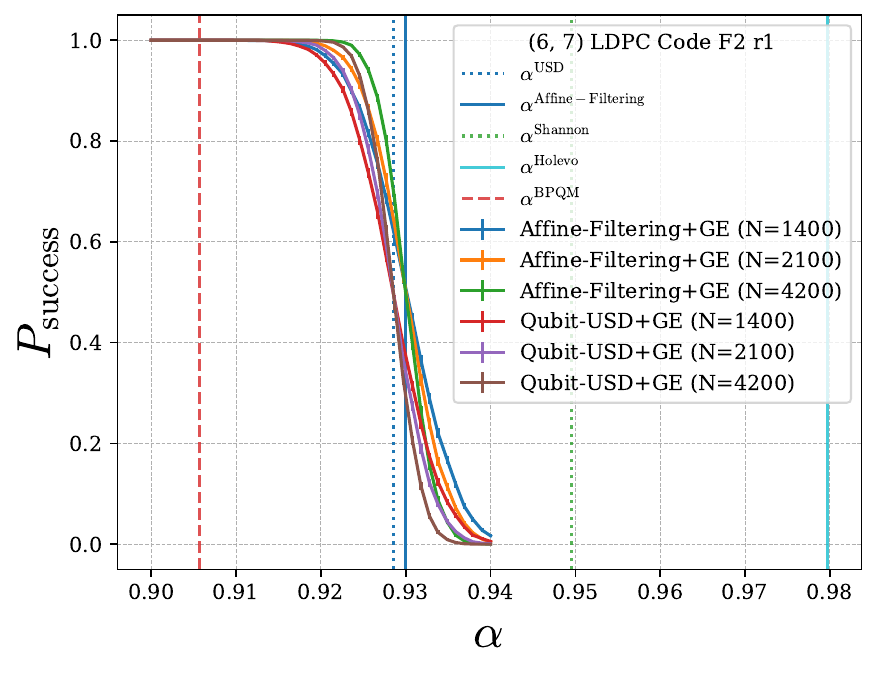}
    \caption{Comparison between affine-filtering+GE and qubit USD+GE decoding for (6,7) regular LDPC codes}
    \label{plot:6,7_F2_r1_plot}
\end{figure}

\begin{figure}[!htbp]
    \centering
    \includegraphics[width=0.6\linewidth]{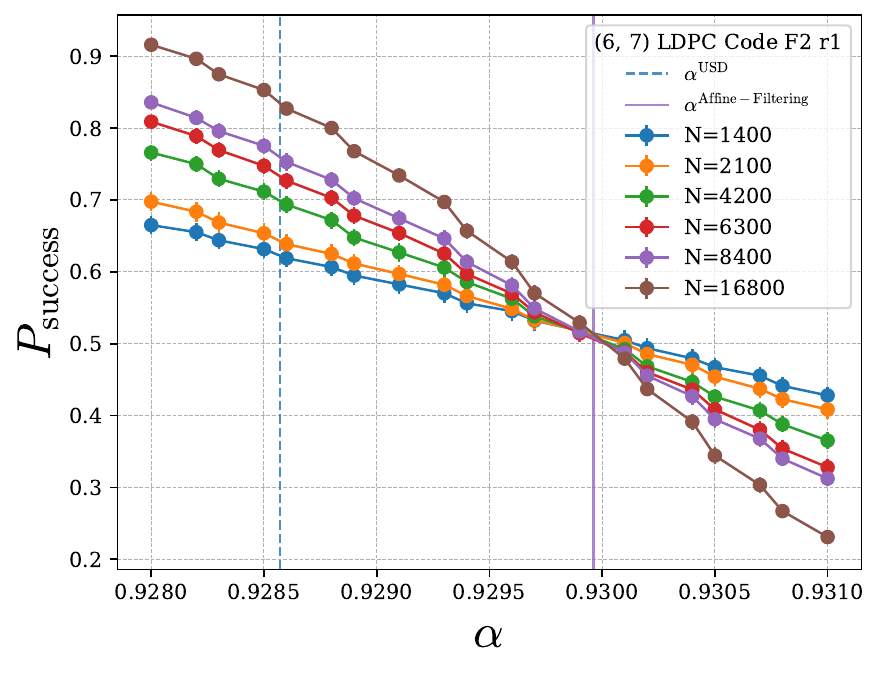}
    \caption{Threshold plot for affine-filtering+GE decoding for (6,7) regular LDPC codes}
    \label{6,7 threshold}
\end{figure}
\subsection[Affine Filtering Decoder]%
{Affine Filtering+GE Decoder Performance for LDPC Codes with $q>2$}
Similarly to the binary case, we simulate the 
affine filtering+GE decoder for $q>2$. In this case, we fix $r$ and compare the performance of 
affine filtering+GE decoder with qudit PGM+BP and qudit USD+GE decoders. In Figure~\ref{plot:3,4_F3_r2_plot}, we show the performance comparison for (3,4) LDPC code with $r=2$ and blocklength $N=1600$. We observe that 
affine filtering+GE decoder outperforms both decoders.

\begin{figure}[!htbp]
    \centering
    \includegraphics[width=0.6\linewidth]{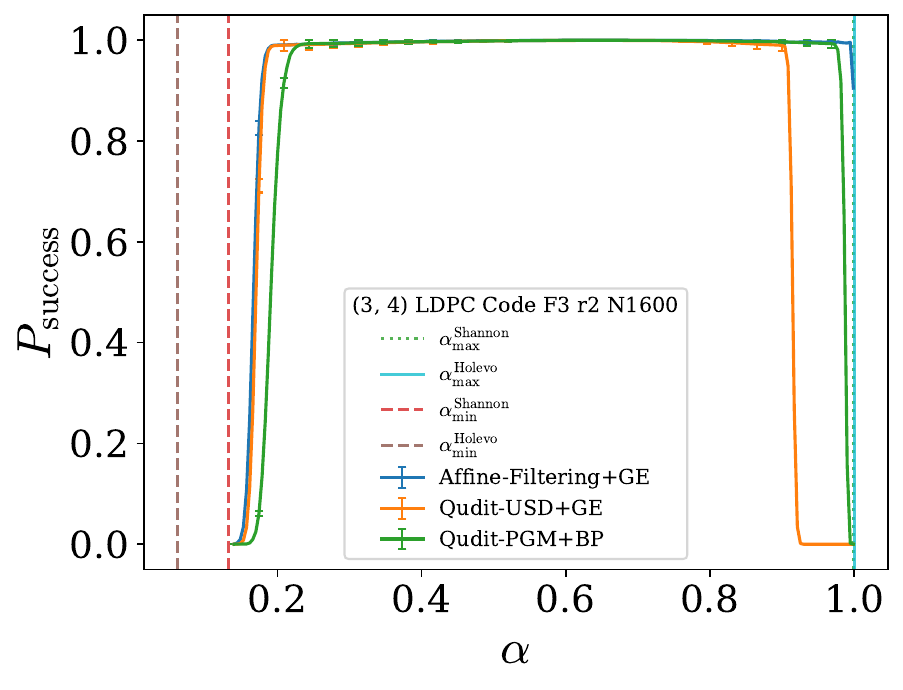}
    \caption{Performance Comparison for (3,4) Regular LDPC Codes on $\mathbb{F}_3$ and $r=2$ with $N=1600$}
    \label{plot:3,4_F3_r2_plot}
\end{figure}

\begin{figure}[!htbp]
    \centering
    \includegraphics[width=0.6\linewidth]{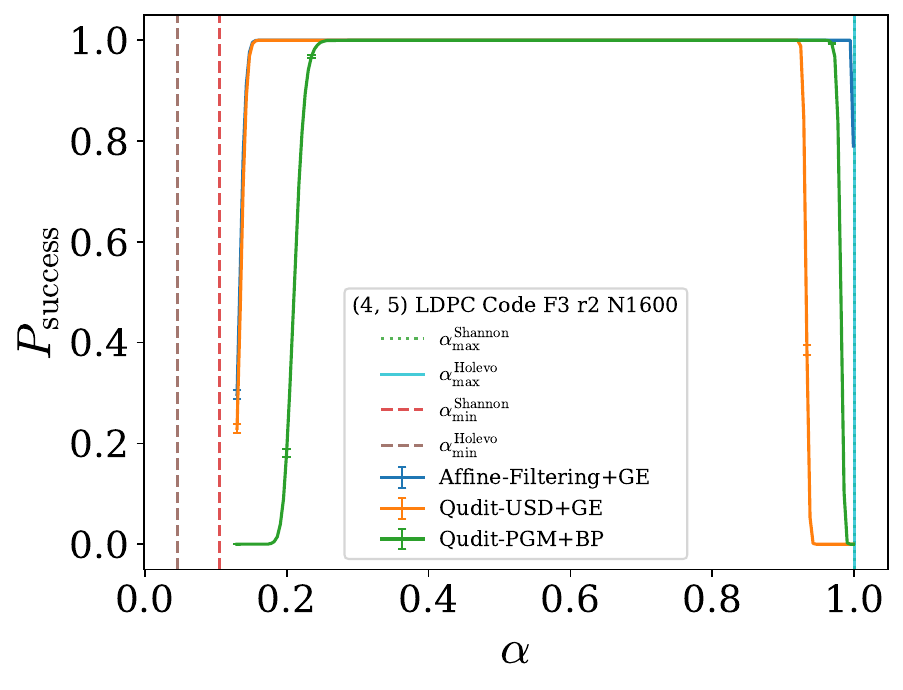}
    \caption{Performance Comparison for (4,5) Regular LDPC Codes on $\mathbb{F}_3$ and $r=2$ with $N=1600$}
    \label{plot:4,5_F3_r2_plot}
\end{figure}
\FloatBarrier
\section{Conclusion}

We studied affine filtering measurements, a structured class of unambiguous
measurements whose conclusive outcomes filter affine subspaces containing the transmitted codeword of a linear code. For symmetric codeword indexed states, we showed that the optimal affine filtering design
problem reduces from an SDP to an LP, enabling tractable computation of optimal
measurements for small local codes. Using these measurements, we proposed a
quantum decoding framework for pure state channels and demonstrated that
applying affine filtering to qudit states associated with SPC neighborhoods of
regular LDPC codes can improve decoding thresholds relative to qudit-wise USD
and qudit-wise PGM based decoders.

\section{Code and data availability}

The simulation code used to generate the numerical results in this paper is publicly available at \url{https://github.com/Aviemathelec1995/Affine_filtering_decoder}. The repository contains implementations of the LP computation for affine-filtering measurements over prime fields, the affine-filtering+GE decoder for LDPC codes with local single-parity-check constraints, the symbol-wise USD decoder, the PGM-based BP decoder, and scripts for reproducing the numerical comparisons reported in the paper. The code assumes that the alphabet size is prime. The LDPC simulations use regular parity-check matrices from Gallager ensembles, with the default affine-filtering experiments using binary-support single-parity-check constraints, as in the local measurement construction analyzed in this work.



\section{Author contributions}

Noah Shutty conceived the affine-filtering measurement formulation and the associated decoder. Avijit Mandal carried out the analysis of affine-filtering measurements and the SDP to LP reduction in Section~\ref{sec:affineUSD}. Henry D. Pfister contributed to the limiting LP formulation in Section~\ref{sec: rank deficient}. Avijit Mandal implemented the affine-filtering+GE decoder and produced the numerical results. Avijit Mandal wrote the first draft of the manuscript. Stephen P.~Jordan supervised the project. All authors reviewed, edited, and approved the final manuscript.

The authors used generative AI tools, including ChatGPT and Gemini, to brainstorm possible proof strategies and to assist with the presentation of the manuscript. The authors also used Codex to assist with code refactoring, debugging, documentation, and polishing of the public code repository and scripts used to generate the numerical results. All mathematical arguments, code, simulations, numerical outputs, and interpretations were checked and validated by the authors.

\bibliographystyle{abbrvnat}
\bibliography{ref}

@article{chailloux2024quantum,
  title={Quantum advantage from soft decoders},
  author={Chailloux, Andr{\'e} and Tillich, Jean-Pierre},
  journal={arXiv preprint arXiv:2411.12553},
  year={2024}
}

@article{chefles1998optimum,
  title={Optimum unambiguous discrimination between linearly independent symmetric states},
  author={Chefles, Anthony and Barnett, Stephen M},
  journal={Physics letters A},
  volume={250},
  number={4-6},
  pages={223--229},
  year={1998},
  publisher={Elsevier}
}

@article{chailloux2023quantum,
  title={The quantum decoding problem},
  author={Chailloux, Andr{\'e} and Tillich, Jean-Pierre},
  journal={arXiv preprint arXiv:2310.20651},
  year={2023}
}

@article{yamakawa2024verifiable,
  title={Verifiable quantum advantage without structure},
  author={Yamakawa, Takashi and Zhandry, Mark},
  journal={Journal of the ACM},
  volume={71},
  number={3},
  pages={1--50},
  year={2024},
  publisher={ACM New York, NY}
}

@article{buzet2025fine,
  title={Fine-Grained Unambiguous Measurements},
  author={Buzet, Quentin and Chailloux, Andr{\'e}},
  journal={arXiv preprint arXiv:2510.07298},
  year={2025}
}

@article{piveteau2025efficient,
  title={Efficient and optimal quantum state discrimination via quantum belief propagation},
  author={Piveteau, Christophe and Renes, Joseph M},
  journal={arXiv preprint arXiv:2509.19441},
  year={2025}
}

@article{mandal2024polar,
  title={Polar codes for cq channels: Decoding via belief-propagation with quantum messages},
  author={Mandal, Avijit and Brandsen, Sarah and Pfister, Henry D},
  journal={arXiv preprint arXiv:2401.07167},
  year={2024}
}

@inproceedings{brandsen2022belief,
  title={Belief propagation with quantum messages for symmetric classical-quantum channels},
  author={Brandsen, Sarah and Mandal, Avijit and Pfister, Henry D},
  booktitle={2022 IEEE Information Theory Workshop (ITW)},
  pages={494--499},
  year={2022},
  organization={IEEE}
}

@article{eldar2002quantum,
  title={On quantum detection and the square-root measurement},
  author={Eldar, Yonina C and Forney, G David},
  journal={IEEE Transactions on Information Theory},
  volume={47},
  number={3},
  pages={858--872},
  year={2002},
  publisher={IEEE}
}

@misc{PMBPQM_BSCQ,
  author       = {Mandal, Avijit and Pfister, Henry D},
  title        = {{PMBPQM\_BSCQ: Code repository for belief propagation with quantum messages
                  to find thresholds for regular LDPC codes over classical--quantum channels}},
  howpublished = {\url{https://github.com/Aviemathelec1995/PMBPQM_BSCQ}},
}

@book{luong2009fourier,
  title={Fourier analysis on finite Abelian groups},
  author={Luong, Bao},
  year={2009},
  publisher={Springer Science \& Business Media}
}

@Article{RU01,
  author = {Thomas J. Richardson and R{\"u}diger L. Urbanke},
  title = {The capacity of low-density parity-check codes under message-passing decoding},
  journal = {IEEE Transactions on Information Theory},
  volume = {47},
  number = {2},
  pages = {599--618},
  year = {2001}
}

@inproceedings{wilde2013towards,
  title={Towards Efficient Decoding of Classical-Quantum Polar Codes},
  author={Wilde, Mark M and Landon-Cardinal, Olivier and Hayden, Patrick},
  booktitle={8th Conference on the Theory of Quantum Computation, Communication and Cryptography},
  pages={157},
  year={2013}
}

@article{wilde2012polar,
  title={Polar codes for classical-quantum channels},
  author={Wilde, Mark M and Guha, Saikat},
  journal={IEEE Transactions on Information Theory},
  volume={59},
  number={2},
  pages={1175--1187},
  year={2012},
  publisher={IEEE}
}

@article{Renes-njp17,
  title={Belief propagation decoding of quantum channels by passing quantum messages},
  author={Renes, Joseph M},
  journal={New Journal of Physics},
  volume={19},
  number={7},
  pages={072001},
  year={2017},
  publisher={IOP Publishing},
  url = {http://arxiv.org/abs/1607.04833}
}

@article{holevo1998capacity,
  title={The capacity of the quantum channel with general signal states},
  author={Holevo, Alexander S},
  journal={IEEE Transactions on Information Theory},
  volume={44},
  number={1},
  pages={269--273},
  year={1998},
  publisher={IEEE}
}

@book{richardson2008modern,
  title={Modern coding theory},
  author={Richardson, Tom and Urbanke, Ruediger},
  year={2008},
  publisher={Cambridge university press}
}

@article{schumacher1997sending,
  title={Sending classical information via noisy quantum channels},
  author={Schumacher, Benjamin and Westmoreland, Michael D},
  journal={Physical Review A},
  volume={56},
  number={1},
  pages={131},
  year={1997},
  publisher={APS}
}

@article{krovi2015optimal,
  title={Optimal measurements for symmetric quantum states with applications to optical communication},
  author={Krovi, Hari and Guha, Saikat and Dutton, Zachary and da Silva, Marcus P},
  journal={Physical Review A},
  volume={92},
  number={6},
  pages={062333},
  year={2015},
  publisher={APS}
}

@article{da2013achieving,
  title={Achieving minimum-error discrimination of an arbitrary set of laser-light pulses},
  author={da Silva, Marcus P and Guha, Saikat and Dutton, Zachary},
  journal={Physical Review A},
  volume={87},
  number={5},
  pages={052320},
  year={2013},
  publisher={APS}
}

@article{gallager1962low,
  title={Low-density parity-check codes},
  author={Gallager, Robert},
  journal={IRE Transactions on information theory},
  volume={8},
  number={1},
  pages={21--28},
  year={1962},
  publisher={IEEE}
}

@article{hausladen1994pretty,
  title={A ‘pretty good’measurement for distinguishing quantum states},
  author={Hausladen, Paul and Wootters, William K},
  journal={Journal of Modern Optics},
  volume={41},
  number={12},
  pages={2385--2390},
  year={1994},
  publisher={Taylor \& Francis}
}

@article{holevo1978asymptotically,
  title={On asymptotically optimal hypotheses testing in quantum statistics},
  author={Holevo, Alexander Semenovich},
  journal={Teoriya Veroyatnostei i ee Primeneniya},
  volume={23},
  number={2},
  pages={429--432},
  year={1978},
  publisher={Russian Academy of Sciences, Steklov Mathematical Institute of Russian~…}
}

@article{eldar2003geometrically,
  title={Geometrically uniform frames},
  author={Eldar, Yonina C and Bolcskei, Helmut},
  journal={IEEE Transactions on Information Theory},
  volume={49},
  number={4},
  pages={993--1006},
  year={2003},
  publisher={IEEE}
}

@inproceedings{mandal2023belief,
  title={Belief-propagation with quantum messages for polar codes on classical-quantum channels},
  author={Mandal, Avijit and Brandsen, Sarah and Pfister, Henry D},
  booktitle={2023 IEEE International Symposium on Information Theory (ISIT)},
  pages={613--618},
  year={2023},
  organization={IEEE}
}

@article{jordan2025optimization,
  title={Optimization by decoded quantum interferometry},
  author={Jordan, Stephen P and Shutty, Noah and Wootters, Mary and Zalcman, Adam and Schmidhuber, Alexander and King, Robbie and Isakov, Sergei V and Khattar, Tanuj and Babbush, Ryan},
  journal={Nature},
  volume={646},
  number={8086},
  pages={831--836},
  year={2025},
  publisher={Nature Publishing Group UK London}
}

@article{mandal2026belief,
  title={Belief Propagation with Quantum Messages for Symmetric Q-ary Pure-State Channels},
  author={Mandal, Avijit and Pfister, Henry D},
  journal={arXiv preprint arXiv:2601.21330},
  year={2026}
}

@inproceedings{mandal2025reed,
  title={Reed--Muller Codes on CQ Channels via a New Correlation Bound for Quantum Observables},
  author={Mandal, Avijit and Pfister, Henry D},
  booktitle={2025 IEEE International Symposium on Information Theory (ISIT)},
  pages={1--6},
  year={2025},
  organization={IEEE}
}

@article{peres1988differentiate,
  title={How to differentiate between non-orthogonal states},
  author={Peres, Asher},
  journal={Physics Letters A},
  volume={128},
  number={1-2},
  pages={19},
  year={1988},
  publisher={Elsevier}
}

@article{ivanovic1987differentiate,
  title={How to differentiate between non-orthogonal states},
  author={Ivanovic, Igor D},
  journal={Physics Letters A},
  volume={123},
  number={6},
  pages={257--259},
  year={1987},
  publisher={Elsevier}
}

@article{dieks1988overlap,
  title={Overlap and distinguishability of quantum states},
  author={Dieks, Dennis},
  journal={Physics Letters A},
  volume={126},
  number={5-6},
  pages={303--306},
  year={1988},
  publisher={Elsevier}
}

@article{mandal2026qmp,
      title={Quantum Message Passing for Factor Graphs over Finite Abelian Groups}, 
      author={Avijit Mandal and Henry D. Pfister},
      journal={arXiv preprint arXiv:2604.12186},
      year={2026},
      url={https://arxiv.org/abs/2604.12186}
}

@article{shutty2026lqd,
      title={Optimization Using Locally-Quantum Decoders}, 
      author={Noah Shutty and Avijit Mandal and Seyoon Ragavan and Quentin Buzet and André Chailloux and Nicholas C. Rubin and Abid Khan and Sami Boulebnane and Ruslan Shaydulin and John Azariah and Stephen P. Jordan},
      year={2026},
      journal={arXiv preprint arXiv:2604.24633},
      url={https://arxiv.org/abs/2604.24633}, 
}

@inproceedings{chen2022quantum,
  title={Quantum algorithms for variants of average-case lattice problems via filtering},
  author={Chen, Yilei and Liu, Qipeng and Zhandry, Mark},
  booktitle={Annual international conference on the theory and applications of cryptographic techniques},
  pages={372--401},
  year={2022},
  organization={Springer}
}

@article{gisbert2019lipschitz,
  title={Lipschitz modulus of the optimal value in linear programming},
  author={Gisbert, Mar{\'\i}a Jes{\'u}s and C{\'a}novas, Mar{\'\i}a Josefa and Parra, Juan and Toledo, Fco Javier},
  journal={Journal of Optimization Theory and Applications},
  volume={182},
  number={1},
  pages={133--152},
  year={2019},
  publisher={Springer}
}
\appendix
\renewcommand{\thesection}{{\color{black}Appendix} \Alph{section}}
\section{Computing Expected Linear Equations using LP for Non-SPC Codes}

In Table~\ref{tab:LP results for Non SPC Codes}, we provide a comparison of the expected number of linear equations between qudit USD and  
affine filtering for several codes whose parity check matrix rank is more than 1, i.e. non-SPC codes. This suggests that we can apply 
affine filtering measurements on quantum states associated with nontrivial local codes of a large code and apply GE afterwards to achieve better decoding performance. This unlocks new possibilities to explore these measurements and come up with methods to design quantum decoders on PSCs.

\begingroup
\small
\setlength{\tabcolsep}{4pt}
\renewcommand{\arraystretch}{1.15}

\begin{longtable}{|>{\centering\arraybackslash}p{0.23\linewidth}|
                  >{\centering\arraybackslash}p{0.36\linewidth}|
                  >{\centering\arraybackslash}p{0.14\linewidth}|
                  >{\centering\arraybackslash}p{0.14\linewidth}|}

\hline
Parameter choice & Parity-check matrix $H$ for code $\localcode$ & Qudit-USD & Affine Filtering \\
\hline
\endfirsthead

\hline
Parameter choice & Parity-check matrix $H$ for code $\localcode$ & Qudit-USD & Affine Filtering \\
\hline
\endhead

\hline
\multicolumn{4}{r}{}
\endfoot

\hline
\endlastfoot

$q=2$, $\alpha=0.8$, $S=\{0\}$ &
$\begin{bmatrix}
1 & 1 & 1 & 0 \\
0 & 1 & 1 & 1
\end{bmatrix}$ &
$1.3376$ & $1.424$ \\
\hline

$q=2$, $\alpha=0.85$, $S=\{0\}$ &
$\begin{bmatrix}
1 & 1 & 1 & 0 \\
0 & 1 & 1 & 1
\end{bmatrix}$ &
$1.064$ & $1.167$ \\
\hline

$q=2$, $\alpha=0.9$, $S=\{0\}$ &
$\begin{bmatrix}
1 & 1 & 1 & 0 \\
0 & 1 & 1 & 1
\end{bmatrix}$ &
$0.7456$ & $0.848$ \\
\hline

$q=2$, $\alpha=0.8$, $S=\{0\}$ &
$\begin{bmatrix}
1 & 1 & 1 & 0 & 0 \\
0 & 1 & 1 & 1 & 0 \\
0 & 0 & 1 & 1 & 1
\end{bmatrix}$ &
$1.516$ & $1.568$ \\
\hline

$q=2$, $\alpha=0.85$, $S=\{0\}$ &
$\begin{bmatrix}
1 & 1 & 1 & 0 & 0 \\
0 & 1 & 1 & 1 & 0 \\
0 & 0 & 1 & 1 & 1
\end{bmatrix}$ &
$1.2419$ & $1.314$ \\
\hline

$q=2$, $\alpha=0.9$, $S=\{0\}$ &
$\begin{bmatrix}
1 & 1 & 1 & 0 & 0 \\
0 & 1 & 1 & 1 & 0 \\
0 & 0 & 1 & 1 & 1
\end{bmatrix}$ &
$0.894$ & $0.976$ \\
\hline

$q=2$, $\alpha=0.8$, $S=\{0\}$ &
$\begin{bmatrix}
1 & 1 & 1 & 1 & 0 \\
0 & 1 & 1 & 1 & 1
\end{bmatrix}$ &
$1.799$ & $1.92576$ \\
\hline

$q=2$, $\alpha=0.85$, $S=\{0\}$ &
$\begin{bmatrix}
1 & 1 & 1 & 1 & 0 \\
0 & 1 & 1 & 1 & 1
\end{bmatrix}$ &
$1.3962$ & $1.5020$ \\
\hline

$q=2$, $\alpha=0.9$, $S=\{0\}$ &
$\begin{bmatrix}
1 & 1 & 1 & 1 & 0 \\
0 & 1 & 1 & 1 & 1
\end{bmatrix}$ &
$0.957$ & $1.0236$ \\
\hline

$q=3$, $\alpha=0.8$, $S=\{0,1\}$ &
$\begin{bmatrix}
1 & 2 & 2 & 1 & 0 \\
0 & 1 & 2 & 2 & 1
\end{bmatrix}$ &
$2.55$ & $2.877$ \\
\hline

$q=3$, $\alpha=0.85$, $S=\{0,1\}$ &
$\begin{bmatrix}
1 & 2 & 2 & 1 & 0 \\
0 & 1 & 2 & 2 & 1
\end{bmatrix}$ &
$2.072$ & $2.7065$ \\
\hline

$q=3$, $\alpha=0.9$, $S=\{0,1\}$ &
$\begin{bmatrix}
1 & 2 & 2 & 1 & 0 \\
0 & 1 & 2 & 2 & 1
\end{bmatrix}$ &
$1.453$ & $2.4235$ \\
\hline

$q=3$, $\alpha=0.95$, $S=\{0,1\}$ &
$\begin{bmatrix}
1 & 2 & 2 & 1 & 0 \\
0 & 1 & 2 & 2 & 1
\end{bmatrix}$ &
$0.74525$ & $2$ \\
\hline

$q=5$, $\alpha=0.8$, $S=\{0,1,2\}$ &
$\begin{bmatrix}
1 & 2 & 3 & 4 & 0 \\
0 & 4 & 3 & 2 & 1
\end{bmatrix}$ &
$2.281$ & $2.8783$ \\
\hline

$q=5$, $\alpha=0.85$, $S=\{0,1,2\}$ &
$\begin{bmatrix}
1 & 2 & 3 & 4 & 0 \\
0 & 4 & 3 & 2 & 1
\end{bmatrix}$ &
$1.798$ & $2.749$ \\
\hline

$q=5$, $\alpha=0.9$, $S=\{0,1,2\}$ &
$\begin{bmatrix}
1 & 2 & 3 & 4 & 0 \\
0 & 4 & 3 & 2 & 1
\end{bmatrix}$ &
$1.233$ & $2.5449$ \\
\hline

$q=5$, $\alpha=0.95$, $S=\{0,1,2\}$ &
$\begin{bmatrix}
1 & 2 & 3 & 4 & 0 \\
0 & 4 & 3 & 2 & 1
\end{bmatrix}$ &
$0.6238$ & $2.3438$ \\
\hline

$q=5$, $\alpha=0.8$, $S=\{0,1,2,3\}$ &
$\begin{bmatrix}
1 & 2 & 3 & 0 & 0 \\
0 & 4 & 3 & 2 & 0 \\
0 & 0 & 1 & 1 & 1
\end{bmatrix}$ &
$2$ & $2$ \\
\hline

$q=5$, $\alpha=0.95$, $S=\{0,1,2,3\}$ &
$\begin{bmatrix}
1 & 2 & 3 & 0 & 0 \\
0 & 4 & 3 & 2 & 0 \\
0 & 0 & 1 & 1 & 1
\end{bmatrix}$ &
$1.1035$ & $1.9694$ \\
\hline
\caption{Optimal expected number of linear equations comparison}
       \label{tab:LP results for Non SPC Codes}
   
\end{longtable}

\endgroup
\end{document}